\documentclass[reqno]{amsart}
\usepackage{amssymb,latexsym,upref,enumerate,fouridx}
\usepackage{mathrsfs,color}
\usepackage[scr=boondoxupr]{mathalpha} 
\usepackage{centernot}
\usepackage[colorlinks,linkcolor=blue,citecolor=blue]{hyperref}
\usepackage{cite} 

\allowdisplaybreaks

\numberwithin{equation}{section}
\numberwithin{footnote}{section}

\theoremstyle{plain}
\newtheorem{thm}{Theorem}[section]

\newtheorem{prop}[thm]{Proposition}

\theoremstyle{definition}
\newtheorem{Def}[thm]{Definition}

\theoremstyle{remark}

\newtheorem{ex}[thm]{Example}

\newcommand{\ep}{\epsilon}

\newcommand{\D}{D}

\input Tdef.def

\renewcommand{\emph}[1]{\textit{#1}}

\title{The extremely-tilted fluid regime near asymptotically Kasner big bang singularities}
\author[F. Beyer]{Florian Beyer}
\address{Dept of Mathematics and Statistics\\
730 Cumberland St\\
University of Otago, Dunedin 9016\\ New Zealand}
\email{florian.beyer@otago.ac.nz }

\begin{document}

\begin{abstract}
    In this paper, we solve the relativistic Euler equations with a linear barotropic equation of state on a large class of background spacetimes with Kasner big bang asymptotics. Building on previous work in the asymptotically non-tilted regime \cite{BeyerLeLoch:2017,BeyerOliynyk:2020, beyerStabilityFLRWSolutions2023}, which applies when the speed of sound of the fluid is large in comparison to the Kasner exponents, we now consider the asymptotically extremely-tilted regime when the speed of sound is small. We solve the Cauchy problem for the Euler equations towards the big bang singularity and prove, without any symmetry assumptions or smallness conditions on the Cauchy data, that the solutions exist globally in time provided the mean curvature of the initial hypersurface is sufficiently large. Finally, we prove that the solutions exhibit the asymptotics expected from standard heuristic arguments in this regime; in particular, fluid particles are driven towards the speed of light in the direction of the largest Kasner exponent as the big bang singularity is approached.
\end{abstract}

\maketitle


\section{Introduction}
\label{sec:introduction}

\subsection{Background}
Cosmological Einstein--matter solutions have been studied for decades, and many outstanding questions remain. In this paper, the \emph{cosmological setting} is that of globally hyperbolic spacetimes with compact Cauchy surfaces \cite{bartnik1988}, and our focus is on the time asymptotics of fluid matter\footnote{Related questions have been investigated for other matter types, like electromagnetic fields and Vlasov matter; an incomplete list of relevant works is \cite{an2025a,fajman2025a,liu2022}.}. In particular, we are interested in the backward-time asymptotics near ``the beginning'', i.e.\ the \emph{big bang singularity} in the finite past\footnote{Without loss of generality we always choose our time orientation so that the contracting time direction corresponds to the past.}. As a basic point of comparison, one of the simplest model classes is the spatially homogeneous and isotropic \emph{Friedmann--Lema\^itre--Robertson--Walker} (FLRW) family: the big bang corresponds to the cosmic scale factor approaching zero, and in the spatially flat case elementary arguments show that such a singularity is unavoidable provided the matter satisfies the strong energy condition.
More generally, big bang singularity formation is expected to be stable for broad classes of solutions of the Einstein--matter equations beyond FLRW, in agreement with Hawking's singularity theorem \cite{hawkingLargeScaleStructure1973} (see also \cite[Chapter~14, Theorem~55A]{oneill1983a}). Influential works \cite{belinskii1970,lifshitz1963} provide heuristic insight into the dynamics near a ``general singularity''; their conclusions are often referred to as the \emph{BKL conjecture}, which suggests that cosmological big bang singularities are typically spacelike, local, and oscillatory. In particular, most kinds of matter are expected to become asymptotically ``negligible'' in the sense that the ratio of the matter energy density to the square of the mean curvature of the cosmological time function approaches zero. This leads to one of the central mottos of the field: \emph{matter does not matter}.

There is evidence for the BKL conjecture, for example \cite{BeguinDutilleul:2023,beguin:2010,liebscher_et_al:2011,ringstrom2001a} for rigorous work in the spatially homogeneous setting and \cite{anderssonAsymptoticSilenceGeneric2005, curtis2005, garfinkle2002,garfinkle2002a,garfinkle2004,garfinkle2007,berger2001,marshall2025a} for numerical work beyond spatial homogeneity. At the same time, more recent work on spikes \cite{berger1993,coley2015, lim2008, lim2009, rendall2001} and weak null singularities \cite{dafermos2017,luk2017} indicates that the BKL picture is incomplete. Nevertheless, it remains a useful heuristic guide for many investigations in mathematical cosmology, and we adopt this perspective here.

One exception to the \emph{matter does not matter} hypothesis is  a minimally coupled scalar field (or, more generally, a \emph{stiff} fluid). In fact, it is common to assume the presence of such matter (in addition to other matter fields) given significant evidence that the oscillatory dynamics is generically suppressed and that the system enters a \emph{convergent} so-called \emph{asymptotically velocity term dominated} (AVTD) regime \cite{eardley1972,isenberg1990} near the big bang. In recent years, remarkable progress has been made in rigorously establishing the past stability of FLRW solutions to the Einstein--scalar field equations and their AVTD big bang singularities under generic perturbations without symmetries. The first such FLRW big bang stability result was obtained in the seminal articles \cite{rodnianski2014, rodnianski2018}. These results were subsequently extended in \cite{fournodavlos2020b}, which establishes AVTD big bang stability for nonlinear perturbations of the Kasner family (see Example~\ref{ex:KSF}). A very detailed analysis of AVTD big bang asymptotics for a large class of Einstein--scalar field models was later established in the series of works \cite{ringstrom2021,ringstrom2021a,ringström2022,ringström2022a,groeniger2023,franco-grisales2026,franco-grisales2024,franco-grisales2026} with a particular emphasis on  \emph{reference-independence} -- a notion that is also important for this paper here and is discussed in more detail below.  
A further important development is the introduction of spatial localisation techniques and their applications to the big bang stability problem in \cite{BeyerOliynyk:2021,BeyerOliynykZheng:2025,zheng2026,franco-grisales2026,athanasiou2024}.
For completeness, we also mention that even in the absence of a scalar field, AVTD behaviour may still occur, especially for classes of vacuum spacetimes with symmetries \cite{ames2021, ames2021a,chruscielStrongCosmicCensorship1990,isenberg1990,fournodavlos2020b,ringstrom2009a, choquet-bruhat2004,ames2019,ames2013a,andersson2001,BeyerLeLoch:2017,choquet-bruhat2006,klinger2015,clausen2007,damour2002,fournodavlos2020,heinzle2012,isenberg1999,isenberg2002,kichenassamy1998,stahl2002,athanasiou2024,franco-grisales2024, dong2026}.

\subsection{Informal discussion of results in this paper}
Generally motivated by the BKL heuristics, the AVTD property and the \emph{matter does not matter} motto, we solve the Euler equations near the big bang singularity in a large class of \emph{prescribed} spacetimes that satisfy the AVTD property in this paper. To this end, we define the class of \emph{spacetimes with Kasner big bang asymptotics}  in Definition~\ref{def:asympKasner}. Roughly speaking, a spacetime with Kasner big bang asymptotics according to Definition~\ref{def:asympKasner} approaches a prescribed\footnote{The spatially flat FLRW spacetimes are particular members of this family.} \emph{Kasner spacetime} (cf.\ Example~\ref{ex:KSF}) at each spatial point with prescribed convergence rates in agreement with the BKL and AVTD heuristics. Such a spacetime is itself not required to satisfy the field equations, so this allows us to perform our analysis in a wide range of different scenarios.
Regarding the fluids, we choose a linear barotropic equation of state in which the fluid pressure is directly proportional to the fluid density. The proportionality factor is a free parameter of our models and equals the square of the speed of sound of the fluid $c_s^2$. Given this setup, a very interesting criticality regarding the parameter $c_s^2$ has been observed  \cite{sandin2010,BeyerLeLoch:2017,beyerStabilityFLRWSolutions2023,BeyerOliynyk:2020} which we discuss heuristically in Section~\ref{sec:convvariables.N}.
While the fluid density always blows up at the big bang singularity, the fluid tilt vector, representing the peculiar motion of galaxies relative to the cosmic flow in standard cosmology, either decays to zero if the speed of sound is large (in a particular sense), or approaches a unit vector if the speed of sound is small. The first regime is called \emph{asymptotically non-tilted} and represents a fluid whose tilt vector is uniformly small near the big bang. The latter regime is often called \emph{asymptotically extremely-tilted} and is particularly intriguing because it means that the fluid particles are driven towards the speed of light by gravity near the big bang singularity.  It is also noteworthy that it is the radiation fluid case given by $c_s^2=1/N$ in $N$ spatial dimensions that is \emph{critical} in the special case of FLRW background spacetimes. 
While the asymptotically non-tilted regime at the big bang is quite well understood rigorously \cite{BeyerLeLoch:2017,beyerStabilityFLRWSolutions2023,BeyerOliynyk:2020}, the asymptotically extremely-tilted case is open and is technically more challenging. Very little is known about the exact critical case beyond \cite{BeyerLeLoch:2017}.

There are by now more rigorous results regarding the behaviour of fluids  in the \emph{expanding time direction} of cosmological models \cite{oliynyk2024, beyer2023a, fournodavlos2024a, marshall2023, fournodavlos2025, oliynyk2021,oliynyk2016, hadžić2015, lubbe2013a, rodnianski2013, speck2012, speck2013} where the critical dependence on the speed of sound is reversed. 
Consistently with the heuristics, the asymptotically non-tilted regime generally occurs when the speed of sound is sufficiently large, while  the asymptotically extremely-tilted regime occurs generally for small sound speeds. 
Intriguingly, the analyses of \cite{fournodavlos2024a, fournodavlos2025}  suggest the existence of further effects that are not predicted by the heuristics and which manifest themselves in additional restrictions for the speed of sound parameter. The precise origin of these effects and
whether these are merely technical rather than physical is an open question.

In this paper we rigorously study the outstanding problem of asymptotically extremely-tilted fluids near the big bang. 
A first main question of interest is under what conditions the fluid solutions are free of shocks and exist, within a prescribed regularity class, for all times towards the big bang singularity in this regime. If this is the case, the next question is whether we can rigorously confirm the heuristic predictions in detail. A particular concern is the effect of anisotropies in the background geometries near the big bang -- an issue that is not relevant for the dynamics in the expanding direction where anisotropies generally die out. 
Our final main interest is whether our rigorous analysis uncovers additional non-heuristic effects near the big bang in analogy to the effects identified in \cite{fournodavlos2024a, fournodavlos2025} in the expanding case mentioned above. In this paper, we study the fluids on a large class of \emph{prescribed} spacetimes (in contrast to the works in \cite{fournodavlos2024a, fournodavlos2025}), and this provides the opportunity to attribute such effects to particular properties of the geometry near the big bang.  

Both of our two main results of this paper, Theorems~\ref{thm:Euler1} and \ref{thm:Euler2}, can be summarised informally as follows; the particular details which distinguish these two results are discussed later. In preparation for this informal summary we list only those  quantities from Definition~\ref{def:asympKasner} that we require here: the largest Kasner exponent $P$ (which may be a smooth function of space), as well as the projection map $\hat h$ onto the corresponding eigenspace of the rescaled Weingarten map and its orthogonal complement $\check h$ (both assumed to be smooth as well).
\begin{thm}[Informal version of Theorems~\ref{thm:Euler1} and \ref{thm:Euler2}]
    Pick a background spacetime with Kasner big bang asymptotics in the sense of Definition~\ref{def:asympKasner} (which satisfies additional conditions below). There exists a constant $c_*^2\in [0,P)$ so that, given any speed of sound parameter $c_s^2$ in the range 
    \begin{equation}
        \label{eq:informalnonheuristic}
        c_*^2<c_s^2<P<1,
    \end{equation} 
    all solutions of the Euler equations launched from spacelike hypersurfaces whose mean curvature $H$ is sufficiently large
    exist globally towards the big bang singularity and are asymptotically extremely-tilted. Given the time function $t=1/H$, the densities associated with the fluid solutions are $\rho=O\bigl(t^{-(1-P)(1+c_s^2)/(1-c_s^2)}\bigr)$, the Lorentz factors $\Gamma=O\bigl(t^{-(P-c_s^2)/(1-c_s^2)}\bigr)$ and the tilt vectors $\nu$ approach unit vectors with $|\hat h \nu|\rightarrow 1$ and $\check h\nu\rightarrow 0$ in the limit $t\searrow 0$ at the big bang. 
\end{thm}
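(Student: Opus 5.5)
The plan is to recast the Euler equations on the prescribed background as a symmetric hyperbolic Fuchsian system in the time $t=1/H$, and then invoke a global existence theory for such systems, in the spirit of the non-tilted works \cite{BeyerOliynyk:2020,beyerStabilityFLRWSolutions2023}. The difference here is that the unknowns must be renormalised around the extremely-tilted limit rather than around the rest frame.

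\textbf{Step 1: the heuristic ODE.} First I would write the fluid four-velocity relative to the orthonormal frame adapted to the rescaled Weingarten map. This gives a Lorentz factor $\Gamma$ and a tilt $\nu$. I would then derive the spatially homogeneous "velocity term dominated" system obtained by freezing the background at its Kasner limit. In this ODE the tilt component along the $P$-eigenspace satisfies, to leading order, $t\partial_t \log\Gamma \approx -(P-c_s^2)/(1-c_s^2)$. The orthogonal components $\check h\nu$ decay at rates governed by $P-p_i$ together with correction terms depending on $c_s^2$. Whenever $c_s^2<P$, this identifies $\rho\sim t^{-(1-P)(1+c_s^2)/(1-c_s^2)}$ and $\Gamma\sim t^{-(P-c_s^2)/(1-c_s^2)}$ as attracting.

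\textbf{Step 2: renormalised variables.} Next I would introduce variables that remain bounded. For the density I would use $w=t^{(1-P)(1+c_s^2)/(1-c_s^2)}\rho$, or equivalently its logarithm. For the tilt I would use $z=1/\Gamma$ (or $\sqrt{1-|\nu|^2}$) rescaled by $t^{-(P-c_s^2)/(1-c_s^2)}$, together with $\check h\nu$ rescaled by a small positive power. Because the tilt direction is not fixed a priori, I would parametrise $\nu=\hat\nu+\check\nu$ with $\hat\nu=\hat h\nu$ and treat $|\hat\nu|$ through the constraint $|\nu|<1$. The Euler equations then become
\[
B^0 \partial_t U + \tfrac{1}{t}\, t^{\mu} B^i \partial_i U = \tfrac{1}{t}\mathcal{B}\,\mathbb{P}U + \text{(integrable remainders)} .
\]
The spatial-derivative term carries a factor $t^{\mu}$ with $\mu>0$. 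This factor comes from the Kasner decay of the frame together with the fact that the fluid moves at almost light speed along the $P$ direction; the latter is exactly what makes the spatial principal part degenerate.

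\textbf{Step 3: Fuchsian theory and the main obstacle.} I would then check the hypotheses of the Fuchsian global existence theorem. These require symmetrisability, positivity of $\mathcal{B}$ on the range of $\mathbb{P}$, and a smallness of the spatial term that follows from starting at large $H$, i.e.\ small $t_0$. Under these hypotheses the theorem yields existence on $(0,t_0]$ with uniform bounds and limits as $t\searrow0$. The asserted asymptotics $|\hat h\nu|\to1$, $\check h\nu\to0$ and the rates for $\rho$ and $\Gamma$ would then be read off from the decay of the renormalised variables.

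The main obstacle is symmetrisation uniform in the degenerate limit $\Gamma\to\infty$. The natural symmetriser of the Euler equations, expressed in extremely-tilted variables, has eigenvalues scaling with powers of $\Gamma$. These powers must be absorbed by the $t$-weights without destroying the positivity of $\mathcal{B}$. I expect that balancing them forces a lower bound $c_s^2>c_*^2$ depending on the gaps between $P$ and the other Kasner exponents and on the decay rates of the background, which is the origin of the non-heuristic condition \eqref{eq:informalnonheuristic}. My first attempt would be a weighted energy for $(w,z,\check\nu)$ with weights $t^{-2\lambda}$, choosing $\lambda$ to close the estimates and recording the resulting constraint on $c_s^2$ as the definition of $c_*^2$. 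Large data without smallness is then handled by the fact that the data sit in the basin of the Fuchsian attractor once $t_0$ is small.
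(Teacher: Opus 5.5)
Your Step~3 correctly names the obstacle but does not overcome it. The mechanism you propose, a single-tier weighted energy for $(w,z,\check\nu)$ with one weight $t^{-2\lambda}$, cannot close in general.

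Put $L=(P-c_s^2)/(1-c_s^2)$. In the variables $(u,u_i)$ of Section~\ref{sec:convvariables.N}, the tilt block of the symmetriser \eqref{eq:variables1.b0.N} is $t^{-2L}P^{jk}$. This is $O(1)$ along $u_i$ but of size $t^{-2L}$ in every direction orthogonal to $u_i$. The Fuchsian theorem needs a uniformly bounded and positive $B^0$, so you are forced to put the weight $t^{-L}$, not a small power, on every tilt component transverse to $\nu$. This is the paper's $z_i=t^{-L}u_i/|u|_h$ with diagonal symmetriser \eqref{eq:Zsymmetriser}.

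Exactly this weight produces the singular coefficient $\check L_j{}^k-Lh_j{}^k$ in \eqref{eq:Hsc}. It has the wrong sign on the whole $P$-eigenspace, and on the $\check h$-directions unless $P-\check P>L$.

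When the $P$-eigenspace is one-dimensional, you can eliminate $|\hat\nu|$ through the constraint as you suggest, weight $\check\nu$ by exactly $t^{-L}$, and impose this gap. That is the paper's proof of Theorem~\ref{thm:Euler2}, with $\check Z=(u,|u|_h,t^{-L}\check h_i{}^ju_j/|u|_h)$, and it matches your guess about where $c_*^2$ comes from.

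When the eigenspace has dimension at least two, your scheme fails. An example is the asymptotically isotropic case $\check h=0$, $P=1/N$, where Theorem~\ref{thm:Euler1} still gives $c_*^2<P$. There the angular components of $\hat\nu$ need the weight $t^{-L}$ yet do not decay, so no choice of $\lambda$ and no $c_s^2$ works.

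The missing idea is the two-tier system \eqref{eq:Wevol}:
\begin{itemize}
\item The variables $U=(u,\hat u_i,\check u_i)$ have the good signs but no uniform symmetriser. They are kept as undifferentiated lower-order unknowns whose equations contain no principal spatial derivatives, since every $\Dc U$ is rewritten via \eqref{eq:DUZbold}. Hence they need no symmetriser.
\item The top-order unknowns are $\mathbf Z_{\bar l}=t^{L+\epsilon}\Dc_{\bar l}Z$. They inherit the uniform diagonal symmetriser of $Z$, and the extra factor $t^{L+\epsilon}$ turns the singular coefficients into $L+\epsilon$ and $\epsilon h_j{}^k+\check L_j{}^k$.
\end{itemize}
The price has two sources. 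One is the factor $t^{-1+\bar p-L-\epsilon}$ in front of $\mathbf Z_{\bar l}$ in the $U$-equations. The other is the frame-rotation ($\Omega_i{}^j$) and connection terms, which are amplified by $t^{-L}$. Together these give $L<\min\{1,\bar p,q,\bar q\}$, i.e.\ the $c_*^2$ of Theorem~\ref{thm:Euler1}, and cost one derivative.

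Second, ``the data sit in the basin of the Fuchsian attractor once $t_0$ is small'' is not a mechanism. The Fuchsian global existence theorem only controls solutions that stay close to a fixed leading-order profile, and arbitrary data are not close to any single reference solution.

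The paper obtains reference independence by subtracting the explicit solution $\mathring W$ of the truncated equations, \eqref{eq:Wcircformula}, with the same data. The remainder then starts from zero and is forced only by the truncation error, whose weighted integral over $(0,T_0]$ tends to zero with $T_0$.

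One must then check that all constants in the Fuchsian theorem are independent of $T_0$. This fails naively, because $\partial_t\mathring W$ is not bounded uniformly in $T_0$. It is rescued only because $A^0$ depends on $u$ and $z$ alone, whose truncated solutions are constant in time; this is the reason for the auxiliary $z$-component in $W$.

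Finally, the factor $t^{-1+\bar p}$ on spatial derivatives comes from the frame alone. Near-luminal motion does not degenerate the spatial principal part; in the $\check Z$-system it tends to unit-speed transport along $\hat\kappa$. What degenerates is the time symmetriser.
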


Indeed, the tilt vector $\nu$ therefore approaches a unit vector in agreement with the heuristics of the asymptotically extremely-tilted regime. The Lorentz factor $\Gamma$, which describes the relative speed between fluid particles and standard cosmic observers (see Section~\ref{sec:Frauendiener.N} for detailed discussions), approaches infinity. Moreover, our theorems describe the effect of anisotropies: the tilt vector converges in the eigenspace of the largest Kasner exponent $P$ and its orthogonal complement decays. 
The theorems therefore reveal a remarkable coupling between the fluid tilt and the eigenspaces of the Kasner exponents. Consequently, if the eigenspace of $P$ is one-dimensional at each point in $\Sigma$, then the asymptotic direction of the tilt vector is uniquely determined at each point near the big bang. If the eigenspace of $P$ has dimension larger than one at each point -- for instance in the asymptotically isotropic case where all Kasner exponents coincide -- then the theorems yield less precise information about the asymptotic direction of the tilt. The way fluids couple to anisotropies in this regime near the singularity was observed previously, see for example \cite{sandin2010}; similar phenomena are also exhibited by Vlasov matter near the big bang \cite{fajman2025a}.

A particular consequence is that $\rho/H^2$ approaches zero in the limit $t=0$ for our fluid solutions. Hence, fluids in the \emph{asymptotically extremely-tilted} regime \eqref{eq:informalnonheuristic} are consistent with the \emph{matter does not matter} motto.

Another distinctive feature of our results is that they impose no size restrictions on the initial data, in particular no restriction on the size of their spatial derivatives. The fluid data is especially not restricted to a near-spatially homogeneous setting; in fact, there is no requirement that the fluid data is close to the data of any reference fluid. This motivates our terminology of \emph{reference-independent} fluid solutions. To control error terms that arise in our approach we must adjust the initial time $T_0$, i.e., the time when the fluid solutions are launched towards the past, which leads to the requirement that $T_0$ be small -- or, equivalently, that the mean curvature of the initial hypersurface is large.

One of the particular concerns of this paper is the role of the constant $c_*^2$ in \eqref{eq:informalnonheuristic}. Purely heuristically, this constant should always simply be zero. So any non-zero requirement can be interpreted as a non-heuristic effect. In this paper we think of $c_*^2$ as the big bang version of the analogous upper bounds for the speed of sound in the expanding case identified in \cite{fournodavlos2024a,fournodavlos2025} mentioned above.
The estimates we obtain for $c_*^2$ in Theorems~\ref{thm:Euler1} and \ref{thm:Euler2} depend on the precise assumptions we impose on the background spacetimes and  differ significantly between Theorem~\ref{thm:Euler1} and \ref{thm:Euler2}. Roughly speaking, Theorem~\ref{thm:Euler1} assumes less about the pointwise Kasner limit of the background spacetime than Theorem~\ref{thm:Euler2}, but more about the decay rates of the background spacetime towards the Kasner limit.
Interestingly, Theorem~\ref{thm:Euler1} yields worse bounds for $c_*^2$ than Theorem~\ref{thm:Euler2} where we can attribute the non-heuristic constant $c_*^2$ to a single term in the Euler equations. 
Given the assumption in Theorem~\ref{thm:Euler2} that the eigenspace of the largest eigenvalue $P$ is one-dimensional, the tilt vector $\nu$ converges to the unique unit vector in this one-dimensional space (unique up to orientation). 
We explain in detail in Section~\ref{sec:thm2} that the strong result in Theorem~\ref{thm:Euler2} is then in particular a consequence of an additional \emph{large eigenvalue separation condition} (hence the name, ``strong anisotropic case'') which then guarantees rapid decay of the components of $\nu$ in the orthogonal complement.

So what is the technical reason why we can in general \emph{not} set\footnote{Even in cases where our theorems would allow the choice $c_*^2=0$, for example, the exact Kasner family defined in Example~\ref{ex:KSF}, our approach does not cover the dust case $c_s^2=0$: dust requires a fundamentally different choice of variables and is generally incompatible with our framework. Details are given in Section~\ref{sec:Frauendiener.N}.} $c_*^2=0$ in \eqref{eq:informalnonheuristic}?
This is mainly due to the rather involved construction of positive definite PDE energies which do not degenerate near the big bang in the asymptotically extremely-tilted regime (a problem which is less severe in the asymptotically non-tilted regime). Indeed, it had been observed before \cite{BeyerLeLoch:2017} that the most direct candidates for such energies degenerate near $t=0$ when the tilt becomes extreme. Our two Theorems~\ref{thm:Euler1} and \ref{thm:Euler2} construct non-degenerate PDE energies in new distinct ways, each of which is tailored to the particular scenario of interest. 

\subsection{Organisation of the paper}
Our paper is organised as follows. In Section~\ref{presec:background}, we introduce our notation and conventions, especially index conventions, and start off  in Section~\ref{sec:background.N} with a discussion of our geometric setup and our class of background geometries: \emph{spacetimes with Kasner big bang asymptotics} in Definition~\ref{def:asympKasner}. We list immediate consequences  and, in Example~\ref{ex:KSF}, introduce the particularly important family of exact solutions of the Einstein--scalar field equations: \emph{Kasner--scalar field spacetimes}. We provide some further auxiliary material  in  Section~\ref{sec:analysisbg.N}. In Section~\ref{sec:Frauendiener.N}  we introduce the relativistic Euler equations and discuss the fundamental fluid heuristics near the big bang singularity in Section~\ref{sec:convvariables.N}.
Because the Euler equations given in Section~\ref{sec:mainformulations.N} are   amenable neither to the proof of Theorem~\ref{thm:Euler1} (Section~\ref{sec:proof1}) nor the proof of Theorem~\ref{thm:Euler2} (Section~\ref{sec:proof2}), we provide a detailed derivation of alternative Euler PDE systems in Sections~\ref{sec:variablescont.N}~and~\ref{sec:FuchsianPDE.N}. We discuss Theorem~\ref{thm:Euler1} and its consequences in Section~\ref{sec:thm1}.
Finally, Section~\ref{sec:thm2} is devoted to Theorem~\ref{thm:Euler2}. 

\section{Notation, conventions and background}
\label{presec:background}

\subsection{Background spacetimes with Kasner big bang asymptotics}
\label{sec:background.N}

The goal of this paper is to solve the relativistic Euler equations on prescribed Lorentzian manifolds $(M,g_{\mu\nu})$ with ``reasonable'' big bang asymptotics. Before we fix our class of spacetimes in Definition~\ref{def:asympKasner}, we assemble some necessary geometric background material.

Let $\Sigma$ be a smooth $N$-dimensional compact parallelisable manifold (without boundary) with $N\ge 2$ and $I=(0,T_0]$ for $T_0>0$. Consider the differentiable manifold $M=I\times\Sigma$ equipped with a smooth Lorentzian metric $g_{\mu\nu}$.
We use Greek lower case letters $\mu,\nu,\ldots$ to denote abstract tensor indices for tensor fields defined with respect to $TM$ and $T^*M$. We lower and raise indices by $g_{\mu\nu}$ and its inverse $g^{\mu\nu}$, respectively. The covariant derivative associated with $g_{\mu\nu}$ is denoted by $\nabla_\mu$. 
We also suppose that the map 
\begin{equation}
    \label{eq:tmap}
    t:M=(0,T_0]\times\Sigma\rightarrow\Rbb,\quad t(\tau,x)=\tau,
\end{equation} 
has a smooth future pointing timelike gradient $\nabla_\mu t$ and that $\Sigma_\tau=\{\tau\}\times\Sigma\cong\Sigma$ is a spacelike Cauchy surface for each\footnote{In what follows, we abuse notation by using the same symbol $t$ for both the map in \eqref{eq:tmap} itself and its values in $(0,T_0]$.} $t\in (0,T_0]$.  
We define
\begin{equation}
    \label{eq:defnlapse}
    n_\mu=-\alpha\nabla_\mu t,\quad n_\mu n^\mu=-1,
\end{equation} 
as the future pointing hypersurface orthogonal timelike vector unit normal to the collection of hypersurfaces $\Sigma_t$ given by all $t\in (0,T_0]$. The smooth positive function $\alpha: M\rightarrow\Rbb$ is called the \emph{lapse}. The \emph{spatial metric} induced on each $\Sigma_t$ by $g_{\mu\nu}$ is
\begin{equation}
    \label{eq:defh}
    h_{\mu\nu}=g_{\mu\nu}+n_\nu n_\nu.
\end{equation}
Any tensor field on $M$ that is annihilated by contractions with $n_\mu$ or $n^\mu$ on each of its tensor slots is called \emph{purely spatial}. Scalar fields are always considered as purely spatial.

The orthogonal decomposition
\begin{equation}
    \label{eq:defdotnk}
    \nabla_\mu n_\nu
    =-\dot n_\nu n_\mu+k_{\mu\nu}\quad\text{with}\quad \dot n_\nu n^\nu=0,\quad k_{\mu\nu}n^\nu=0,
\end{equation}
uniquely determines the purely spatial fields $\dot n_\mu$ (\emph{acceleration}) and $k_{\mu\nu}$ (\emph{second fundamental form}). Because $n_\mu$ is hypersurface orthogonal, we have $k_{[\mu\nu]}=0$. The \emph{expansion scalar} (or \emph{mean curvature}) of the hypersurfaces $\Sigma_t$ is 
\begin{equation}
    \label{eq:defdotnk.2.N}
     H=\frac 1N {k_\mu}^\mu,
\end{equation}
which we assume to be positive on $M$.
One can also show that \eqref{eq:defnlapse} and \eqref{eq:defdotnk} imply that
\begin{equation}
    \label{eq:dotnalpharel}
    \dot n_\mu=\alpha^{-1}h_\mu{}^\nu\nabla_\nu \alpha.
\end{equation}

Consider next the $1$-parameter family of canonical embeddings
\begin{equation}
    \label{eq:timedepdiffeo}
    \Psi_t:\Sigma\rightarrow \Sigma_t\subset M,\quad t\in (0,T_0].
\end{equation}
Given a smooth purely spatial tensor field $T_{\mu\ldots\nu}$ of rank $(0,s)$ on $M$, its family of pull-backs to $\Sigma$ along $\Psi_t$ for $t\in (0,T_0]$ defines a smooth \emph{time-dependent tensor field} on $\Sigma$, that is, a map $(0,T_0]\times\Sigma\rightarrow T^{*}_{(0,s)}\Sigma$ consistent with the $\Sigma$-cotangent bundle projection, denoted by $T_{\Omega\ldots\Sigma}$. To keep it brief, we sometimes refer to a smooth time-dependent tensor field $T_{\Omega\ldots\Sigma}$ as a \emph{tensor field on $(0,T_0]\times\Sigma$}. In contrast to this we say that a \emph{tensor field on $\Sigma$ is time-independent}, or, simply a \emph{tensor field on $\Sigma$}, if it is a smooth map $\Sigma\rightarrow T^{*}_{(0,s)}\Sigma$ consistent with the $\Sigma$-cotangent bundle projection. In general, we use indices $\Lambda,\Omega,\ldots$ for abstract indices of time-dependent or time-independent tensor fields on $\Sigma$. All of this applies natural to smooth tensor fields on $M$ of arbitrary rank $(r,s)$ as well.

A general, not necessarily purely spatial smooth tensor field on $M$ can also be fully represented via pull-back to $\Sigma$. To this end we  first perform the orthogonal decomposition with respect to $n^\mu$ on $M$. Each uniquely determined resulting field is purely spatial, and can then be pulled back to $\Sigma$ as above. Our convention is that the pull-back of any piece obtained by the orthogonal decomposition via a contraction with $-n_\mu$ (or $n^\mu$, respectively) is labelled by a lower (or upper, respectively) $0$-index: for example, a vector field $V^\mu$ on $M$ yields the time-dependent scalar $V^0=V^\mu (-n_\mu)$ and the time-dependent vector field $V^\Lambda=h_\mu{}^\Lambda V^\mu$ on $\Sigma$ in this way, while a covector field $X_\mu$ on $M$ yields the time-dependent scalar $X_0=X_\mu n^\mu$ and the time-dependent covector field $X_\Lambda=h_\Lambda{}^\mu X_\mu$ on $\Sigma$. In this way, pull-backs of arbitrary tensor fields on $M$ are represented by fields on $\Sigma$ that have either a $0$ index or a capital Greek index. 

As a particular consequence, the tensor field $h_{\Lambda\Sigma}$, i.e., the pull-back of the spatial metric $h_{\mu\nu}$ defined in \eqref{eq:defh} obtained in this way, is a time-dependent Riemannian metric on $\Sigma$, and we denote its time-dependent covariant derivative by $\D_\Lambda$.

Given a smooth time-dependent function $f$ and purely spatial covector field $X_\mu$ on $M$, respectively, we also define the \emph{time derivative operator} $\D$ as
\begin{equation}
    \label{eq:defD}
    \D f=n^\sigma  \nabla_\sigma f,\quad \D X_\mu = n^\sigma h_\mu{}^\nu \nabla_\sigma X_\nu.
\end{equation} 
The pull-backs of these purely spatial fields are referred to as $Df$ and $DX_\Lambda$.
We remark that operator $D$ is related to the Lie derivative $\mathcal L_n$ as follows
\begin{equation}
    \label{eq:D2Lie}
    \D f=\mathcal L_n f,\quad \D X_\mu=\mathcal L_n X_\mu-k_\mu{}^\nu X_\nu.
\end{equation}
Given any $x\in\Sigma$, the vector field $t^\mu$ tangent to the curve $t\mapsto \Psi_t(x)$ in $M$ satisfies
\begin{equation}
    t^\mu\nabla_\mu t =1.
\end{equation} 
Together with \eqref{eq:defnlapse} this implies the existence of a purely spatial vector field $\beta^\mu$ on $M$, called the \emph{shift vector}, such that
\begin{equation}
    \label{eq:tvf}
    t^\mu=\alpha n^\mu+\beta^\mu.
\end{equation} 
In this whole paper we assume that the shift vector in \eqref{eq:tvf} vanishes
\begin{equation}
    \label{eq:vanishshift}
    t^\mu=\alpha n^\mu,\quad \beta^\mu=0.
\end{equation}
It is a standard result, see for instance \cite{lee2012}, that given any smooth function $f$ or covector field $X_\mu$ on $M$, the pull-backs of $\mathcal L_t f$ and $\mathcal L_t X_\mu$ to $\Sigma$ coincide, for any $x\in\Sigma$ and $t\in (0,T_0]$, with ordinary time derivatives
\begin{equation}
    \label{eq:timeder}
    \partial_t f=\lim_{h\rightarrow 0}\frac{f(t+h,x)-f(t,x)}{h},\quad 
    \partial_t X_\Lambda=\lim_{h\rightarrow 0}\frac{X_\Lambda(t+h,x)-X_\Lambda(t,x)}{h},\quad t\in (0,T_0],\quad x\in M.
\end{equation}

We are now ready to state the first central definition of this paper.

\begin{Def}[Spacetimes with Kasner big bang asymptotics]
    \label{def:asympKasner}
    We call a Lorentzian $(M=(0,T_0]\times\Sigma, g_{\mu\nu})$ a \textbf{spacetime with Kasner big bang asymptotics} provided there are smooth positive functions $\bar p$, $q$, $\bar q$ on $\Sigma$ and an $h_{\Lambda\Omega}$-orthonormal spatial frame\footnote{Indices $i,j,\ldots=1,\ldots,N$ for tensor fields on $\Sigma$ will represent tensor field components of time-dependent or time-independent fields on $\Sigma$ with respect to this orthonormal frame $e_i{}^\Omega$ throughout the whole paper.} $e_i{}^\Omega$ ($i=1,\ldots, N$) on $(0,T_0]\times \Sigma$ such that:
    \begin{enumerate}
        \item There is a smooth (not necessarily orthonormal) time-independent spatial frame $b_{\bar i}{}^\Omega$ ($\bar i=1,\ldots, N$) on $\Sigma$ and a map $\tau\in C_b^\infty((0,T_0]\times\Sigma, \mathrm{GL}(N,\mathbb{R}))$ written as $\tau^{\bar i}_i$ such that
        \begin{equation}
            \label{eq:framebound}
            e_i{}^\Lambda= t^{-1+\bar p} \tau^{\bar i}_i b_{\bar i}{}^\Lambda.
        \end{equation}
        \item There is a function $\alpha_*\in C_b^\infty((0,T_0]\times\Sigma, \Rbb^{>0})$ such that
        \begin{equation}
            \label{eq:alphaasympt}
            \alpha=1+t^q\alpha_*.
        \end{equation}
        \item \label{cond:spectral} Let $K_\Lambda{}^\Omega$ be the \emph{rescaled Weingarten map}
        \begin{equation}
            \label{eq:defK}
            K_\Lambda{}^\Omega=\frac{ k_\Lambda{}^\Omega}{N H}
        \end{equation}
        on $(0,T_0]\times\Sigma$.
        There exist a smooth tensor field $\Ktt_\Lambda{}^\Omega$ on $(0,T_0]\times\Sigma$ with time-independent orthonormal frame components\footnote{Observe that 
        the matrix of components $\Ktt^{ki}=\Ktt_i{}^j h^{ik}$ is necessarily symmetric and therefore diagonalisable at each point on $\Sigma$. In general, this is not enough to guarantee that the eigenvalues or eigenvectors are smooth though.} $\Ktt_i{}^j$, and,  a tensor field $K_{*,\Lambda}{}^\Omega$ on $(0,T_0]\times\Sigma$ whose orthonormal frame components $K_{*,i}{}^j$ are members of $C_b^\infty((0,T_0]\times\Sigma)$ such that
        \begin{equation}
            \label{eq:Kttdef}
            K_i{}^j=\Ktt_i{}^j+t^q K_{*,i}{}^j
        \end{equation}
        on $(0,T_0]\times\Sigma$.
        Furthermore, there exist a smooth tensor field $\hat h_\Lambda{}^\Omega$ with time-independent orthonormal frame components $\hat h_i{}^j$ and a smooth function $P$ on $\Sigma$ such that 
        \begin{enumerate}
            \item $\hat h_i{}^j\hat h_j{}^k=\hat h_i{}^k$,
            \item the map defined by the orthonormal frame components
            \begin{equation}
                \label{eq:Kttspecdecomp}
                \check L_i{}^j=P h_i{}^j-\Ktt_i{}^j
            \end{equation}
            is smooth on $\Sigma$, positive semi-definite, and its kernel equals the range of $\hat h_i{}^j$.
        \end{enumerate}
        The eigenvalues of $\Ktt_i{}^j$ are also often called \emph{Kasner exponents}.

        It follows that $P$ is the Kasner exponent and is smooth by assumption. The map $\hat h_i{}^j$ is the  \emph{projector into the eigensubspace corresponding to the largest eigenvalue} $P$ and is also smooth by assumption. Moreover, when we define
        \begin{equation}
            \check h_i{}^j=h_i{}^j-\hat h_i{}^j,
        \end{equation}
        then we have 
        \begin{equation}
            \hat h_i{}^j \Ktt_j{}^k \hat h_k{}^l=P \hat h_i{}^l,\quad \hat h_i{}^j \Ktt_j{}^k \check h_k{}^l=\check h_i{}^j \Ktt_j{}^k \hat h_k{}^l=0,\quad \check h_i{}^j \Ktt_j{}^k \check h_k{}^l=P \check h_i{}^l-\check L_i{}^l,
        \end{equation}
        at each point of $\Sigma$.
        \item The mean curvature $H$ defined in \eqref{eq:defdotnk.2.N} satisfies
        \begin{equation}
            \label{eq:CMCgauge}
            H=\frac{1}{\alpha N t}.
        \end{equation}
        \item There is a map $\Omega\in C_b^\infty((0,T_0]\times\Sigma, M_{N\times N}(\mathbb{R}))$ written as $\Omega_i{}^j$ such that
        $\Omega^{ki}=\Omega_i{}^j h^{ik}$ is antisymmetric and
        \begin{equation}
            \label{frametransp.1.N}
            D e_i{}^\Lambda=\alpha^{-1}t^{-1+q}\Omega_i{}^j e_j{}^\Lambda
        \end{equation}
        on $(0,T_0]\times\Sigma$.
        Recall \eqref{eq:defD} for the definition of the time derivative $\D$.
        \item There is a map $c\in C_b^\infty((0,T_0]\times\Sigma, M_{N\times N\times N}(\mathbb{R}))$ written as $c_{i}{}^j{}_k$ such that
        \begin{equation}
            \label{eq:framecommutators}
            \mathcal L_{e_i} e_k{}^\Lambda=[e_i,e_k]^\Lambda=t^{-1+\bar q} c_{i}{}^j{}_k e_j{}^\Lambda
        \end{equation}
        on $(0,T_0]\times\Sigma$.
    \end{enumerate}
\end{Def}

A particularly important example of spacetimes with Kasner big bang asymptotics is given in Example~\ref{ex:KSF} below.
We first discuss some basic consequences and properties of Kasner big bang asymptotics. First, given \eqref{eq:CMCgauge}, it follows that $t=0$ is a crushing singularity \cite{eardley1979} and past timelike geodesic incompleteness follows from Hawking's singularity theorem \cite[Chapter~14, Theorem~55A]{oneill1983a}. However, curvature blow up at $t=0$ is not implied by the assumptions and, in fact, $t=0$ may therefore not always represent a \textit{big bang singularity} in the sense of $C^2$-inextendibility. Observe also that we do not require that a spacetime with Kasner big bang asymptotics satisfies the Einstein equations. As a consequence, while the eigenvalues of $\Ktt_i{}^j$, one of which is $P$, always sum to $1$ at each point of $\Sigma$ by virtue of \eqref{eq:defK} and \eqref{eq:Kttdef}\footnote{For the important Example~\ref{ex:KSF}, this corresponds to the first equation in \eqref{eq:Kasnerrel}.}, we do in general not require an analogue of the second equation in \eqref{eq:Kasnerrel} which in Example~\ref{ex:KSF} is implied by the Hamiltonian constraint of the Einstein--scalar field equations. 
This paper is concerned with solutions of the Euler equations on quite a large class of {prescribed} spacetimes which, most importantly, contains those spacetimes that we believe to be physically relevant near the big bang. 

Next, as a consequence of (\ref{cond:spectral}) in Definition~\ref{def:asympKasner} and the orthogonal decomposition of the (co)tangent space of $\Sigma$ at each point provided by $\hat h_i{}^j$ and $\check h_i{}^j$, a covector $u_i$ defined at any point of $\Sigma$ can be uniquely decomposed as
\begin{equation}
    \label{eq:spectral_covector_decomp}
    u_i=\hat u_i+\check u_i,\quad \hat h_i{}^j\hat u_j=\hat u_i,\quad \check h_i{}^j\hat u_j=0,\quad \hat h_i{}^j\check u_j=0,\quad \check h_i{}^j\check u_j=\check u_i.
\end{equation}
It follows that $\hat u_i$ is an eigenvector of $\Ktt_i{}^j$ with eigenvalue $P$ while $\check u_i$ is in the orthogonal complement (and therefore not necessarily an eigenvector). The eigenvalues of $\check L_i{}^j$ are given by the differences between the largest eigenvalue $P$ and the remaining, by definition smaller, eigenvalues. The case in which $\check h_i{}^j$ and therefore $\check L_i{}^j$ have an $N$-dimensional kernel at each point of $\Sigma$ is interpreted as the \emph{asymptotically isotropic} case near the big bang characterised by the tensor $\Ktt_i{}^j$ being pure trace. In fact, since the eigenvalues of $\Ktt_i{}^j$, one of which is $P$, always sum to $1$ at each point of $\Sigma$ by virtue of \eqref{eq:defK} and \eqref{eq:Kttdef}, it follows that $P=1/N$ in the asymptotically isotropic case. This case is the main case of interest for the analysis of the asymptotics of fluids in the expanding time direction \cite{fournodavlos2021, fournodavlos2024a, hadžić2015, lubbe2013a, marshall2023, marshall2025, oliynyk2016, rendall2004a, rodnianski2013, speck2012, speck2013,fajman2025}; it is less relevant in the contracting direction near the big bang, but see \cite{anguige1999, goode1985, tod1999, tod1999a,lidsey2006,garfinkle2008}. 

A particularly noteworthy restriction imposed by Definition~\ref{def:asympKasner} arises from the smoothness assumption on $\hat h_{i}{}^j$ and $\check L_{i}{}^j$. Consequently, \emph{the dimension of the $P$-eigenspace of $\Ktt_i{}^j$ must be constant in space}. This assumption would be violated if we considered, for example, the case where another eigenvalue $\bar P$ equals the largest one $P$ at some points in $\Sigma$.  We remark that under certain conditions a localised version of the results of this paper, where we would stay away from such exceptional points,  would still be valid. 

Next we observe that \eqref{eq:CMCgauge} and \eqref{eq:alphaasympt} can be interpreted as stating that $t$ is an \emph{asymptotic CMC time function}. While we stick to this choice for this paper here, it turns out that our results straightforwardly extend to a class of lapse functions $\alpha$ that diverge at $t=0$ (violating \eqref{eq:alphaasympt})\footnote{Time functions with mildly diverging lapse functions occur in the results of \cite{BeyerOliynyk:2021, BeyerOliynykZheng:2025, beyerStabilityFLRWSolutions2023}.}. 

We next note that 
\begin{equation*}
        \mathcal L_n e_i{}^\mu=n^\rho\nabla_\rho e_i{}^\mu-e_i{}^\rho\nabla_\rho n{}^\mu
        =D e_i{}^\mu
        +\dot n_i n^\mu
        -e_i{}^\rho k_\rho{}^\mu,
\end{equation*}
using  \eqref{eq:defh}, \eqref{eq:defdotnk} and \eqref{eq:defD}. Pulling this back to $\Sigma$ along \eqref{eq:timedepdiffeo} for each $t\in (0,T_0]$ and using \eqref{eq:vanishshift}, \eqref{frametransp.1.N}, \eqref{eq:defK} and \eqref{eq:CMCgauge} we find
\begin{equation}
    \partial_t e_i{}^\Lambda
    =t^{-1+q}\Omega_i{}^j e_j{}^\Lambda
    -t^{-1}  K_\Sigma{}^\Lambda e_i{}^\Sigma.
\end{equation}
Given the conditions in Definition~\ref{def:asympKasner}, especially the restriction $q>0$, and, that the assumption that the frame $b_{\bar i}{}^\Omega$ is time-independent, and, that $P$ is the largest eigenvalue of $\Ktt_i{}^j$, we conclude together with \eqref{eq:framebound} that the constants $\bar p$ and $P$ must satisfy $\bar p=1-P$. In particular, the condition $\bar p>0$ therefore implies that $P<1$. In fact, motivated by this we shall always impose $P<1$ throughout this whole paper.
 
Next, from \eqref{eq:dotnalpharel}, \eqref{eq:alphaasympt} and \eqref{eq:framebound} it follows that there exist maps in $C_b^\infty((0,T_0]\times\Sigma, \Rbb^N)$ written as $\dot n_{*,\bar i}$ and $\dot n_{*,i}$ such that 
\begin{equation}
    \label{eq:ndotbound}
    \dot n_{\bar i}=t^q\dot n_{*,\bar i},\quad \dot n_i=t^{-1+q+\bar p}\dot n_{*,i}.
\end{equation}

As a final remark, several conditions in Definition~\ref{def:asympKasner}, in particular those on the rescaled Weingarten map in \eqref{eq:Kttdef}, are formulated in terms of components with respect to the orthonormal frame $e_i{}^\Omega$ rather than the time-independent frame $b_{\bar i}{}^\Omega$. Although $e_i{}^\Omega$ is typically singular as $t\to 0$ by \eqref{eq:framebound}, contemporary big bang stability results\footnote{It will be interesting in future work to study the implications of the recent results in \cite{franco-grisales2026} which achieve significantly better control regarding the big bang asymptotics under certain conditions.} for the Einstein--matter equations usually provide precisely such orthonormal-frame information, which motivates our choice. If comparable assumptions could instead be phrased in a frame regular as $t\to 0$ (such as $b_{\bar i}{}^\Omega$), one could obtain substantially stronger conclusions about the fluid behaviour. 

\begin{ex}[Kasner-scalar field spacetimes]
    \label{ex:KSF}

The \emph{Kasner--scalar field family of spacetimes} is a particularly important model for big bang formation in general relativity and has been discussed in detail in many of the references cited above. It is a family of spatially homogeneous spacetimes with Kasner big bang asymptotics according to Definition~\ref{def:asympKasner} that solve the\footnote{A scalar field $\phi$ is called \emph{free-massless} if it is a solution of the scalar wave equation with \emph{vanishing} potential.} Einstein--(free-massless) scalar-field equations in the explicit form
    \begin{equation}
        M=(0,\infty)\times\Tbb^N, \quad g_{\mu\nu}=-\nabla_\mu t\otimes\nabla_\nu t+\sum_{{\bar i}=1}^N t^{2P_{\bar i}} \nabla_\mu x_{\bar i}\otimes\nabla_\nu x_{\bar i},\quad \phi=A \log t+B, 
    \end{equation}
    for real constant parameters $A$, $B$, and $P_1, \ldots, P_N$, satisfying
    \begin{equation}
        \label{eq:Kasnerrel}
        \sum_{{\bar i}=1}^N P_{\bar i}=1,\quad \sum_{{\bar i}=1}^N P_{\bar i}^2=1-A^2,\quad |A|\in \Bigl[0,\sqrt{\frac{N-1}{N}}\Bigr], \quad B\in\Rbb.
    \end{equation}
    Important special cases are the \emph{Kasner--vacuum solutions} given by $A=B=0$, and, the isotropic spatially flat Friedmann--Lema\^itre--Robertson--Walker solutions given by $|A|=\sqrt{(N-1)/N}$, $B\in\Rbb$, in which case \eqref{eq:Kasnerrel} imply that $P_{\bar i}=1/N$ for all ${\bar i}=1,\ldots,N$. Let us now express the general family of Kasner-scalar field spacetimes in the language of Definition~\ref{def:asympKasner}. First we choose $T_0=\infty$ and $\Sigma=\Tbb^N$.
    We identify the frame $b_{\bar i}{}^\Omega$ in the definition with the spatial coordinate frame $\partial_{x_{\ib}}{}^\Lambda$ and take the orthonormal frame $e_{i}{}^\Omega$ to be
    \begin{equation}
        e_i{}^\Lambda=t^{-P_{\ib}}\partial_{x_{\ib}}{}^\Lambda \delta_i^{\ib}
    \end{equation}
    for all $i=1,\ldots,N$. Without loss of generality, assume that $P_1\ge P_2\ge\ldots\ge P_N$. Then we set 
    \begin{equation}
        \bar p=1-P_1
    \end{equation} 
    and $\tau_i{}^{\bar i}=t^{P_1-P_i}\delta_i{}^{\bar i}$ in \eqref{eq:framebound}. Next, we set
    $\alpha=1$ (and hence $\alpha_*=0$ according to \eqref{eq:alphaasympt}), and 
    \begin{equation}
        K_i{}^j=\Ktt_i{}^j=\mathrm{diag}\left(P_1,\ldots,P_N\right),
    \end{equation} 
    and hence $K_{*,i}{}^j=0$ according to \eqref{eq:alphaasympt}. It follows that $P=P_1$ and that
    \begin{equation}
        \hat L_i{}^j=\mathrm{diag}\left(0,P_1-P_2\ldots,P_1-P_N\right),
    \end{equation}
    according to \eqref{eq:Kttspecdecomp}. If, for example, $P_1>P_2$, then $\hat h_i{}^j$ is the projector onto the one-dimensional subspace spanned by $e_1{}^\Lambda$, and $\check h_i{}^j$ the projector onto the $(N-1)$-dimensional orthonormal complement spanned by $\{e_2{}^\Lambda,\ldots,e_N{}^\Lambda\}$. Returning to the general case, we next verify that $D e_i{}^\Lambda=0$ (and hence $\Omega_i{}^j=0$ according to \eqref{frametransp.1.N}) and that $\mathcal L_{e_i} e_k{}^\Lambda=0$ (and hence $c_{i}{}^j{}_k=0$ according to \eqref{eq:framecommutators}). This implies that we can choose both quantities $q$ and $\bar q$ as large as we like. In total, we have verified that the Kasner-scalar field family is a family of spacetimes with Kasner big bang asymptotics according to Definition~\ref{def:asympKasner}.
\end{ex}

\subsection{Analysis background}
\label{sec:analysisbg.N}

Let $T_0$ and $\Sigma$ be as discussed in Section~\ref{sec:background.N}. Recall that by assumption the manifold $\Sigma$ is parallelisable. In the rest of this paper, except for Section~\ref{sec:Frauendiener.N}, we express all of our time-dependent and time-independent tensor fields on $\Sigma$ in terms of the orthonormal frame $e_i{}^\Omega$ in Definition~\ref{def:asympKasner}. This means that we exclusively deal with maps of the form $(0,T_0]\times\Sigma\rightarrow \Rbb^s$ or $\Sigma\rightarrow \Rbb^s$ for some non-negative integer $s$, respectively. Let now $k$, $s$ and $\bar s$ be non-negative integers. For the purpose of this discussion, we write smooth $\Rbb^s$-valued maps on $(0,T_0]\times\Sigma$ or $\Sigma$ either in abstract index notation $f_I$ with $I,J,\ldots=1,\ldots s$ or in index-free notation $f$. We sometimes write arguments explicitly, for example $f(x)$ with $x\in\Sigma$, or, if it is clear from the context, we omit arguments.

We call a map $f_{\bar I}: (0,T_0]\times\Sigma\times\Rbb^s\rightarrow \Rbb^{\bar s}$ a\footnote{For definiteness, we only write down the time-dependent case here. Similar notions can be introduced for time-independent maps.} \emph{monomial of order $k$} if there is a smooth function $f_{\bar I}{}^{I_1\ldots I_k}: (0,T_0]\times\Sigma\rightarrow \Rbb^{\bar s}\times(\Rbb^s)^k$, called the \emph{coefficient} of the monomial, such that
\begin{equation}
    f_{\bar I}(t,x,v)=f_{\bar I}{}^{I_1\ldots I_k}(t,x)\, v_{I_1}\cdots v_{I_k},
\end{equation}
for all $t\in (0,T_0]$, $x\in\Sigma$ and $v\in\Rbb^s$. We say that the monomial $f_{\bar I}$ is \emph{uniformly bounded} if $f_{\bar I}{}^{I_1\ldots I_k}$ is a member of $C^\infty_b((0,T_0]\times\Sigma, \Rbb^{\bar s}\times(\Rbb^s)^k)$. We call a map $f_{\bar I}: (0,T_0]\times\Sigma\times\Rbb^s\rightarrow \Rbb^{\bar s}$ a \emph{polynomial of order $k$} if it is a finite sum of monomials of order at most $k$ and at least one of the monomials is of order $k$. A polynomial is called \emph{uniformly bounded} if each monomial is uniformly bounded. Let $f:(0,T_0]\times\Sigma\rightarrow\Rbb$ be a smooth positive function. We say that a smooth map $F$ defined on $(0,T_0]\times\Sigma$ satisfies $F=O(f)$ provided $F/f$ is a uniformly bounded polynomial.

Given non-negative integers $s$ and $\bar s$, and an open subset $\Omega$ of $\Rbb^s$, we frequently encounter maps $F(t,x,v)$ in $C_b^\infty((0,T_0]\times\Sigma\times\Omega,\Rbb^{\bar s})$ in this paper for which there exist a non-negative integer $\bar{\bar s}$, a map $\chi\in C_b^\infty((0,T_0]\times\Sigma\times\Omega,\Rbb^{\bar{\bar s}})$ and a uniformly bounded polynomial $f_{\bar I}$ such that
\[F(t,x,v)=f_{\bar I}(t,x,\chi(t,x,v)),\]
for all $t\in (0,T_0]$, $x\in\Sigma$ and $v\in\Omega$. In this situation we state that \emph{$F$ can be identified with the uniformly bounded $f_{\bar I}$ polynomial for $\chi$ and $\Omega$.}

\section{The Euler equations for fluid solutions near the big bang}
\label{sec:mainformulations.N}

\subsection{Frauendiener-Walton formulation of the Euler equations and its $N+1$-decomposition}
\label{sec:Frauendiener.N}
\newcommand{\Vsy}{V}
\newcommand{\Vtm}{V_0}
\newcommand{\Vsp}{V}
\newcommand{\VspU}[1]{\Vsp^{#1}} 
\newcommand{\VspD}[1]{\Vsp_{#1}}
\newcommand{\Vspnorm}{|\Vsp|_h}
\newcommand{\CoeffSymmPre}[1]{{\tt {#1}}{}}
\newcommand{\CoeffSymm}{\CoeffSymmPre{a}}
\newcommand{\SourceSymmPre}[1]{{\tt {#1}}}
\newcommand{\SourceSymm}{\SourceSymmPre{f}}
\newcommand{\Coeff}{\CoeffSymmPre{A}}
\newcommand{\Source}{\SourceSymmPre{F}}
Given a background spacetime $(M,g_{\mu\nu})$ as discussed in Section~\ref{sec:background.N} (but not necessarily satisfying the properties of Definition~\ref{def:asympKasner}), we now consider a perfect fluid with a linear equation of state
\begin{equation}
    \label{eq:EOS}
    \mathcal P=c_s^2\rho,\quad c_s^2\in (0,1),
\end{equation}
where $\mathcal P$ denotes the fluid pressure and $\rho$ the fluid density. We consider the speed of sound $c_s^2$ of the fluid as a free parameter\footnote{We will explain the reason for excluding the values $c_s^2=0$ (dust) and $c_s^2=1$ (stiff fluid) below.} (given in units of the square of the speed of light) at this stage. In the framework of \cite{Walton2005,Frauendiener2003}, the complete information about the fluid is contained in a (generally unnormalised) smooth timelike vector field $V^\mu$ on $M$, with respect to which the fluid energy-momentum tensor takes the form
\begin{equation}
    \label{eq:energymomentumtensor}
    T_{\mu\nu}=\mathcal P_0\left(-\frac{1+c_s^2}{c_s^2}\frac{V_\mu V_\nu}{V_\rho V^\rho}+g_{\mu\nu}\right)\left(-V_\rho V^\rho\right)^{-(1+c_s^2)/(2c_s^2)},
\end{equation}
determined by another free constant $\mathcal P_0>0$. The Euler equations are equivalent to the divergence-freeness of the energy-momentum tensor and can be written as
\begin{equation}
    0=
    \frac{c_s^2}{\mathcal P_0(1+c_s^2)} (-V_\rho V^\rho)^{(3c_s^2+1)/c_s^2}\nabla_\mu T^{\mu\nu}=
    A^{\mu\nu \rho}\nabla_\mu V_\rho,\label{eq:Euler1}
\end{equation}
where
\begin{equation}
    A^{\mu\nu \rho}
    =-\frac{3 c_s^2+1}{c_s^2} \frac{V^\nu V^\rho V^\mu}{V^\sigma V_\sigma} +V^\mu g^{\nu\rho}
    +V^{\rho}g^{\mu\nu} +V^\nu g^{\rho\mu}. 
    \label{eq:Euler2}
\end{equation}
Recall from Section~\ref{sec:background.N} that we perform index operations with the metric $g_{\mu\nu}$ and its inverse $g^{\mu\nu}$. A key feature of the framework of \cite{Walton2005,Frauendiener2003} is that $A^{\mu\nu \rho}$ is a totally symmetric tensor field which, under suitable circumstances discussed below, has the crucial positive definiteness property discussed below.

To extract the physical information from an arbitrary given solution $V^\mu$ of \eqref{eq:Euler1}--\eqref{eq:Euler2}, the physical pressure $\mathcal P$, density $\rho$, and the (normalised) fluid velocity field $\mathcal u^\mu$ are given by
\begin{equation}
\label{eq:physicsquantitiesfluid}
    \mathcal P=\mathcal P_0 \bigl(-V^\mu V_\mu\bigr)^{-\frac {c_s^2+1}{2c_s^2}},\quad
    \rho=\frac{\mathcal P_0}{c_s^2} \bigl(-V^\mu V_\mu\bigr)^{-\frac {c_s^2+1}{2c_s^2}},\quad 
    \mathcal u^\mu=\frac{V^\mu}{\sqrt{-V^\sigma V_\sigma}}.
\end{equation} 
In order to further characterise the behaviour of the physical properties of the fluid based on the $N+1$-decomposition introduced in Section~\ref{sec:background.N}, we can decompose the fluid velocity field $\mathcal u^\mu$ as
\begin{equation}
    \label{eq:tilt.def}
    \mathcal u^\mu=\Gamma (n^\mu+\nu^\mu),\quad n_\mu \nu^\mu=0,
\end{equation}
where the purely spatial vector field $\nu^\mu$ is called the \emph{tilt} and $\Gamma$ the corresponding \emph{Lorentz factor}. It follows from the convention introduced in Section~\ref{sec:background.N} that the pull-backs to $\Sigma$ are 
\begin{equation}
    \label{eq:tilt.expr}
    \Gamma=-\mathcal u^\mu n_\mu=-\frac{V_0}{\sqrt{V_0^2-|V|_h^2}},\quad 
    \nu^0=0,\quad
    \nu^\Lambda=\frac{\mathcal u^\Omega h_{\Omega}{}^\Lambda}{\Gamma}
    =-\frac{\Vsp^\Lambda}{\Vtm}.
\end{equation}
Here we use the notation
\begin{equation}
    \Vspnorm^2:=h^{\Lambda\Omega}\Vsp_\Lambda \Vsp_\Omega,
\end{equation} 
which implies that
\begin{equation}
    V^\mu V_\mu=-\Vtm^2+\Vspnorm^2.
\end{equation}

The coefficients $A^\mu{}^{\nu \rho}$ defined in \eqref{eq:Euler2} are represented by the following tensor fields on $(0,T_0]\times\Sigma$:
\begin{align}
    \label{eq:AUDDcomp.First}
    A^{000}
    =& 
    -\frac{\Vtm^2+3c_s^2\Vspnorm^2}{(\Vtm^2-\Vspnorm^2)c_s^2} \,
    \Vtm,\\
     A^{00\Lambda}
    =&A^{0\Lambda0}=A^{\Lambda00}=\frac{(2 c_s^2+1)\Vtm^2+c_s^2\Vspnorm^2}{(\Vtm^2-\Vspnorm^2)c_s^2}\VspU \Lambda,\\
    A^{0\Omega\Lambda}
    =&A^{\Omega0\Lambda}
    =A^{\Omega\Lambda0}
    =-\Bigl(\frac{(3 c_s^2+1)}{(\Vtm^2-\Vspnorm^2)c_s^2}\VspU \Omega \VspU \Lambda +h^{\Omega\Lambda}\Bigr)\Vtm,\\
    A^{\Sigma\Omega\Lambda}=&\frac{3 c_s^2+1}{(\Vtm^2-\Vspnorm^2)c_s^2} \VspU \Omega \VspU \Lambda \VspU  \Sigma+\VspU  \Sigma h^{\Omega\Lambda}
    +h^{\Sigma\Lambda} \VspU {\Omega}
    +h^{\Sigma\Omega} \VspU {\Lambda}.\label{eq:AUDDcomp.Last}
\end{align}
In order to write \eqref{eq:Euler1} and \eqref{eq:Euler2} using these expressions next, we first derive
with the help of \eqref{eq:defh} and \eqref{eq:defdotnk} that
\begin{align}
    \nabla_0 V_0&=n^\mu n^\nu\nabla_\mu V_\nu
    =n^\mu\nabla_\mu (V_\nu n^\nu)-\dot n^\mu V_\mu
    =\D \Vtm-\dot n^\Lambda \Vsp_\Lambda,\\
    \nabla_\Lambda V_0&=h_\Lambda{}^\mu n^\nu\nabla_\mu V_\nu
    =\nabla_\Lambda (V_\nu n^\nu)-k_{\Lambda}{}^{\mu} V_\mu
    =\D_\Lambda \Vtm-k_{\Lambda}{}^{\Sigma} \Vsp_\Sigma,\\
    \nabla_0 V_\Lambda&=n^\mu h_\Lambda{}^{\nu} \nabla_\mu V_\nu
    =n^\mu h_\Lambda{}^{\nu} h_\nu{}^{\rho} \nabla_\mu V_\rho
    =n^\mu h_\Lambda{}^{\nu}  \nabla_\mu (h_\nu{}^{\rho} V_\rho)
    -\dot n_\Lambda V_0
    =\D \Vsp_\Lambda
    -\dot n_\Lambda \Vtm,\\
    \nabla_\Sigma V_\Lambda&=h_{\Sigma}{}^\mu h_\Lambda{}^{\nu} \nabla_\mu V_\nu
    =h_{\Sigma}{}^\mu h_\Lambda{}^{\nu} h_\nu{}^{\rho} \nabla_\mu V_\rho
    =h_{\Sigma}{}^\mu h_\Lambda{}^{\nu}  \nabla_\mu (h_\nu{}^{\rho} V_\rho)
    -k_{\Lambda\Sigma} \Vtm
    =\D_\Sigma\Vsp_\Lambda-k_{\Lambda\Sigma} \Vtm,
\end{align}
where we recall that we consider $\Vtm$ as a time-dependent scalar function on $\Sigma$, the operator $\D_\Lambda$ is the covariant derivative associated with $h_{\Lambda\Omega}$, and the operator $\D$ is  defined in \eqref{eq:defD}. 
We can now write \eqref{eq:Euler1} and \eqref{eq:Euler2} using the following vector notation
\begin{equation}
    \label{eq:EulerVectLHS}
    \CoeffSymm^0 \D \Vsy
    +\CoeffSymm^\Sigma \D_\Sigma{} \Vsy
    =\SourceSymm,
\end{equation}
where
\begin{equation}
    V=\begin{pmatrix}
        \Vtm\\
        \VspD \Lambda
    \end{pmatrix},
\end{equation}
and
\begin{align}
    \CoeffSymm^0&=
    \begin{pmatrix}
        \CoeffSymm^{000} & \CoeffSymm^{00\Lambda}\\
        \CoeffSymm^{0\Omega0} & \CoeffSymm^{0\Omega\Lambda}
    \end{pmatrix}
    =-\begin{pmatrix}
        A^{000} & A^{00\Lambda}\\
        A^{0\Omega0} & A^{0\Omega\Lambda}
    \end{pmatrix},\\
    \CoeffSymm^\Sigma&=
    \begin{pmatrix}
        \CoeffSymm^{\Sigma00} & \CoeffSymm^{\Sigma0\Lambda}\\
        \CoeffSymm^{\Sigma\Omega0} & \CoeffSymm^{\Sigma\Omega\Lambda}
    \end{pmatrix}
    =-\begin{pmatrix}
        A^{\Sigma00} & A^{\Sigma0\Lambda}\\
        A^{\Sigma\Omega0} & A^{\Sigma\Omega\Lambda}
    \end{pmatrix},
\end{align}
that is,
\begin{align}
    \CoeffSymm^0&
    =\begin{pmatrix}
        \frac{\Vtm^2+3c_s^2\Vspnorm^2}{(\Vtm^2-\Vspnorm^2)c_s^2} \,\Vtm & -\frac{(2 c_s^2+1)\Vtm^2+c_s^2\Vspnorm^2}{(\Vtm^2-\Vspnorm^2)c_s^2}\Vsp^\Lambda\\
        -\frac{(2 c_s^2+1)\Vtm^2+c_s^2\Vspnorm^2}{(\Vtm^2-\Vspnorm^2)c_s^2}\Vsp^\Omega & \Bigl(\frac{3 c_s^2+1}{(\Vtm^2-\Vspnorm^2)c_s^2}\Vsp^\Omega \Vsp^\Lambda +h^{\Omega\Lambda}\Bigr)\Vtm
    \end{pmatrix},\\
    \CoeffSymm^\Sigma&
    =\begin{pmatrix}
        -\frac{(2 c_s^2+1)\Vtm^2 +c_s^2\Vspnorm^2}{(\Vtm^2-\Vspnorm^2)c_s^2}\Vsp^\Sigma & \Bigl(\frac{3 c_s^2+1 }{(\Vtm^2-\Vspnorm^2)c_s^2} \Vsp^\Lambda \Vsp^\Sigma
        +{h^{\Sigma\Lambda}}\Bigr)\Vtm\\
        \Bigl(\frac{3 c_s^2+1 }{(\Vtm^2-\Vspnorm^2)c_s^2} \Vsp^\Omega \Vsp^\Sigma
    +{h^{\Sigma\Omega}}\Bigr) \Vtm 
    & -\frac{3 c_s^2+1}{(\Vtm^2-\Vspnorm^2)c_s^2} \Vsp^\Omega \Vsp^\Lambda \Vsp^\Sigma
    -\Vsp^\Sigma h^{\Omega\Lambda}
    -h^{\Sigma\Lambda} \Vsp^{\Omega}
    -h^{\Sigma\Omega} \Vsp^{\Lambda}
    \end{pmatrix},\\
    \SourceSymm
    &=\CoeffSymm^0 \begin{pmatrix}
    \dot n^\Sigma \Vsp_\Sigma\\
    k_\Lambda{}^\Sigma V_\Sigma+\dot n_\Lambda \Vtm  
  \end{pmatrix}
    +\CoeffSymm^\Sigma\begin{pmatrix}
    k_{\Sigma}{}^{\Sigma'}\Vsp_{\Sigma'} 
    \\
    {k_{\Sigma\Lambda}} \Vtm
  \end{pmatrix}\\
  &=\begin{pmatrix}
  k_\Sigma{}^\Sigma \Vtm&  k^{\Sigma\Lambda}\Vsp_\Sigma\\
  0 & 
  -k_\Sigma{}^\Sigma\Vtm h^{\Omega\Lambda} 
  - k^{\Omega\Lambda} \Vtm
  \end{pmatrix}V
  +\begin{pmatrix}
   -2 \Vtm \dot n^\Sigma \Vsp_\Sigma\\
    \Vtm^2\dot n^\Omega+\dot n^\Sigma\Vsp_\Sigma\Vsp^\Omega
  \end{pmatrix}
  \notag.
\end{align}
Observe that the matrices $\CoeffSymm^0$ and $\CoeffSymm^\Sigma$ are symmetric.
The inverse of the matrix $\CoeffSymm^0$ is
\begin{equation}
    (\CoeffSymm^0)^{-1}=
    \begin{pmatrix}
    \frac{(2 c_s^2+1) \Vspnorm^2 +c_s^2 \Vtm ^2}{\left(\Vtm ^2-\Vspnorm^2\right) \left(\Vtm ^2-c_s^2 \Vspnorm^2\right)} \Vtm
    & \frac{ c_s^2 \Vspnorm^2 +(2 c_s^2+1) \Vtm ^2}{\left(\Vtm ^2-\Vspnorm^2\right) \left(\Vtm ^2-c_s^2 \Vspnorm^2\right)} \Vsp_\Omega\\
    \frac{ c_s^2 \Vspnorm^2  +(2 c_s^2+1) \Vtm ^2}{\left(\Vtm ^2-\Vspnorm^2\right) \left(\Vtm ^2-c_s^2 \Vspnorm^2\right)} \Vsp_\Pi
    & \frac{\left(c_s^2 \Vspnorm^4-(c_s^2+1) \Vspnorm^2 \Vtm ^2+\Vtm ^4\right) h_{\Omega \Pi}+ \left((4 c_s^2+1) \Vtm ^2-c_s^2 \Vspnorm^2\right)\Vsp_\Omega\Vsp_\Pi}{\Vtm  \left(\Vtm ^2-\Vspnorm^2\right) \left(\Vtm ^2-c_s^2 \Vspnorm^2\right)} 
    \end{pmatrix}.
\end{equation}
Multiplying \eqref{eq:EulerVectLHS} from the left with this inverse matrix yields the non-symmetrised form of the Euler equations 
\begin{equation}
    \D\Vsy
    +\Coeff^\Sigma \D_\Sigma{} \Vsy
    =\Source,
\end{equation}
where
\begin{align}
    \Coeff^\Sigma=&(\CoeffSymm^0)^{-1}\CoeffSymm^\Sigma
    =\begin{pmatrix}
    0 & -\frac{  (c_s^2+1) \Vsp^{\Lambda} \Vsp^{\Sigma}}{\Vtm ^2-c_s^2 \Vspnorm^2}+\frac{  c_s^2 \left(\Vspnorm^2+\Vtm ^2\right) h^{\Lambda\Sigma}}{\Vtm ^2-c_s^2 \Vspnorm^2} \\
      h_{\Pi}{}^{\Sigma} & \frac{2   c_s^2 \Vtm  \Vsp_{\Pi} h^{\Lambda\Sigma}}{\Vtm ^2-c_s^2 \Vspnorm^2}-\frac{  \Vsp^{\Lambda} h_{\Pi}{}^{\Sigma}}{\Vtm }-\frac{  \Vsp^{\Sigma} h_{\Pi}{}^{\Lambda}}{\Vtm }-\frac{2   c_s^2 \Vsp^{\Lambda} \Vsp^{\Sigma} \Vsp_{\Pi}}{\Vtm(\Vtm ^2-c_s^2\Vspnorm^2 )} 
    \end{pmatrix},\\
    \Source&=(\CoeffSymm^0)^{-1}\SourceSymm\\
    =&\begin{pmatrix}
    \frac{c_s^2 k_\Omega{}^\Omega \left(\Vspnorm^2+\Vtm ^2\right)
    -(c_s^2+1)  k^{\Omega\Omega'} \Vsp_{\Omega} \Vsp_{\Omega'}}{\Vtm ^2-c_s^2 \Vspnorm^2}
     & 0
    \\
    0 & 
    \frac{2 c_s^2( \Vtm ^2 k_\Omega{}^\Omega
    -  k^{\Omega\Omega'} \Vsp_{\Omega} \Vsp_{\Omega'}
    )h_{\Pi}{}^\Lambda}{\Vtm ^2-c_s^2 \Vspnorm^2}
    -k_\Pi{}^\Lambda 
    \end{pmatrix}V
    +\begin{pmatrix}
        \dot n^\Omega\Vsp_\Omega\\
        +\dot n_\Pi \Vtm
    \end{pmatrix}.\notag
    \notag
\end{align}

Given the linear equation of state and restriction for the speed of sound parameter in \eqref{eq:EOS}, the Frauen\-diener-Walton formulation of the Euler equation is completely general. Especially we do not yet assume that the spacetime satisfies the properties of Definition~\ref{def:asympKasner}. Because the Euler equations in this form are symmetrisable\footnote{One can check that the matrix $\CoeffSymm^0$ is never positive definite for $c_s^2=0$. This is the main reason why we exclude this case for the whole paper.}, the local-in-time initial value problem is well-posed provided the initial data for the vector field $V^\mu$ are timelike. Regarding the global-in-time initial value problem on backgrounds with Kasner big bang asymptotics (Definition~\ref{def:asympKasner}), however, this formulation is useful for solutions $V^\mu$ that are \emph{uniformly} timelike only, especially the norm of the tilt $\nu^\mu$ must be bounded away from $1$. Only then, the matrix $\CoeffSymm^0$ is \emph{uniformly} positive definite. This turns out to be the case \cite{BeyerLeLoch:2017,BeyerOliynyk:2020, beyerStabilityFLRWSolutions2023} in the \emph{asymptotically non-tilted regime} when the speed-of-sound parameter $c_s^2$ is sufficiently large, but not in the \emph{asymptotically extremely-tilted case} of interest to the paper here when the speed of sound is small in the sense that $c_s^2<P$ in Definition~\ref{def:asympKasner}. 

\subsection{Convenient variables and heuristics}
\label{sec:convvariables.N}
\newcommand{\NewVar}{U}

The goal of this subsection now is to introduce a first new set of fluid variables which are particularly convenient to capture the heuristics of the asymptotically extremely-tilted dynamics.  
\renewcommand{\Vsy}{U}
\renewcommand{\Vtm}{u}
\renewcommand{\Vsp}{u}
\newcommand{\Pro}{\Pi}
\newcommand{\VspPronorm}{|\Vsp|_\Pro}
\newcommand{\VspProPerpnorm}{|\Vsp|_{\Pro^\perp}}
\newcommand{\VspPro}{{\hat\Vsy}{}}
\newcommand{\VspProPerp}{{\check\Vsy}{}}
\renewcommand{\CoeffSymm}{\CoeffSymmPre{b}}
\renewcommand{\SourceSymm}{\SourceSymmPre{g}}
\renewcommand{\Coeff}{\CoeffSymmPre{B}}
\renewcommand{\Source}{\SourceSymmPre{G}}

Assuming now that our background spacetime has Kasner big bang asymptotics according to Definition~\ref{def:asympKasner}, we start by expressing the Euler equations in terms of the corresponding spatial orthonormal frame $e_i{}^\Lambda$. We introduce a new variable vector $X$ as a $\Rbb^{N+1}$-valued function written as
\begin{equation}
    \label{eq:Xdef}
    X=\begin{pmatrix}
        u\\ \Vsp_k\\
    \end{pmatrix},
\end{equation}
where the $N$ scalar functions represented by $\Vsp_k$ are the orthonormal frame components of a covector $\Vsp_\Omega$ defined now. Using the general framework for variable transformations in Section~\ref{sec:nonlineartransf}, we choose 
\begin{equation}        
    \label{eq:VariableTrafo.1.Spec}
    \begin{pmatrix}
    V_0\\V_k
    \end{pmatrix}=\Phi(t,x,X)=\frac{t^{\Ptt-2\Ltt}}{\Vtm^2}\begin{pmatrix}
        -\sqrt{t^{2\Ltt}\Vtm^2+\Vspnorm^2}\\\Vsp_k
    \end{pmatrix},
\end{equation}
where $\Ltt$ and $\Ptt$ are, for now, arbitrary smooth functions on $\Sigma$. Observe that in the language of Section~\ref{sec:nonlineartransf} we have $s=\bar s=\sigma=N+1$ provided $u>0$, and \eqref{eq:VariableTrafo.1.Spec} therefore yields a local diffeomorphism. The inverse can be written as 
\begin{equation}
   V_\mu=\frac{t^{\mathtt P-2\Ltt}}{\Vtm^2}\left(\sqrt{t^{2\Ltt}\Vtm^2+\Vspnorm^2}\,n_\mu+u_\mu\right),\quad |V|^2_h=\frac{t^{2\Ptt-4\Ltt}}{u^4} \Vspnorm^2, \quad V_\mu V^\mu=-\frac{t^{2(\Ptt-\Ltt)}}{\Vtm^2},
\end{equation}
where
\begin{equation}
    |u|_h=\sqrt{h^{ij}u_i u_j},\quad |V|_h=\sqrt{h^{ij}V_i V_j}.
\end{equation}
Consequently, the normalised fluid velocity vector $\mathcal u^\mu$ in \eqref{eq:physicsquantitiesfluid} is timelike whenever $\Vtm$ is a finite positive quantity, and \eqref{eq:tilt.def} and \eqref{eq:tilt.expr} imply
\begin{equation}
    \label{eq:Gammanu.Var2}
    \rho=\frac{\mathcal P_0}{c_s^2}t^{-(\Ptt-\Ltt)(1+c_s^2)/c_s^2}\, u^{(1+c_s^2)/c_s^2},\quad
    \Gamma=t^{-\Ltt}\frac{\sqrt{t^{2\Ltt}\Vtm^2+\Vspnorm^2}}{\Vtm},\quad 
    \nu_i=\frac{\Vsp_i}{\sqrt{t^{2\Ltt}\Vtm^2+\Vspnorm^2}}.
\end{equation}

It will turn out to be convenient to express the Levi-Civita connection $\D_\Lambda$ (defined in terms of the physical time-dependent metric $h_{\Lambda\Omega}$) in terms of some other torsion-free \emph{time-independent} reference connection $\Dc_\Lambda$. We can specify this new connection in terms of a smooth time-dependent tensor field $C_i{}^k{}_j$ such that
\begin{equation}
    \label{eq:timeindepspatconn.1}
    \D_i e_j{}^\Lambda=\Dc_i e_j{}^\Lambda+C_i{}^k{}_j e_k{}^\Lambda.
\end{equation}
If we now impose that
\begin{equation}
    \label{eq:timeindepspatconn.2}
    \Dc_\Omega e_j{}^\Lambda=0,
\end{equation}
then it is a standard consequence of \eqref{eq:framecommutators} and the torsion-free condition that
\begin{equation}
    \label{eq:timeindepspatconn.3}
    t^{1-\bar q} C_i{}^k{}_j=\frac 12 h^{kl}\left(
        h_{ld}c_i{}^d{}_j
        +h_{jd}c_l{}^d{}_i
        -h_{id}c_j{}^d{}_l
    \right).
\end{equation}
It follows that all functions $t^{1-\bar q} C_i{}^k{}_j$ are in $C_b^\infty((0,T_0]\times\Sigma,\Rbb)$. We interpret $\Dc_\Lambda$ as the unique torsion-free connection with respect to which  orthonormal frame components are differentiated along $\Sigma$ like scalar functions. 

In order to find the system of PDEs for the variable $X$ equivalent to the Euler equations, we apply the formulas \eqref{eq:PDEgen.transf.nonsymm} -- \eqref{eq:PDEgen.transf.nonsymm.G} to \eqref{eq:VariableTrafo.1.Spec}. This, together with \eqref{eq:defD}, \eqref{eq:D2Lie}, \eqref{eq:tvf}, \eqref{eq:vanishshift}, \eqref{eq:timeder}, \eqref{frametransp.1.N} and \eqref{eq:timeindepspatconn.1} -- \eqref{eq:timeindepspatconn.3}, yields the following system for $X$ equivalent to \eqref{eq:EulerVectLHS} which we write in non-symmetrised form
\begin{equation}
    \label{eq:PDEgen.transf.nonsymm.N}
    \partial_t X
    +\alpha \mathtt b^{l} \Dc_l{} X
    = \mathtt g,
\end{equation}
with
\begin{align}
    \mathtt b^l=&\frac{1}{\sqrt{t^{2\Ltt}\Vtm^2+\Vspnorm^2}}\begin{pmatrix}
        \frac{(1-c_s^2)\Vspnorm^2+(1-2c_s^2)t^{2\Ltt}\Vtm^2}{(1-c_s^2)\Vspnorm^2+t^{2\Ltt}\Vtm^2}\Vsp^l 
        & \frac{c_s^2\Vtm}{(1-c_s^2)\Vspnorm^2+t^{2\Ltt}\Vtm^2} {P}^{kl} \\
        t^{2\Ltt}\Vtm {h}_{j}{}^{l} & {h}_{j}{}^{k} \Vsp^l
    \end{pmatrix},\label{eq:variables1.bL.NNN}\\
    \mathtt g=&\alpha\begin{pmatrix}
        \frac{\Ptt-\Ltt }{\alpha t}
        -c_s^2k_p{}^p\frac{t^{2\Ltt}\Vtm^2+\Vspnorm^2}{t^{2\Ltt}\Vtm^2+(1-c_s^2) \Vspnorm^2}
        +\frac{c_s^2  k^{mn}\Vsp_m\Vsp_n}{t^{2\Ltt}\Vtm^2+(1-c_s^2) \Vspnorm^2}
        & 0\\
        0 & \frac{\Ptt }{\alpha t} \delta_j{}^k- k_j{}^k
    \end{pmatrix} \begin{pmatrix}
    \Vtm\\
    \Vsp_k
    \end{pmatrix} \label{eq:variables1.g.NNN}\\
    &+
    \alpha\sqrt{t^{2\Ltt}\Vtm^2+\Vspnorm^2}\begin{pmatrix}
        \frac{(1-c_s^2)  \D_{p} \Ptt -\frac{ (1-2 c_s^2) t^{2 \Ltt}\Vtm^2 +(1-c_s^2) \Vspnorm^2}{t^{2\Ltt}\Vtm^2 +\Vspnorm^2}  \D_{p} \Ltt}{t^{2\Ltt}\Vtm^2+(1-c_s^2) \Vspnorm^2}\Vtm \Vsp^{p}\log (t)
        \\
        -\dot n_{j} +\frac{  \left(t^{2\Ltt}\Vtm^2 \delta_{j}{}^{p}+\Vsp_{j} \Vsp^{p}\right)\D_{p} \Ptt-t^{2\Ltt}\Vtm^2 \D_{j} \Ltt}{t^{2\Ltt}\Vtm^2+\Vspnorm^2}\log (t)
    \end{pmatrix}\notag\\
    &+t^{-1+q}\begin{pmatrix}
        0\\ \Omega_j{}^m u_m
    \end{pmatrix}
    +\alpha t^{-1+\bar q}\Coeff^{l} \begin{pmatrix}0\\  t^{1-\bar q} C_l{}^m{}_ju_m\end{pmatrix},
    \notag
\end{align}
where
\begin{equation}
    \label{eq:defP}
    P^{jk}=(t^{2\Ltt}\Vtm^2+\Vspnorm^2)h^{jk}-\Vsp^j \Vsp^k.
\end{equation}
Moreover, we find that 
\begin{equation}
    \CoeffSymm^0=\begin{pmatrix}
        \frac{t^{2\Ltt}\Vtm^2+(1 - c_s^2) \Vspnorm^2}{c_s^2} & 0\\
        0 & t^{-2\Ltt }P^{jk}
    \end{pmatrix}\label{eq:variables1.b0.N}
\end{equation}
is a symmetriser. This symmetriser is positive definite for all $t>0$ provided $u>0$ as we can verify from \eqref{eq:defP} for an arbitrary covector $x_k$ as follows
\[t^{2\Ltt}   P^{jk} x_j x_k
=(\Vtm^2+t^{-2\Ltt}\Vspnorm^2)|x|_h^2-t^{-2\Ltt}\left<u,x\right>_h^2
\ge (\Vtm^2+t^{-2\Ltt}\Vspnorm^2)|x|_h^2-t^{-2\Ltt}|x|_h^2\Vspnorm^2
=\Vtm^2|x|_h^2.
\]
The PDE \eqref{eq:PDEgen.nonsymm} therefore has a well-posed local-in-time initial value problem.
However, if $\Ltt>0$ and $|u|_h$ is bounded away from zero (which turns out to be the most important case of interest here, see below), this symmetriser is unbounded in the limit $t\rightarrow 0$: pick for example a covector $x_k$ with $|x|_h=1$ and $\left<u,x\right>_h=0$. It is for this reason that we cannot use the variables $X$ to analyse the problem of interest of this paper. Nevertheless, the variable $X$ is useful for the following heuristic discussion.

We now wish to develop some more quantitative intuition for the general behaviour of solutions of the Euler equations on arbitrary background spacetimes with Kasner big bang asymptotics (Definition~\ref{def:asympKasner}). It turns out that the variables $X$ are particularly suitable for this. For the purpose of this heuristic discussion, let us restrict to the case of spatially homogeneous background spacetimes with Kasner big bang asymptotics, and we assume that all surfaces $\Sigma_t$ are homogeneous for each $t\in (0,T_0]$. More precisely we assume that the orthonormal frame in Definition~\ref{def:asympKasner} is invariant under the action of the corresponding Killing vector fields, that $\alpha=1$, that the orthonormal frame is parallel transported, hence $\Omega_i{}^j=0$ according to \eqref{frametransp.1.N}, and that the spatial curvature of the background spacetime vanishes, that is, $C_l{}^m{}_j=0$ (see \eqref{eq:framecommutators}). According to the Bianchi classification of spatially homogeneous spacetimes \cite{wainwright1997}, our background spacetime is therefore a Bianchi I spacetime. We also restrict this heuristic discussion here to the analysis of the leading-order dynamics near $t=0$ only and consequently set $K_i{}^j=\Ktt_i{}^j$ in \eqref{eq:Kttdef}. Assuming finally that the fluid solution itself is invariant under the action of the corresponding Killing vector fields, and therefore $\Dc_l X=0$, we conclude that the equations \eqref{eq:PDEgen.transf.nonsymm.N}  reduce to the following ODE system
\begin{equation}
    \label{eq:PDEgen.transf.nonsymm.sh}
    \partial_t \begin{pmatrix}
    \Vtm\\
    \Vsp_j
    \end{pmatrix}
    =\frac{1}{t}\begin{pmatrix}
        \frac{(P-\Ltt-c_s^2)t^{2\Ltt}\Vtm^2+((P-\Ltt)(1-c_s^2)-c_s^2) \Vspnorm^2+c_s^2  \Ktt^{mn}\Vsp_m\Vsp_n}{t^{2\Ltt}\Vtm^2+(1-c_s^2) \Vspnorm^2} u
        \\
        \check L_j{}^k u_k
    \end{pmatrix},
\end{equation}
making use of  \eqref{eq:defdotnk.2.N}, \eqref{eq:defK}, \eqref{eq:Kttspecdecomp} and \eqref{eq:CMCgauge}, making the specific choice
 $\Ptt=P=const$ and assuming that $\Ltt=const$. 

We wish to classify all the different kinds of solutions $(u, u_j)$ of \eqref{eq:PDEgen.transf.nonsymm.sh}. 
We start by considering the special solution for which all variables $u_k$ vanish identically:
\begin{enumerate}
    \item \emph{Non-tilted fluid} $u_k=0$: The general solution is $u=u_* t^{P-c_s^2-\Ltt}$ for any constant $u_*>0$. According to \eqref{eq:Gammanu.Var2}, this implies that
    \begin{equation}
        \rho=\frac{\mathcal P_0}{c_s^2} u_*^{(1+c_s^2)/c_s^2}t^{-(1+c_s^2)},\quad
        \Gamma=1,\quad 
        \nu_i=0.
    \end{equation}
    The fluid is called \emph{non-tilted} (or \emph{comoving}) because the tilt $\nu_i$ vanishes identically.
\end{enumerate}

Let us next consider solutions of \eqref{eq:PDEgen.transf.nonsymm.sh}, for which the functions $u_k$ do not all vanish identically. We conclude for the solutions from  the before mentioned properties of $\check L_j{}^k$ that, (i), $\hat h_i{}^j u_j$ is a constant in time, and (ii), $\check h_i{}^j u_j$ decays like a positive power of $t$ at $t=0$. We can now use this information to simplify the equation for $u$ heuristically in leading order near $t=0$, writing $\hat u_{*,i}$ for the constant $\hat h_i{}^j u_j$ at $t=0$:
\begin{equation}
    \partial_t (t^\Ltt u)=\frac {P-c_s^2}t \frac{(t^{\Ltt}\Vtm)^2+|\hat u_*|^2_h}{(t^{\Ltt}\Vtm)^2+(1-c_s^2) |\hat u_*|^2_h} t^\Ltt u.
\end{equation}
Picking 
\begin{equation}
    \label{eq:Lttchoice.N}
    \Ltt=\frac{P-c_s^2}{1-c_s^2},
\end{equation}  
we can write an implicit formula for the general solution 
\begin{equation}
    \label{eq:implicitshu}
    \Bigl(t^{2\frac{P-c_s^2}{1-c_s^2}} u^2(t)+|\hat u_*|^2_h\Bigr)^{\frac{c_s^2}{2(1-c_s^2)}}u(t)=u_*,
\end{equation}
where $u_*>0$ is an integration constant.  This implicit solution has the following interesting special cases:
\begin{enumerate}
    \setcounter{enumi}{1}
    \item \emph{Asymptotically non-tilted fluids} $P<c_s^2$: In leading order at $t=0$, \eqref{eq:implicitshu} implies that 
    \[ u=u_*^{1-c_s^2}t^{{c_s^2}\frac{c_s^2-P}{1-c_s^2}},\]
    irrespective of the value of $|\hat u_*|_h$. According to \eqref{eq:Gammanu.Var2}, this means that
    \begin{equation}
        \rho=\frac{\mathcal P_0}{c_s^2}\, u_*^{(1-c_s^2)(1+c_s^2)/c_s^2}t^{-(1+c_s^2)},\quad
        \Gamma=1,\quad 
        \nu_i={\hat u_{*,i}}u_*^{-(1-c_s^2)}t^{{c_s^2-P}},
    \end{equation}
    in leading order at $t=0$. This is called the \emph{asymptotically non-tilted} (or \emph{asymptotically comoving}) case because the tilt $\nu_i$ decays to zero near $t=0$.
    \item \emph{Asymptotically extremely-tilted fluids} $P>c_s^2$: In leading order, 
    \[u=\begin{cases}
    u_*, & \text{if } \hat u_{*,i}\not=0,\\
    u_*^{1-c_s^2}t^{-{c_s^2}\frac{P-c_s^2}{1-c_s^2}},
    & \text{if } \hat u_{*,i}=0.
    \end{cases}\]
    It follows from \eqref{eq:Gammanu.Var2} that
    \begin{align}
        \rho&=\begin{cases}
            \frac{\mathcal P_0}{c_s^2} u_*^{(1+c_s^2)/c_s^2}t^{-\frac{1-P}{1-c_s^2}(1+c_s^2)}, & \text{if } \hat u_{*,i}\not=0,\\
            \frac{\mathcal P_0}{c_s^2}\, u_*^{(1-c_s^2)(1+c_s^2)/c_s^2}t^{-(1+c_s^2)}, & \text{if } \hat u_{*,i}=0,
        \end{cases}\\
        \Gamma&=\begin{cases}
            \frac{|\hat u_*|_h}{\Vtm_*}t^{-\frac{P-c_s^2}{1-c_s^2}}, & \text{if } \hat u_{*,i}\not=0,\\
            1, & \text{if } \hat u_{*,i}=0,
        \end{cases}\\
        \nu_i&=\begin{cases}
            \frac{\hat u_{*,i}}{|\hat u_*|_h}, &\text{if } \hat u_{*,i}\not=0,\\
            0,  & \text{if } \hat u_{*,i}=0.
        \end{cases}
    \end{align}
    This is called the \emph{asymptotically extremely-tilted} case because the tilt $\nu_i$ approaches a unit vector and the Lorentz factor diverges in leading order near $t=0$ at all points where $|\hat u_*|_h$ does not vanish.
\end{enumerate} 
To shorten our discussion, we omit the \emph{critical case} $P=c_s^2$.
Heuristically, we expect \emph{general} solutions of the Euler equations on spacetimes with Kasner big bang asymptotics, especially solutions without symmetries, to agree with these formulas pointwise asymptotics as $t\to 0$, possibly except at special points where $\hat u_{*,i}=0$. In fact, one can check that when we impose no symmetry restrictions, the general equations \eqref{eq:PDEgen.transf.nonsymm.N} are formally consistent with the just derived asymptotics in the sense that any error terms, that arise when we plug the above asymptotic formulas into these equations and take into account all cancellations, are small in the limit $t\searrow 0$ in comparison to the size of the time derivative and $1/t$-terms in the equations. 
Anticipating one of our later discussions, this is the case irrespective of the size of the positive quantities $\bar p$, $q$, and $\bar q$ in Definition~\ref{def:asympKasner}.  

Given \eqref{eq:CMCgauge}, \eqref{eq:alphaasympt} and the general condition $c_s^2<1$, the fluid density blows up more slowly than the square of the mean curvature $H^2\sim t^2$ in \emph{all} the cases above.  This suggests that fluids are consistent with the \emph{matter does not matter} motto -- irrespective of the regime. This has been well known in the asymptotically non-tilted case; see especially \cite{BeyerLeLoch:2017, BeyerLeLoch:2017,beyerStabilityFLRWSolutions2023}.

For this whole paper, we restrict to the  asymptotically extremely-tilted regime in the case that the fluid variable $u_i$, and especially the component $\hat h_i{}^j u_j$ vanishes nowhere on $\Sigma$. In the more general situation with special points at which $\hat u_{*,i}$ vanishes, we expect more delicate asymptotics, similar to results in \cite{oliynyk2024,beyer2023a} for the expanding case.  Numerical investigations of the corresponding \emph{tilt instability} near the big bang can be found in \cite{beyer2024a}.

We have seen that the system \eqref{eq:PDEgen.transf.nonsymm.N} is symmetrisable and \eqref{eq:variables1.b0.N} is a symmetriser. This implies, given certain further restrictions, that the local-in-time initial value problem of \eqref{eq:PDEgen.transf.nonsymm.N} is  well-posed for fluids on backgrounds with Kasner big bang asymptotics. In the asymptotically extremely-tilted case, however, the heuristic results above suggest that the fluid solutions are driven towards the boundary of where the fluid model is valid as $|\nu|\rightarrow 1$. As a consequence the symmetriser \eqref{eq:variables1.b0.N} is unbounded in the limit $t\rightarrow 0$.   The proofs of our two main theorems about the asymptotically extremely-tilted regime, Theorems~\ref{thm:Euler1} and \ref{thm:Euler2}, which we discuss next in Sections~\ref{sec:thm1} and \ref{sec:thm2},  rely on distinct ways to construct symmetrisers that are uniform positive definite even at the borderline of validity of the fluid model. 

\section{The asymptotically extremely-tilted fluid regime near the big bang}
\label{sec:thm1}
\subsection{The first main theorem}
\label{subsec:thm1}
In this section we present our first main theorem regarding the big bang dynamics of solutions of the Euler equations for a perfect fluid with linear equation of state \eqref{eq:EOS} on a spacetime $(M,g_{\mu\nu})$ with Kasner big bang asymptotics as in Definition~\ref{def:asympKasner} in the \emph{asymptotically extremely-tilted regime} $0<c_s^2<P<1$. 
We solve the Euler equations as an initial value problem launching from an initial time $t=T_0>0$ in the future of the big bang with initial data $\rho_{\mathbf{0}}>0$  (density) and $\nu_{\mathbf{0},i}$ (tilt vector) evolving to the past towards the big bang at $t=0$. The only restrictions (apart from regularity assumptions) on the initial data we are making here is that the initial tilt vector and density are nowhere vanishing. Beyond this, however, we require no smallness assumption for the initial data. 

\begin{thm}
    \label{thm:Euler1}
    Let $(M,g_{\mu\nu})$ be a spacetime with Kasner big bang asymptotics as in Definition~\ref{def:asympKasner}. Pick $\mathcal P_0>0$ and $c_s^2\in (0,1)$ such that the largest eigenvalue $P$ of the asymptotic rescaled Weingarten map $\Ktt_i{}^j$ in  Definition~\ref{def:asympKasner} satisfies
    \begin{equation}
        \label{eq:theorem1restr}
        0<\frac{P(x)-c_s^2}{1-c_s^2}<\min\{1,\bar p(x), q(x), \bar q(x)\}
    \end{equation}
    at each point $x$ in $\Sigma$. 
    Consider the class of fluid initial data $\nu_{\mathbf{0},i}\in H^{k}(\Sigma)$ and $\rho_{\mathbf{0}}\in H^{k}(\Sigma)$ with $k>N/2+3$ such that $|\nu_\mathbf{0}|^2_h<1$ and neither $\rho_{\mathbf{0}}$ nor $\hat h_i{}^j\nu_{\mathbf{0},j}$ vanish anywhere on $\Sigma$. 
    
    Given arbitrary fluid initial data in this class, the Cauchy problem of the Euler equations for a perfect fluid with equation of state \eqref{eq:EOS} and speed of sound parameter $c_s^2$ launched from $t=T_0>0$ with $T_0$ sufficiently small has a unique global classical solution  $\rho,\Gamma,\nu_i\in C^0((0,T_0],H^{k}(\Sigma))\cap C^1((0,T_0],H^{k-1}(\Sigma))\cap L^\infty((0,T_0],H^{k-2}(\Sigma))$ with $\rho>0$ and $0<\nu_i\nu^i<1$ on $(0,T_0]\times\Sigma$.
    There exist a non-negative map $u_*\in H^{k-2}(\Sigma)$ and a non-vanishing map $\hat u_{*,i}\in H^{k-2}(\Sigma)$ with $\check h_i{}^j \hat u_{*,j}=0$ and a constant $\kappa>0$ such that
    \begin{equation}
        \label{eq:thm1.rhoestimate}
        \Bnorm{t^{(1+c_s^2)\frac{1-P}{1-c_s^2}}\rho-T_0^{(1+c_s^2)\frac{1-P}{1-c_s^2}} \rho_\mathbf{0}(1+u_*)^{(1+c_s^2)/c_s^2}}_{H^{k-2}(\Sigma)}\lesssim t^\kappa,
    \end{equation}
    \begin{equation}
        \label{eq:thm1.Lorentzestimate}
        \Bnorm{t^{(P-c_s^2)/(1-c_s^2)}\Gamma-T_0^{(P-c_s^2)/(1-c_s^2)}\frac{\sqrt{\hat h^{kl}(\nu_{\mathbf{0},k}+\hat u_{*,k})(\nu_{\mathbf{0},l}+\hat u_{*,l})}}{\sqrt{1-|\nu_\mathbf{0}|^2}(1+u_*)}}_{H^{k-2}(\Sigma)}
        \lesssim t^\kappa,
    \end{equation}
    where $\nu_{\mathbf{0},l}+\hat u_{*,l}$ is non-vanishing on $\Sigma$,
    and
    \begin{equation}
        \label{eq:thm1.tiltestimate}
        \Bnorm{\nu_i(t)-\hat h_i{}^j\frac{ \nu_{\mathbf{0},j}+\hat u_{*,j}}{\sqrt{\hat h^{kl}(\nu_{\mathbf{0},k}+\hat u_{*,k})(\nu_{\mathbf{0},l}+\hat u_{*,l})}}}_{H^{k-2}(\Sigma)}
        \lesssim t^\kappa,
    \end{equation}
    for all $t\in (0,T_0]$. Thus the fluid represented by this solution is \emph{asymptotically extremely-tilted} near the big bang, in the sense that the Lorentz factor $\Gamma$ diverges and the tilt vector field $\nu_i$ converges to a unit vector in the eigenspace of the largest eigenvalue $P$ of $\Ktt_i{}^j$ at each point in $\Sigma$ in the limit $t\searrow 0$. The implicit constants here depend on the spacetime, the value of $c_s^2$ and the fluid initial data, but in particular not on $T_0$.
\end{thm}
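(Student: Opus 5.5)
The plan is to reformulate the Euler equations as a Fuchsian symmetric hyperbolic system on $(0,T_0]\times\Sigma$ whose symmetriser stays uniformly positive definite as $t\searrow 0$. We then invoke the global existence theory for Fuchsian systems, which is launched from small $T_0$ and involves no smallness condition on the data, in the form used in \cite{BeyerOliynyk:2020,beyerStabilityFLRWSolutions2023}.

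The starting point is the variables $X=(u,u_k)$ of \eqref{eq:Xdef} with $\Ptt=P$ and $\Ltt=(P-c_s^2)/(1-c_s^2)$ as in \eqref{eq:Lttchoice.N}. The heuristics of Section~\ref{sec:convvariables.N} suggest three renormalisations:
\begin{enumerate}
\item[(i)] use $u$ itself, or rather $u$ relative to its initial value;
\item[(ii)] use the $P$-eigenspace part $\hat u_i=\hat h_i{}^ju_j$, which should converge to a nonzero limit;
\item[(iii)] use the complement $\check u_i=\check h_i{}^ju_j$, rescaled so that it decays.
\end{enumerate}
The unbounded symmetriser \eqref{eq:variables1.b0.N} is $t^{-2\Ltt}P^{jk}$. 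Its blow-up occurs only in directions orthogonal to $u_j$, and there it is of size $t^{-2\Ltt}|u|_h^2$. I would therefore apply a further, direction-dependent rescaling. The component of $u_k$ along $u$ is kept, and the transverse components are multiplied by $t^{-\Ltt}$, for instance by a nonlinear change of variables through the framework of Section~\ref{sec:nonlineartransf}. Concretely, I would work with the variables $\nu_i$ and $\Gamma$-type quantities, $w=t^{\Ltt}u/|u|_h$ together with the unit direction $u_i/|u|_h$. The resulting principal symbol should then take the form $\mathtt b^0\partial_t+t^{-1}\mathcal B$ with $\mathtt b^0$ uniformly positive definite.

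We then verify the Fuchsian structure. Using Definition~\ref{def:asympKasner}, we split every term as a leading $t^{-1}$ part involving $\Ktt$, $\check L$ and $P$, plus remainders. The leading part acts on $\check u$ through $\check L\ge 0$, which together with the sign of $\Ltt$ gives the required positivity of $\mathcal B$. The remainders come with powers $t^{-1+q}$ (lapse, $K_*$, $\Omega$, $\dot n$), $t^{-1+\bar q}$ (connection coefficients $C$), and $t^{-1+\bar p}$ (frame derivatives, since $\Dc_l$ carries $e_i=t^{-1+\bar p}\tau b$). After the $t^{-\Ltt}$ rescaling, these become $t^{-1+q-\Ltt}$ and similar. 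This is exactly where condition \eqref{eq:theorem1restr}, $\Ltt<\min\{1,\bar p,q,\bar q\}$, is used: it makes all such terms integrable, of the form $t^{-1+\epsilon}$.

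The $\log t$ terms involving $\D P$ and $\D\Ltt$ in \eqref{eq:variables1.g.NNN} are absorbed by the same margin. Having done so, we apply the Fuchsian global existence theorem in $H^k$ with $k>N/2+3$. The large-data aspect is handled by the fact that the $t^{-1+\epsilon}$ coefficients have small integral on $(0,T_0]$ when $T_0$ is small. The extra derivatives beyond $N/2+1$ are lost when we pass to the limit estimates in $H^{k-2}$. The Fuchsian theorem also yields a limit $t\to0$ of the renormalised variables at rate $t^\kappa$. From it we define $u_*$ and $\hat u_*$ as the difference between the limits and the initial values, and we read off \eqref{eq:thm1.rhoestimate}--\eqref{eq:thm1.tiltestimate} from \eqref{eq:Gammanu.Var2}. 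The nonvanishing of $\hat h\nu_{\mathbf 0}$ is propagated because $\hat u$ is nearly constant, with variation $O(T_0^\kappa)$.

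The main obstacle is the second step. We need a variable choice and symmetriser that are uniformly positive definite and bounded. At the same time, the singular term $\mathcal B$ must remain positive semi-definite, relative to the symmetriser, for all large data and not merely near a reference solution. I would first try the block-diagonal symmetriser obtained from \eqref{eq:variables1.b0.N} by the conjugation above and check positivity of $\mathcal B$ by a direct computation. If cross terms spoil this, I would add a $t$-dependent weight between the $\hat u$ and $\check u$ blocks. The condition $\Ltt<1$ is expected to enter here.
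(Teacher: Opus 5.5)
\textbf{There is a genuine gap in your second step, and it is not caused by cross terms.} Your proposal keeps the component of $u_k$ along $u$ and multiplies the transverse directions by $t^{-\Ltt}$. Up to how the unit direction is parametrised, this is the paper's $Z$-system \eqref{eq:Zevol} (or the $\check Z$-system of Section~\ref{sec:proof2}).

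That system is indeed uniformly symmetrisable, with the bounded symmetriser \eqref{eq:Zsymmetriser}. However, the singular coefficient of the rescaled direction variables is $-\Ltt h_j{}^k+\check L_j{}^k+O(t^{2\Ltt})$, see \eqref{eq:Hsc}. So the sign of $\Ltt$ works against you, not for you:
\begin{itemize}
\item On the $P$-eigenspace $\check L$ vanishes, so the eigenvalue there is $-\Ltt<0$.
\item On the complement, positivity requires the gap condition $P-\check P>\Ltt$.
\end{itemize}
Theorem~\ref{thm:Euler2} avoids the first problem only by eliminating the $\hat\kappa$-component through the normalisation \eqref{eq:znormalisationconstr}, which needs a one-dimensional eigenspace. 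It imposes the second as a hypothesis. Neither is available under the hypotheses of Theorem~\ref{thm:Euler1}, which even allows $\check L=0$.

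A $t$-dependent weight coupling the $\hat u$- and $\check u$-blocks cannot repair a negative diagonal eigenvalue. Multiplying the offending components by $t^{\Ltt+\epsilon}$ restores the sign but makes the symmetriser unbounded again. At the level of undifferentiated unknowns, you cannot have uniform symmetrisability and a good singular term at once.

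\textbf{The missing idea is the two-tier system \eqref{eq:Wevol} of Proposition~\ref{prop:Fuchsiansystem}.}
\begin{itemize}
\item The undifferentiated unknowns are kept as $U=(u,\hat u_i,\check u_i)$, whose singular term $\Gsc\Pi U$ has the right sign.
\item Their equations need no symmetriser. Every $\Dc_{\bar l}U$ is rewritten via \eqref{eq:DUZbold} as $t^{-\Ltt-\epsilon}\Phi_0\mathbf Z_{\bar l}+\Phi_{\bar l}$, so the $U$-block carries no spatial derivatives.
\item The top-order unknowns are $\mathbf Z_{\bar l}=t^{\Ltt+\epsilon}\Dc_{\bar l}Z$. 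The weight is a scalar on all of $\Dc Z$, so the symmetriser $\Stt$ is unchanged. Meanwhile the singular coefficient shifts by $\Ltt+\epsilon$, giving $\epsilon h+\check L$ in the $z_k$-block.
\item The price is the factor $t^{-1+\bar p-\Ltt-\epsilon}$ in the $U$-equations, which is one place where $\Ltt<\bar p$ is needed. One derivative is also lost, hence $k>N/2+3$ and conclusions only in $H^{k-2}$.
\end{itemize}

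\textbf{Your large-data mechanism also needs repair.} The Fuchsian theorem of \cite{BeyerOliynyk:2020} is a small-data result, and shrinking $T_0$ does not make the nonlinear $t^{-1}$-terms small. The paper instead subtracts the explicit solution $\mathring W$ of the truncated equations \eqref{eq:backgroundODE} and applies the theorem to the remainder, which has zero data.

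It then has to check that all constants are independent of $T_0$, even though $\partial_t\mathring W$ is only $O(t^{-1})$ for the decaying components. This is why $z$ is carried as an extra unknown: $A^0$ then depends only on components of $\mathring W$ that are constant in time.
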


The proof of Theorem~\ref{thm:Euler1} is discussed in Section~\ref{sec:proof1}, preceded by a detailed derivation in Sections~\ref{sec:variablescont.N} and \ref{sec:FuchsianPDE.N} of PDE systems equivalent to the Euler equations which are amenable to the central global-in-time Fuchsian analysis of our proof. Before we proceed with this discussion, however, we make a number of comments and remarks about Theorem~\ref{thm:Euler1}. First and foremost, Theorem~\ref{thm:Euler1} is consistent with the heuristics in Section~\ref{sec:convvariables.N} in that it rigorously confirms that a large class of fluid solutions in the regime $0<c_s^2<P<1$ is asymptotically extremely-tilted provided the restriction \eqref{eq:theorem1restr} holds. We observe that the restriction \eqref{eq:theorem1restr} trivially implies that $0<c_s^2<P<1$. As expected from the heuristics, the fluid particles approach the speed of light relative to observers at rest as indicated in particular by the divergence of the relativistic Lorentz factor $\Gamma$. The tilt vector converges to a unit vector in the eigenspace of the largest eigenvalue $P$ of $\Ktt_i{}^j$ while the orthogonal components decay.

Beyond the heuristic predictions, Theorem~\ref{thm:Euler1} reveals additional effects encapsulated in the restriction \eqref{eq:theorem1restr}.  As we will explain in detail in the following subsections, this restriction arises from considerations regarding a ``good'' PDE structure, and, via   the exponents $\bar p$, $q$ and $\bar q$ of Definition~\ref{def:asympKasner}, can be attributed to the rate of decay of terms in the Euler equations determined by the background spacetime.
We now show that \eqref{eq:theorem1restr} can be interpreted as a  \emph{lower limit} $c_*^2$ for the speed of sound parameter $c_s^2$  and therefore obtain a pointwise inequality of the form \eqref{eq:informalnonheuristic}.
To this end, fix an arbitrary point $x$ in $\Sigma$ and define the number $\Ltt_*=\min\{1,\bar p, q, \bar q\}$ there; this means that $\Ltt_*\in (0,1)$. Then, given the value $P$ at $x$, which must also be in $(0,1)$, the inequality \eqref{eq:theorem1restr} can be rearranged as an  inequality for $c_s^2$ at $x$:
\begin{equation}
    \label{eq:theorem1restr.2}
    \frac{P-\Ltt_*}{1-\Ltt_*}<c_s^2<P.
\end{equation}
Thus, in agreement with \eqref{eq:informalnonheuristic} we set
\begin{equation}
    \label{eq:theorem1restr.2.cstar}
    c_*^2=\max\bigl\{0,\frac{P-\Ltt_*}{1-\Ltt_*}\bigr\}.
\end{equation}
Because $P$ and $\Ltt_*$ are both numbers in $(0,1)$
it independently follows that $\frac{P-\Ltt_*}{1-\Ltt_*}<P$. It is easy to check that  $\frac{P-\Ltt_*}{1-\Ltt_*}$ is a monotonically decreasing function of $\Ltt_*$ and hence, the larger $\Ltt_*$ is the smaller is $c_*^2$. 
If we also take into account the condition $\bar p=1-P$ found at the end of Section~\ref{sec:background.N}, we have $\Ltt_*\le \bar p=1-P$. Given the before mentioned  monotonicity of the left side of \eqref{eq:theorem1restr.2}, we obtain a necessary (but in general not sufficient) restriction for $c_s^2$ by replacing  $\Ltt_*$ by $1-P$ in \eqref{eq:theorem1restr.2}: 
\begin{equation}
    \label{eq:theorem1restr.3}
    \frac{2P-1}P<c_s^2<P.
\end{equation}
Here we observe that the inequality $\frac{2P-1}P<P$ holds independently with $P$ in the range $(0,1)$.
In any case, it is conceivable (but not implied by the theorem) that the dynamics of the solutions differs significantly from the asymptotically extremely-tilted heuristics if \eqref{eq:theorem1restr.2} is violated.

We would further like to emphasise the following distinctive property of Theorem~\ref{thm:Euler1}, namely that it makes no ``small data'' restrictions on the fluid initial data and, especially, no restriction on the size of their spatial derivatives. In our proof, this so-called \emph{reference independence} is achieved by applying a standard \emph{small-data} global existence Fuchsian theorem to the Euler equations once the main singular leading-order fluid contributions have been subtracted off the fluid variables. These leading-order contributions are defined as solutions of so-called \emph{truncated Euler equations}, defined by only keeping those terms from the full Euler equations that we expect to dominate the asymptotics near $t=0$. In our case here, these truncated equations turn out to be so simple that they can be solved explicitly, and it is easy to see that resulting error terms for the full Euler equations can be controlled by adjusting the size of $T_0$. This is why Theorem~\ref{thm:Euler1} hypothesises $T_0$ to be small. All of this is in contrast to more standard nonlinear stability approaches where instead some given reference solution (often a spatially homogeneous solution) is subtracted off the variables. The advantage of such a \emph{reference-dependent} approach is that it is (at least in principle) not necessary to assume that $T_0$ is small to control the size of error terms. The obvious disadvantage is, however, that the class of solutions covered by a corresponding theorem is restricted to some near-reference setting. In any case, for completeness we note that a theorem analogous to Theorem~\ref{thm:Euler1} can be established in the near-spatially homogeneous setting where $T_0$ is allowed to be arbitrary large in principle. All of this makes sense: shock formation is a phenomenon expected for most fluid solutions on cosmological spacetimes, and the only conceivable ways to ``delay shock for long enough'' to reach the big bang singularity are to either assume that the solutions are ``sufficiently spatially homogeneous'' or that the initial time $T_0$ is sufficiently close to the final time $t=0$. 

As one last remark, we point out that the estimates \eqref{eq:thm1.rhoestimate} -- \eqref{eq:thm1.tiltestimate} could be improved in the sense that one could be a bit more precise about the constant $\kappa$. However, this would in general amount to further restrictions regarding the Kasner exponents as well as for the decay exponents  $q$, $\bar q$ and $\bar p$ in Definition~\ref{def:asympKasner}.

\subsection{Euler PDE systems for the asymptotically extremely-tilted case}
\label{sec:variablescont.N}

In this subsection we now continue the discussion we started in Section~\ref{sec:convvariables.N} with the view to deriving ``good'' PDEs to prove Theorem~\ref{thm:Euler1}. To this end 
we continue to assume that a spacetime with Kasner big bang asymptotics is given (see Definition~\ref{def:asympKasner}), and specialise to the asymptotically extremely-tilted regime $0<c_s^2<P<1$; recall that $P$ is in general a function on $\Sigma$, and hence we impose this inequality at each point of $\Sigma$.
As motived by the heuristics in the previous subsection we now make the choice
\begin{equation}
    \label{eq:parameterchoices}
    \Ptt=P \quad\text{and}\quad
    \Ltt=(P-c_s^2)/(1-c_s^2)
\end{equation} 
at each point of $\Sigma$.
Motivated by the heuristics discussed in Section~\ref{sec:convvariables.N} we set up our variables based on the intuition that $\Vtm$ and $\hat\Vsp_i=\hat h_i{}^j \Vsp_j$ converge to non-vanishing limits and that $\check\Vsp_i=\check h_i{}^j u_j$ decays to zero at $t=0$. To this end, in order to be keep track of the different asymptotics of the two parts $\hat\Vsp_i$ and $\check\Vsp_i$ of $\Vsp_i$, it is useful to make a variable transformation (in the sense of Section~\ref{sec:nonlineartransf}) that defines these as independent variables:
\begin{equation}
    \label{U2X}
    X=\begin{pmatrix}
        u\\
        u_j
    \end{pmatrix}
    =S U,\quad
    S= \begin{pmatrix}
    1 & 0 & 0\\
    0 & \delta_j{}^k & \delta_j{}^k
    \end{pmatrix},\quad 
    U=\begin{pmatrix}
        u\\
        \hat u_k\\
        \check u_k
    \end{pmatrix}.
\end{equation} 
The map
\begin{equation}
    \label{X2U}
    U=\begin{pmatrix}
        u\\
        \hat u_j\\
        \check u_j
    \end{pmatrix}
    =T X,\quad
    T=\begin{pmatrix}
        1 & 0\\
        0 & \hat h_j{}^k\\
        0 & \check h_j{}^k
    \end{pmatrix},
    \quad
    X=
    \begin{pmatrix}
        u\\
        u_k
    \end{pmatrix},
\end{equation} 
is the inverse of \eqref{U2X} if and only if 
\begin{equation}
    \label{eq:Uconsist}
    \check h_i{}^k\hat u_k=0,\quad \hat h_i{}^k\check u_k=0.
\end{equation}

In the following it is convenient to allow covectors $\hat u_k$ and $\check u_k$ that possibly violate \eqref{eq:Uconsist}. This avoids difficulties caused by the fact that the projection map $\hat h_i{}^j$ and its orthogonal complement $\check h_{i}{}^j$ are in general space-dependent, which clashes with some of the hypotheses of the Fuchsian theorem used to prove Theorem~\ref{thm:Euler1}. When we treat $\hat u_k$ and $\check u_k$ as independent, the dimension of the unknown variable vector is ``enlarged'': the dimension of $U$ exceeds that of the original unknown $X$.
As a consequence, \eqref{U2X} may fail to be invertible, and the evolution equations \eqref{eq:PDEgen.transf.nonsymm.N} for $X$ do in general not determine the full vector $U$ uniquely unless the \emph{consistency condition} \eqref{eq:Uconsist} holds. One of the main points of Section~\ref{sec:nonlineartransf} is how to deal with precisely this situation; cf.\ especially the discussion of \eqref{eq:VariableTrafo.abstr.ext.1} -- \eqref{eq:VariableTrafo.abstr.ext.2}. To be more precise
we now interpret $U$ as a $\Rbb^{1+2N}$-valued function of the form 
\[U=\begin{pmatrix}
        u\\
        \hat u_j\\
        \check u_j
    \end{pmatrix},\]
related to the previous $\Rbb^{1+N}$-valued function $X$ in \eqref{U2X} by the variable transformation \eqref{U2X}.  In the language of Section~\ref{sec:nonlineartransf}, this variable transformation implies a submersion with $s=1+2N$, $\bar s=\sigma=1+N$. The kernel of its Jacobian $S$ is an $N$ dimensional subspace of $\Rbb^s$, and the map
\begin{equation}
    \label{eq:piU}
    \pi=\mathrm{diag}\left(1,\hat h_j{}^k,\check h_j{}^k\right)
\end{equation}
projects $\Rbb^{s}$-vectors onto the complement of the kernel in $\Rbb^{s}$. Defining a $\Rbb^N$-valued map $\mathbf F$ as
\begin{equation}
    \mathbf F(t,x,U)=(\check h_i{}^k\hat u_k, \hat h_i{}^k\check u_k),
\end{equation}  
we easily verify the kernel condition \eqref{eq:kernelproperty} and identify the consistency condition \eqref{eq:Uconsist} with \eqref{eq:VariableTrafo.abstr.ext.2}. 
In the following we use the shorthand notation (irrespective of whether \eqref{eq:Uconsist} is violated or not)
\begin{equation}
    \label{eq:ushorthand.Pre}
    u_i=\hat u_i+\check u_i, \quad |u|_h^2=u_i u^i.
\end{equation}
Following the discussion in Section~\ref{sec:nonlineartransf} (especially \eqref{eq:transfcoeff.1} -- \eqref{eq:transfcoeff.3}) to define a consistent enlarged PDE system from \eqref{eq:PDEgen.transf.nonsymm.N} for the enlarged variable vector $U$ that preserves the consistency condition \eqref{eq:Uconsist}, we may choose an arbitrary constant $\eta>0$ and find
\renewcommand{\NewVar}{Z}
\newcommand{\Vnm}{z}
\begin{equation}
    \partial_t U
    +\alpha \Coeff^{l} 
    \Dc_l{} U
    =\frac 1t\Gsc \Pi U+\frac 1t\mathtt G_2 
    +t^{-1+Q_1}\mathtt G, 
    \label{eq:Uevol}
\end{equation}
where
\begin{align}
&\Btt^l 
=T \mathtt b^{l} S\label{eq:BlU}\\
=&\frac{1}{\sqrt{t^{2\Ltt}\Vtm^2+\Vspnorm^2}}\left(
    \begin{smallmatrix}
    \frac{(1-c_s^2)\Vspnorm^2+(1-2c_s^2)t^{2\Ltt}\Vtm^2}{(1-c_s^2)\Vspnorm^2+t^{2\Ltt}\Vtm^2}\Vsp^l 
    & \frac{c_s^2\Vtm}{(1-c_s^2)\Vspnorm^2+t^{2\Ltt}\Vtm^2} (\hat {P}^{k l} + \hat\Vsp^k \check\Vsp^l)
    & \frac{c_s^2\Vtm}{(1-c_s^2)\Vspnorm^2+t^{2\Ltt}\Vtm^2} (\check {P}^{k l} + \check\Vsp^k \hat\Vsp^l)\\
    t^{2\Ltt}\Vtm \hat {h}_{j}{}^{l} & \hat {h}_{j}{}^{k} \Vsp^l & 0 \\
    t^{2\Ltt}\Vtm \check {h}_{j}{}^{l} & 0& \check {h}_{j}{}^{k} \Vsp^l & 
\end{smallmatrix}
\right)\notag,
\end{align}
\begin{align}
\Gsc=&\begin{pmatrix}
    1
    & 0 & 0\\
    0 & h_i{}^j & 0\\
    0 & 0 & \check L_j{}^k+\eta\hat h_j{}^k
\end{pmatrix}\label{eq:Gscdef},\\
\Pi=&\begin{pmatrix}
    0 & 0 &0\\
    0 & 0 &0\\
    0 & 0 &h_j{}^k
\end{pmatrix}\label{eq:Pi},\\
\mathtt G_2 
    =&\begin{pmatrix}
    -\frac{c_s^2 \check L^{ll'}\check\Vsp_l\check\Vsp_{l'}}{t^{2\Ltt}\Vtm^2+(1-c_s^2) \Vspnorm^2}\Vtm\\
    0\\
    0
\end{pmatrix}.\label{eq:G2def}
\end{align}
A lengthy calculation would allow us to construct the function $t^{-1+Q_1}\mathtt G$ explicitly from \eqref{eq:PDEgen.transf.nonsymm.N} with \eqref{eq:variables1.bL.NNN} -- \eqref{eq:defP} together with \eqref{eq:transfcoeff.1} -- \eqref{eq:transfcoeff.3}. For the purposes of the paper only the following two observations are relevant: First, we have
\begin{equation}
    \label{eq:piperpG}
    \pi^\perp \mathtt G=0;
\end{equation}
cf.\ \eqref{eq:piU}.
Second, given any smooth function $Q_1:\Sigma\rightarrow\Rbb$ which satisfies the bound
\begin{equation}
    \label{eq:Q1inequal.1}
    0<Q_1<\min\{\bar p,q,\bar q,2\Ltt\}
\end{equation}
at each point of $\Sigma$, the function $\mathtt G$ in \eqref{eq:Uevol}
can be identified with a uniformly bounded polynomial in the sense of Section~\ref{sec:analysisbg.N} for
\begin{equation}
    \chi(t,x,U)=\bigl(u, \hat u_i, \check u_i, \sqrt{u^2 t^{2 \Ltt}+u_i u^i}, 1/\sqrt{u^2 t^{2 \Ltt}+u_i u^i}, 1/(u^2 t^{2 \Ltt}+(1-c_s^2)u_i u^i)\bigr)^T,
\end{equation}
and 
\begin{equation}
    \Omega=\left\{U\in\Rbb^{1+2N} \,\left|\, u>u_*, \,\hat u_{i}\hat u^i>\hat u_{*i}\hat u_{*}^i\right.\right\}
\end{equation}
given any constant $u_*>0$ and $\hat u_{*i}$ in $\Rbb^{N}$ with $\hat u_{*i}\hat u_{*}^i>0$. 
This observation follows directly from the assumptions in Definition~\ref{def:asympKasner}, especially:
\begin{enumerate}[(i)]
    \item the quantity $\bar p$ in \eqref{eq:Q1inequal.1} arises from \eqref{eq:framebound} and from the terms in \eqref{eq:variables1.g.NNN} proportional to $\Dc_p P$ and $\Dc_p\Ltt$,
    \item the quantity $q$ comes from writing $k_{ij}$ in \eqref{eq:variables1.g.NNN} as in \eqref{eq:defK}, \eqref{eq:Kttdef} and \eqref{eq:CMCgauge}, together with the $\Omega_j{}^m$-term in \eqref{eq:variables1.g.NNN},
    \item $\bar q$ arises from the $C_l{}^m{}_j$-term in \eqref{eq:variables1.g.NNN}, cf.\ \eqref{eq:timeindepspatconn.1} -- \eqref{eq:timeindepspatconn.3},
    \item $2\Ltt$ appears in \eqref{eq:Q1inequal.1} because we absorb all terms proportional to $t^{-1+2\Ltt}$ in \eqref{eq:variables1.g.NNN} into the function $\mathtt G$.
\end{enumerate}
These arguments also use \eqref{eq:alphaasympt}, \eqref{eq:ndotbound} and \eqref{eq:Kttspecdecomp}.

We emphasise that following Section~\ref{sec:nonlineartransf}, especially \eqref{eq:transfcoeff.1} -- \eqref{eq:transfcoeff.3}, we introduce the positive term $\eta\,\hat h_j{}^k$ in the definition of \eqref{eq:Gscdef}, and, as consequence especially of \eqref{eq:piperpG}, we have
\begin{equation}
    \label{eq:PDEgen.transf.nonsymm.NN.off.2}
    \partial_t (\check h_j{}^k \hat u_k)=0, \quad \partial_t (\hat h_j{}^k \check u_k)=\frac 1t \eta \hat h_j{}^k \check u_k.
\end{equation}
By this, the kernel condition \eqref{eq:Uconsist}
is preserved by \eqref{eq:Uevol}, and any solution $U$ of \eqref{eq:Uevol} which satisfies \eqref{eq:Uconsist} at the initial time therefore yields an equivalent solution $X$ of \eqref{eq:PDEgen.transf.nonsymm.N} via \eqref{U2X}. Vice versa any solution $X$ of \eqref{eq:PDEgen.transf.nonsymm.N} yields an equivalent solution $U$ of \eqref{eq:Uevol} via \eqref{X2U} which satisfies \eqref{eq:Uconsist}. Even more, however, by the positivity of $\eta$, the quantity $\hat h_j{}^k \check u_k$ decays towards $t=0$, and in fact, by the positivity of $\check L_j{}^k+\eta\hat h_j{}^k$, the full independent variable $\check u_i$ is a \emph{decaying variable} (in the language used by the Fuchsian method below).

We have written \eqref{eq:Uevol} in non-symmetrised form for simplicity, but note that
the symmetriser for this system derived from \eqref{eq:variables1.b0.N} following Section~\ref{sec:nonlineartransf} is
\begin{equation}
    \begin{pmatrix}
        \frac{t^{2\Ltt}\Vtm^2+(1 - c_s^2) \Vspnorm^2}{c_s^2} & 0 & 0\\
        0 & \hat h^{jk}+t^{-2\Ltt } \hat h_{j'}{}^{j}P^{j'k'} \hat h_{k'}{}^{k} & t^{-2\Ltt } \hat h_{j'}{}^{j}P^{j'k'} h_{k'}{}^{k}\\
        0 & t^{-2\Ltt } \check h_{j'}{}^{j}P^{j'k'} \hat h_{k'}{}^{k} & \check h^{jk}+t^{-2\Ltt } \check h_{j'}{}^{j}P^{j'k'} h_{k'}{}^{k}
    \end{pmatrix}.
\end{equation}
This symmetriser inherits the all the properties from \eqref{eq:variables1.b0.N} -- especially its problems. The variable vector $U$ is therefore not suitable for our global-in-time analysis.

We next consider a transformation to yet another variable vector $Z$. This new variable vector $Z$ is defined via the following variable transformation from the original vector $X$ defined in \eqref{eq:Xdef}
\begin{equation}
    \label{Z2U}
    X=\tilde\Phi (t,x,\NewVar)=\begin{pmatrix}
        \Vtm\\
        t^{\Ltt} z z_j 
    \end{pmatrix},\quad \NewVar=\begin{pmatrix}
        \Vtm\\\Vnm\\{\Vnm}_k
    \end{pmatrix},
\end{equation}
where $\Ltt$ is the function given by \eqref{eq:parameterchoices}. We restrict this transformation to the case $u,z>0$ as we explain below. 
The map
\begin{equation}
    \label{U2Z}
    Z=\tilde\Phi^{-1} (t,x,X)=\begin{pmatrix}
        \Vtm\\
        |u|_h\\
        t^{-\Ltt} \frac{u_i}{|u|_h}
    \end{pmatrix},
\end{equation}
is the inverse if
\begin{equation}
    \label{eq:znormalisationconstr}
    z_k z^k-t^{-2\Ltt}=0.
\end{equation} 

However, for similar reasons as discussed for the transformation between $X$ and $U$ above, it is convenient to generally allow $z_k$ to violate \eqref{eq:znormalisationconstr}, and, in accordance with the discussion in Section~\ref{sec:nonlineartransf} to consider the variable transformation \eqref{Z2U} in the submersion case with $s=2+N$ and $\sigma=\bar s=1+N$ with \eqref{eq:znormalisationconstr} represented by the $\Rbb$-valued map 
\begin{equation}
    \mathbf F(t,x,Z)=z_k z^k-t^{-2\Ltt},
\end{equation} 
which allows us to identify \eqref{eq:znormalisationconstr} with the consistency condition \eqref{eq:VariableTrafo.abstr.ext.2}. We can think of $Z$ again as ``enlarged'' variable vector. It is straightforward to verify the kernel property \eqref{eq:kernelproperty}. Following the discussion in Section~\ref{sec:nonlineartransf} (especially \eqref{eq:transfcoeff.1} -- \eqref{eq:transfcoeff.3}) to define a consistent enlarged PDE system from \eqref{eq:PDEgen.transf.nonsymm.N} for the enlarged variable vector $Z$ that preserves the consistency condition \eqref{eq:znormalisationconstr}
exploiting the fact that $\ker \mathbf D\tilde\Phi$ is one dimensional, one can find the following surprisingly simple PDE system, which is well-defined for $u,z>0$, written in non-symmetrised form as
\begin{align}
    \partial_t Z&
    +\alpha \Ctt^{l}(t,x,Z)\Dc_l{} Z
    =\frac 1t\Hsc(t,x,Z) Z+t^{-1+Q_2}\Htt(t,x,Z),\label{eq:Zevol}\\
    \Ctt^l(t,x,Z)
    &=\frac 1{\sqrt{u^2 t^{2 \Ltt}+z^2}}\left(
    \begin{smallmatrix}
    \frac{  (1 -2 c_s^2) u^2 t^{2 \Ltt}+(1-c_2^2) z^2}{u^2 t^{2 \Ltt}+(1-c_s^2)z^2} t^{\Ltt} z z^l
    & \frac{c_s^2 u^3 t^{3 \Ltt} z^l}{u^2 t^{2 \Ltt}+(1-c_s^2)z^2} 
    & \frac{c_s^2 t^{\Ltt} u z   (u^2 t^{2 \Ltt}+z^2) (h^{lk}-t^{2\Ltt}z^l z^k)}{u^2 t^{2 \Ltt}+(1-c_s^2)z^2}  \\
    u t^{3 \Ltt} z^l  &  t^{\Ltt} z z^l  & 0 \\
    \frac{u t^{\Ltt} \bigl(h_j{}^l-t^{2\Ltt} z_j z^l\bigr)}{z } & 0 & {z t^{\Ltt} z^l h_j{}^k} \\
    \end{smallmatrix}
    \right),\label{eq:Ctt}\\
    \Hsc(t,x,Z)&=\frac {(1-c_s^2)z^2}{u^2 t^{2\Ltt}+(1-c_s^2)z^2}
    \begin{pmatrix}
    0 & 0 & -t^{2 \Ltt} \frac{c_s^2}{1-c_s^2}    u {\check L}^{kq}z_{q} \\
    0 & 0 & t^{2 \Ltt}    z {\check L}^{kq} z_{q} \\
    0 & 0 & -\Ltt h_{j}{}^{k}+{\check L}_{j}{}^{k}
    -t^{2 \Ltt}\frac{ h_{j}{}^{k} \left( \Ltt u^2+(1-c_s^2) z^2 {\check L}^{rq} z_{q} z_{r}\right)-u^2 {\check L}_{j}{}^{k}}{(1-c_s^2) z^2}
    \end{pmatrix}\label{eq:Hsc}.
\end{align}
It is crucial for the subsequent analysis to impose conditions for the function $Q_2$ which allow us to
identify $\Htt$ with a uniformly bounded polynomial in the sense of Section~\ref{sec:analysisbg.N} 
for 
\begin{equation}
    \label{eq:chiZ}
    \chi(t,x,Z)=\bigl(u, z, z^{-1}, t^{\Ltt} z z_i, \sqrt{u^2 t^{2 \Ltt}+z^2}, (u^2 t^{2\Ltt}+(1-c_s^2)z^2)^{-1}\bigr)^T,
\end{equation}
and 
\begin{equation}
    \Omega=\left\{Z\in\Rbb^{2+N} \,\left|\, u>u_*, z>z_*\right.\right\}
\end{equation}
given any constants $z_*,u_*>0$. A straightforward (but lengthy) analysis reveals that this is the case if $Q_2:\Sigma\rightarrow\Rbb$ is any smooth function and satisfies the bound
\begin{equation}
    \label{eq:Q2inequal.1}
    0<Q_2<\min\{\Ltt,\bar p, q, \bar p+q-\Ltt, \bar p-\Ltt, q-\Ltt, \bar q-\Ltt\}
    =\min\{\Ltt, \bar p-\Ltt,q-\Ltt, \bar q-\Ltt\},
\end{equation}
everywhere on $\Sigma$. Here, in particular:
\begin{enumerate}[(i)]
    \item the quantity $\bar p$ arises n \eqref{eq:Q2inequal.1} from \eqref{eq:framebound} and from the terms proportional to $\Dc_p P$ and $\Dc_p\Ltt$,
    \item $q$ from expressing $k_{ij}$ via \eqref{eq:defK}, \eqref{eq:Kttdef} and \eqref{eq:CMCgauge},
    \item $q-\Ltt$ from the $\Omega_j{}^m$-term. Similarly, $\bar q-\Ltt$ comes from the $C_l{}^m{}_j$-term, cf.\ \eqref{eq:timeindepspatconn.1} -- \eqref{eq:timeindepspatconn.3},
    \item $\bar p+q-\Ltt$ from a term proportional to $t^{-\Ltt}\dot n_i$,
    \item $\bar p-\Ltt$ from a term proportional to $z^l\Dc_l P$,
    \item $\Ltt$ appears in \eqref{eq:Q2inequal.1} because we absorb all terms proportional to $t^{-1+\Ltt}$ (and $t$-powers larger than $-1+\Ltt$) into the function $\mathtt H$.
\end{enumerate}
As before, these conclusions also make use of \eqref{eq:alphaasympt}, \eqref{eq:ndotbound} and \eqref{eq:Kttspecdecomp}.

A particularly important property of this PDE system for $Z$ is that, as long as $u,z$ are bounded away from zero, these equations are uniformly symmetrisable with \emph{diagonal} symmetriser
\begin{equation}
    \label{eq:Zsymmetriser}
    \Stt=\begin{pmatrix}
        \frac{1-c_s^2}{c_s^2}z^2+\frac{t^{2 \Ltt} u^2 }{c_s^2} & 0 & 0 \\
        0 & u^2 & 0 \\
        0 & 0 &   \left(u^2 t^{2 \Ltt}+z^2\right)z^2 h^{pj}
    \end{pmatrix},
\end{equation}
for all $t\in (0,T_0]$ and $x\in\Sigma$.
This means that this system has a well-posed local-in-time initial value problem. The main problem with \eqref{eq:Zevol}, however, is the negative definite term
$-\Ltt h_j{}^k$ on the diagonal of \eqref{eq:Hsc}. It is useful to keep in mind that $\Ltt$, as given in \eqref{eq:parameterchoices}, is always larger than $0$ (as a  consequence of the asymptotically extremely-tilted condition $P>c_s^2$), but always smaller than $1$ (as consequence of our global restriction $P<1$).

In any case, it is a consequence of Section~\ref{sec:nonlineartransf} that the consistency condition \eqref{eq:znormalisationconstr} is preserved by this system, and that the $Z$- and $X$-systems are therefore equivalent for all solutions satisfying \eqref{eq:znormalisationconstr} initially.

\subsection{The main Fuchsian Euler PDE system and its local Cauchy problem}
\label{sec:FuchsianPDE.N}

In this subsection we now construct our final system of PDEs building on the results in Section~\ref{sec:variablescont.N} which will be of Fuchsian form and will be suitable for the Fuchsian analysis performed as part of the proof of Theorem~\ref{thm:Euler1}.  The result of this subsection is summarised in the following proposition. In all of what follows we continue to use the shorthand notation \eqref{eq:ushorthand.Pre} (even when the consistency condition \eqref{eq:Uconsist} is violated), and, given any map $M(t,x,Z)$, we write $M(t,x,U)$ as a shorthand for $M(t,x,\tilde\Phi^{-1}(t,x,S U))$ (even when the consistency condition \eqref{eq:znormalisationconstr} is violated); cf.\ \eqref{U2X} and \eqref{U2Z}.

\begin{prop}
    \label{prop:Fuchsiansystem}
    Pick arbitrary constants $T_0,\epsilon,\eta>0$ and $c_s^2\in (0,1)$ as well as $u_*>0$ and $\hat u_{*i}$ in $\Rbb^{N}$ with $\hat u_{*i}\hat u_{*}^i>0$. Let $((0,T_0]\times\Sigma,g_{\mu\nu})$ be a spacetime with Kasner big bang asymptotics as in Definition~\ref{def:asympKasner} such that 
    \begin{equation}
        \label{eq:lem.restr}
        0<\min\{\Ltt,\bar p-\Ltt-\epsilon,q-\Ltt, \bar q-\Ltt\}
    \end{equation}
    at each point of $\Sigma$. 
    
    Pick arbitrary functions $u_\mathbf{0}>u_*$ and $u_{\mathbf{0},i}$ in $H^{k}(\Sigma)$ with $k>N/2+2$ and 
    \[\hat h^{ij} u_{\mathbf{0},i} u_{\mathbf{0},j}>\hat u_{*i}\hat u_{*}^i\]
    for all $x\in\Sigma$.
    Set
    \begin{align}
        \label{eq:X0}
        X_{\mathbf{0}}&=\begin{pmatrix}
            u_\mathbf{0}\\
            u_{\mathbf{0},i}
        \end{pmatrix},\\
        \label{eq:W0fromX0}
        W_{\mathbf{0}}&=\begin{pmatrix}
            \sqrt{h^{ij} u_{\mathbf{0},i} u_{\mathbf{0},j}}\\
            u_\mathbf{0}\\
            \hat h_i{}^j u_{\mathbf{0},j}\\
            \check h_i{}^j u_{\mathbf{0},j}\\
            T_0^{\Ltt+\epsilon}b_{\bar i}{}^\Lambda\Dc_\Lambda u\\
            T_0^{\Ltt+\epsilon}b_{\bar i}{}^\Lambda\Dc_\Lambda |u|_h\\
            T_0^{\Ltt+\epsilon}b_{\bar i}{}^\Lambda\Dc_\Lambda (T_0^{-\Ltt} \frac{u_i}{|u|_h})
        \end{pmatrix}.
    \end{align}
    There is $T_1\in [0,T_0)$ and a unique classical solution $X\in C^0((T_1,T_0],H^{k}(\Sigma))\cap C^1((T_1,T_0],H^{k-1}(\Sigma))$ of the Cauchy problem of \eqref{eq:PDEgen.transf.nonsymm.N} launched from initial data $X_{\mathbf{0}}$ at $t=T_0$ such that 
    \begin{enumerate}[(i)]
        \item The function 
    \begin{equation} 
        \label{eq:WfromX}
        W=\begin{pmatrix}
            \sqrt{h^{ij} u_i u_j}\\
            T X\\
            t^{\Ltt+\epsilon}b_{\bar i}{}^\Lambda\Dc_\Lambda(\tilde\Phi^{-1} (t,x,X))
        \end{pmatrix}
    \end{equation}
    in $C^0((T_1,T_0],H^{k-1}(\Sigma))\cap C^1((T_1,T_0],H^{k-2}(\Sigma))$
    defined from the solution $X$ agrees with $W_{\mathbf{0}}$ in \eqref{eq:W0fromX0} at $t=T_0$ and is a solution of the following PDE system
    \begin{align}
        A^0(t,x,W)&\partial_t W
        + \alpha t^{-1+\bar p} \tau^{\bar l}_l b_{\bar l}{}^\Omega A^{l}(t,x,W) \Dc_\Omega W\label{eq:Wevol}\\
        = &t^{-1}\Asc(t,x,W)\Pbb W+t^{-1}\Fsc_2(t,x,W)+t^{-1+Q}\Fsc(t,x,W),\notag\\
        W=&\begin{pmatrix}
            z&
            U &
            {\mathbf Z}_{\bar l}
        \end{pmatrix}^T,\label{eq:Wdef}\\
        \Pbb=&\mathrm{diag}(0,\Pi, \mathrm{id}),\label{WPbb}\\
        A^0(t,x,W)=&\mathrm{diag}\Bigl(1,\mathrm{id}, \mathrm{diag}\Bigl(\frac{1-c_s^2}{c_s^2}z^2+\frac{t^{2 \Ltt} u^2 }{c_s^2}, u^2,\left(u^2 t^{2 \Ltt}+z^2\right)z^2 h^{pj}\Bigr)\Bigr),\label{WA0}\\
        A^{l}(t,x,W)=&\mathrm{diag}\left(0,0,\frac 1{\sqrt{u^2 t^{2 \Ltt}+z^2}}\left(
    \begin{smallmatrix}
    \frac{  (1 -2 c_s^2) u^2 t^{2 \Ltt}+(1-c_2^2) z^2}{c_s^2} u^l
    & z^{-1}{ u^3 t^{2 \Ltt} u^l}
    & { t^{\Ltt} u   z^{-1}(u^2 t^{2 \Ltt}+z^2) (z^2 h^{lk}-u^l u^k)}  \\
    u^3 z^{-1}t^{2 \Ltt} u^l  &  u^2 u^l  & 0 \\
    \left(u^2 t^{2 \Ltt}+z^2\right)z^{-1} {u t^{\Ltt} \bigl(z^2 h^{pl}-u^p u^l\bigr)} & 0 & \left(u^2 t^{2 \Ltt}+z^2\right)z^2 h^{pj}{u^l h_j{}^k} \\
    \end{smallmatrix}
    \right)\right),\label{WAl}\\
        \Asc(t,x,W)=&\mathrm{diag}\Bigl(0,\Gsc(t,x), \mathrm{diag}\,\Bigl(\frac{1-c_s^2}{c_s^2}z^2(\Ltt+\epsilon), u^2(\Ltt+\epsilon),\epsilon z^4 h^{pk}
        +z^4{\check L}^{pk}\Bigr)\Bigr),\label{WAsc}\\
        \Fsc_2(t,x,W)&\label{WFTwo}\\
        &\hspace{-8ex}=\begin{pmatrix}
            \frac {(1-c_s^2)z {\check L}^{kq} \check u_{q} \check u_k}{u^2 t^{2\Ltt}+(1-c_s^2)z^2} &
        \left(\begin{smallmatrix}
        -\frac{c_s^2 \check L^{ll'}\check\Vsp_l\check\Vsp_{l'}}{t^{2\Ltt}\Vtm^2+(1-c_s^2) \Vspnorm^2}\Vtm\\
        0\\
        0
        \end{smallmatrix}\right) & - \left(\begin{smallmatrix}
            {\check L}^{pq}{\check u}_{q} {\check u}_p & 0 & 0 \\
        0 &  -u^2 |u|_h^{-2}   {\check L}^{pq} {\check u}_{q} {\check u}_p&  0 \\
        0 & 0 & 
        |u|_h^2{   {\check L}^{rq} {\check u}_{q} {\check u}_{r}} h^{pk}
        +{  2   }{ |u|_h^2}u^p {\check L}^{qk} {\check u}_{q}
        \end{smallmatrix}\right){\mathbf Z}_{\bar i}
        \end{pmatrix}^T,\notag 
    \end{align}
    for all $(t,x)\in (T_1,T_0]\times\Sigma$, where $z$, $U$ and ${\mathbf Z}_{\bar l}$ refer to the respective components of $W$ according to \eqref{eq:Wevol}, where the components of $U$ are labelled as in \eqref{U2X} and where we make use of the shorthand notation in \eqref{eq:ushorthand.Pre}. The map
    $\Pi$ is defined in \eqref{eq:Pi} and the map $\Gsc(t,x)$ in \eqref{eq:Gscdef}. 
    Given any smooth function $Q:\Sigma\rightarrow\Rbb$ in \eqref{eq:Wevol} satisfying
    \begin{equation}
        \label{eq:Qinequal.1}
        0<Q<\min\{\Ltt,\bar p-\Ltt-\epsilon,q-\Ltt, \bar q-\Ltt\}
    \end{equation}
    for each point on $\Sigma$, the map $\Fsc$ in \eqref{eq:Wevol} 
    can be identified with a uniformly bounded polynomial in the sense of Section~\ref{sec:analysisbg.N} for 
    \begin{equation}
        \label{eq:Wchidef}
        \chi(t,x,W)=\bigl(z, z^{-1}, u, \hat u_i, \check u_i, {\mathbf Z}_{\bar i}, |u|^{-1}_h, \sqrt{u^2 t^{2 \Ltt}+u_i u^i}, 1/\sqrt{u^2 t^{2 \Ltt}+u_i u^i}, 1/(u^2 t^{2 \Ltt}+(1-c_s^2)u_i u^i)\bigr)^T,
    \end{equation}
    and 
    \begin{equation}
        \label{eq:WOmegadef}
        \Omega=\left\{W\in\Rbb^{2+4N+N^2} \,\left|\, u>u_*,\quad z>\sqrt{\hat u_{*i}\hat u_{*}^i\,\hat u_{i}},\quad \hat u^i>\hat u_{*i}\hat u_{*}^i\right.\right\}.
    \end{equation} 
    \item For all $(t,x)\in (T_1,T_0]\times \Sigma$, the value $W(t,x)$ is in $\Omega$.
    \item The $z$-, $U$- and ${\mathbf Z}_{\bar l}$-components of $W$ are members of $C^0((T_1,T_0],H^{k}(\Sigma))\cap C^1((T_1,T_0],H^{k-1}(\Sigma))$, $C^0((T_1,T_0],H^{k}(\Sigma))\cap C^1((T_1,T_0],H^{k-1}(\Sigma))$ and $C^0((T_1,T_0],H^{k-1}(\Sigma))\cap C^1((T_1,T_0],H^{k-2}(\Sigma))$, respectively, and satisfy
    \begin{equation}
        \label{eq:DUZbold}
        \Dc_{\bar l} U=:\Phi_0 (t,x,U) t^{-\Ltt-\epsilon} {\mathbf Z}_{\bar l}
        +\Phi_{\bar l}(t,x,U),\quad \Phi_{\bar l}(t,x,U)=b_{\bar l}{}^\Lambda\Phi_{\Lambda}(t,x,U)
    \end{equation}
    for all $(t,x)\in (T_1,T_0]\times\Sigma$, where 
    \begin{align}
        \label{eq:DUZbold.1}
        \Phi_0 (t,x,U)&=\begin{pmatrix}
            1 & 0 & 0\\
            0 & \frac{\hat h_{i}{}^j\hat u_j}{|u|_h} & t^{\Ltt}   |u|_h  \hat h_i{}^j\\
            0 & \frac{\check h_{i}{}^j\check u_j}{|u|_h} & t^{\Ltt}   |u|_h  \check h_i{}^j
        \end{pmatrix},\\
        \label{eq:DUZbold.2}
        \Phi_{\Lambda}(t,x,U)&=\begin{pmatrix}
            0\\
            \Dc_\Lambda \Ltt \log (t) \hat h_{i}{}^j\hat u_j + \Dc_\Lambda \hat h_i{}^j (\hat h_j{}^k\hat u_k+\check h_j{}^k\check u_k)\\
            \Dc_\Lambda \Ltt \log (t) \check h_{i}{}^j\check u_j + \Dc_\Lambda \check h_i{}^j (\hat h_j{}^k\hat u_k+\check h_j{}^k\check u_k),
        \end{pmatrix}.
    \end{align}
    \item The $U$-component of $W$  satisfies the consistency condition \eqref{eq:Uconsist} for all $(t,x)\in (T_1,T_0]\times\Sigma$.
    \item The $z$-component of $W$ satisfies the consistency condition 
    \begin{equation}
        \label{eq:zconsist}
        z=|u|_h,
    \end{equation}
    for all $(t,x)\in (T_1,T_0]\times\Sigma$.
    \end{enumerate}

    Vice versa, there exists a $T_1\in [0,T_0)$ and a unique  classical solution $W$ in $C^0((T_1,T_0],H^{k-1}(\Sigma))\cap C^1((T_1,T_0],H^{k-2}(\Sigma))$ of the Cauchy problem of \eqref{eq:Wevol} launched from the initial data $W_{\mathbf{0}}$ in \eqref{eq:W0fromX0} at $t=T_0$ with values in $\Omega$. The $z$-, $U$- and ${\mathbf Z}_{\bar l}$-components of $W$ are members of  
    $C^0((T_1,T_0],H^{k}(\Sigma))\cap C^1((T_1,T_0],H^{k-1}(\Sigma))$, $C^0((T_1,T_0],H^{k}(\Sigma))\cap C^1((T_1,T_0],H^{k-1}(\Sigma))$ and $C^0((T_1,T_0],H^{k-1}(\Sigma))\cap C^1((T_1,T_0],H^{k-2}(\Sigma))$, respectively, and
    satisfy \eqref{eq:DUZbold}, \eqref{eq:Uconsist} and \eqref{eq:zconsist} for all $(t,x)\in (T_1,T_0]\times\Sigma$. The function $X$ defined from $U$ via \eqref{U2X} is in $C^0((T_1,T_0],H^{k}(\Sigma))\cap C^1((T_1,T_0],H^{k-1}(\Sigma))$ and is
    a solution of  \eqref{eq:PDEgen.transf.nonsymm.N} for all $(t,x)\in (T_1,T_0]\times\Sigma$. This solution agrees with the function \eqref{eq:X0} at $t=T_0$.
\end{prop}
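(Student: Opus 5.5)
The argument has two directions: local well-posedness of the $X$-system followed by a derivation of the $W$-system from it, and conversely local solvability of the $W$-system followed by propagation of the constraints so that $X$ can be recovered. The system \eqref{eq:PDEgen.transf.nonsymm.N} is symmetrisable with symmetriser \eqref{eq:variables1.b0.N}. For fixed $t\in(0,T_0]$ this symmetriser is positive definite as long as $u>0$, and both it and all coefficients are smooth on $(T_1,T_0]$ because $t$ stays away from $0$. Standard local existence for symmetric hyperbolic systems on the compact manifold $\Sigma$, with $k>N/2+2$, then gives $T_1<T_0$ and a unique $X\in C^0((T_1,T_0],H^k)\cap C^1((T_1,T_0],H^{k-1})$. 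Shrinking the interval if needed, continuity in $H^k\hookrightarrow C^1$ keeps $u>u_*$ and $\hat h^{ij}u_iu_j>\hat u_{*i}\hat u_*^i$; this gives (ii).

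For the forward direction I then define $W$ by \eqref{eq:WfromX}.
\begin{enumerate}[(a)]
\item The $U$-component $TX$ satisfies \eqref{eq:Uevol} by the general transformation rules of Section~\ref{sec:nonlineartransf}. Conditions \eqref{eq:Uconsist} and \eqref{eq:zconsist} hold by construction, which is (iv) and (v).
\item The $z$-component $|u|_h$ and the intermediate variable $\tilde\Phi^{-1}(t,x,X)=(u,|u|_h,t^{-\Ltt}u_i/|u|_h)$ satisfy \eqref{eq:Zevol}, again by Section~\ref{sec:nonlineartransf}. Here the normalisation \eqref{eq:znormalisationconstr} holds identically.
\item The last block ${\mathbf Z}_{\bar l}=t^{\Ltt+\epsilon}b_{\bar l}{}^\Lambda\Dc_\Lambda Z$ is handled by applying $b_{\bar l}{}^\Lambda\Dc_\Lambda$ to \eqref{eq:Zevol}, multiplying by the symmetriser $\Stt$ in \eqref{eq:Zsymmetriser}, and then by $t^{\Ltt+\epsilon}$.
\end{enumerate}

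The weight $t^{\Ltt+\epsilon}$ is chosen so that the term $t^{-1}(\Ltt+\epsilon)$ coming from $\partial_t t^{\Ltt+\epsilon}$ cancels the harmful $-\Ltt h_j{}^k$ entry of \eqref{eq:Hsc}. What remains is the positive matrix $\epsilon z^4h^{pk}+z^4\check L^{pk}$ in \eqref{WAsc} together with the entries $(\Ltt+\epsilon)$ in the first two slots. The commutator $[b_{\bar l}{}^\Lambda\Dc_\Lambda,\alpha\Ctt^l\Dc_l]$ produces two kinds of terms. Those involving derivatives of the coefficients are rewritten through \eqref{eq:framebound} as $t^{-1+\bar p}\tau b\,\Dc$, which gives the principal part in \eqref{eq:Wevol}. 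Products of ${\mathbf Z}$ with itself appear with an extra factor $t^{-\Ltt-\epsilon}$, and these are absorbed using $t^{-1+\bar p}t^{-\Ltt-\epsilon}=t^{-1+(\bar p-\Ltt-\epsilon)}$. This is exactly where the hypotheses \eqref{eq:lem.restr} and \eqref{eq:Qinequal.1} enter.

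The undifferentiated pieces $\Dc_{\bar l}U$ are re-expressed through $\mathbf Z$ by the chain rule, giving identity \eqref{eq:DUZbold}. The $\log t$ terms in \eqref{eq:DUZbold.2} come from differentiating $t^{-\Ltt}$ when $\Ltt$ depends on $x$. Finally, the quadratic $\check L\check u\check u$ contributions that are not small are collected into $\Fsc_2$, and everything else is checked to be $t^{-1+Q}$ times a uniformly bounded polynomial in $\chi$ on $\Omega$, using (i)--(vi) of the previous subsection. The $\mathbf Z$-components lose one derivative, so they lie in $H^{k-1}$; this gives (i) and (iii).

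For the converse I first solve \eqref{eq:Wevol} locally. It is symmetric hyperbolic: $A^0$ is positive definite on $\Omega$ and each $A^l$ is symmetric, as one checks from \eqref{WAl}. This produces a unique solution $W$. The key step is then to show that the constraints $\mathcal C_1=$ \eqref{eq:Uconsist}, $\mathcal C_2=$ \eqref{eq:zconsist} and $\mathcal C_3=$ \eqref{eq:DUZbold}, together with $|z_k|^2=t^{-2\Ltt}$, propagate.

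For $\mathcal C_1$ this follows from \eqref{eq:PDEgen.transf.nonsymm.NN.off.2}, which are ODEs in $t$. For $\mathcal C_2$ and $\mathcal C_3$ I would derive a linear homogeneous symmetric hyperbolic system for the constraint vector $\mathcal C$ on $(T_1,T_0]$. Its coefficients are bounded because $t\ge T_1>0$, so zero data at $t=T_0$ together with an energy/Gronwall estimate gives $\mathcal C\equiv0$.

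Once the constraints hold, the $U$-part solves \eqref{eq:Uevol}, and $X=SU$ solves \eqref{eq:PDEgen.transf.nonsymm.N}. Uniqueness of $X$ then shows this is the same solution as in the first direction. The regularity of the $U$- and $z$-components improves to $H^k$, because they solve the $X$-system with $H^k$ data.

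The main obstacle is deriving the closed constraint-propagation system for \eqref{eq:DUZbold}. Its $\Phi_{\bar l}$ term and the space-dependent projectors make the bookkeeping heavy. I would handle it by differentiating \eqref{eq:DUZbold} in time and substituting both evolution equations. The argument only needs linearity in $\mathcal C$, not uniformity in $t$, since the interval is compact.
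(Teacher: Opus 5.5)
Your forward direction is the paper's argument. It covers the local $X$-solution, the passage to $U$ and $Z$, differentiation of \eqref{eq:Zevol} with the weight $t^{\Ltt+\epsilon}$ cancelling $-\Ltt h_j{}^k$, the non-decaying $\check u$-terms from $\Hsc$ and $\mathbf D_\cdot\Hsc\,Z$ collected into $\Fsc_2$, and spatial derivatives of $z$ and $U$ traded for $\mathbf Z$ via \eqref{eq:DUZbold}. The gap is the step you flag yourself in the converse (the paper only calls this a standard argument). No closed linear homogeneous system in $(\mathcal C_1,\mathcal C_2,\mathcal C_3)$ exists.

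The reason is that $\Phi_0$ in \eqref{eq:DUZbold.1} has a one-dimensional kernel. Given \eqref{eq:Uconsist}, it is spanned by $(0,1,-u_j/(t^{\Ltt}|u|_h^2))$, which is $\ker\mathbf D\tilde\Phi$ in these variables. So if $\mathbf Z_{\bar l}$ differs from $t^{\Ltt+\epsilon}\Dc_{\bar l}\tilde\Phi^{-1}(t,x,SU)$ by $b_{\bar l}$ times this vector, \eqref{eq:DUZbold} still holds. The $U$-block of \eqref{eq:Wevol} sees $\mathbf Z$ only through $\Phi_0\mathbf Z$. The $z$-equation, however, reads the second slot of $\mathbf Z_{\bar l}$ as $t^{\Ltt+\epsilon}\Dc_{\bar l}z$, through the second row of the matrix in \eqref{eq:Ctt}. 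Hence at an instant where $\mathcal C_1=\mathcal C_2=\mathcal C_3=0$,
\[
\partial_t\bigl(z-|u|_h\bigr)=-\alpha\, t^{-1+\bar p-\Ltt-\epsilon}\,\tau^{\bar l}_l\,\frac{u^l\, b_{\bar l}}{\sqrt{u^2t^{2\Ltt}+z^2}},
\]
which is generically nonzero. Adding $|z_k|^2=t^{-2\Ltt}$ does not help: $W$ has no $z_k$ slot, and with $z_k:=t^{-\Ltt}u_k/z$ this condition is just \eqref{eq:zconsist}. What is missing is the differentiated normalisation \eqref{eq:znormalisationconstr}, i.e.\ the full relation $\mathbf Z_{\bar l}=t^{\Ltt+\epsilon}\Dc_{\bar l}\tilde\Phi^{-1}(t,x,SU)$ rather than only its $\Phi_0$-image.

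Even with that relation included, the constraint system is not symmetric hyperbolic in the constraints alone. The $U$-block contains $\mathbf Z_{\bar m}$ undifferentiated. The $\mathbf Z_{\bar l}$-equation was derived by commuting $\Dc_{\bar l}$ past $\Dc_{\bar m}$ (this is the origin of the $\Dc_{[\bar l}b_{\bar i]}$ term in \eqref{eq:Hbfi}). So the time derivative of the constraint produces the curl $\Dc_{\bar l}\mathbf Z_{\bar m}-\Dc_{\bar m}\mathbf Z_{\bar l}$ minus its gradient value, which is a curl-type principal part. You would have to carry this curl as an extra constraint variable: it obeys a transport equation, and the other constraints then obey ODEs.

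A much shorter route avoids constraint propagation entirely. Your forward direction already yields, from the $H^k$ solution $X$, a $C^1$ solution of \eqref{eq:Wevol} with data $W_{\mathbf 0}$ satisfying every constraint. Uniqueness of classical solutions of the symmetric hyperbolic system \eqref{eq:Wevol} identifies it with the converse solution on the common interval. A continuation argument then extends this to the whole existence interval of $W$, which is what the proof of Theorem~\ref{thm:Euler1} needs. On any $[T_*,T_0]$ with $T_*>0$, the $H^{k-1}$-bound on $W$ together with \eqref{eq:DUZbold} bounds $X$ in $H^k$, so $X$ cannot break down first.

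Your ODE argument for $\mathcal C_1$ is fine. Note, though, that it rests on three facts: $\pi^\perp\mathtt B^l=0$, $\pi^\perp\mathtt G=0$, and $\pi^\perp$ annihilating the $U$-part of $\Fsc_2$. You need these because the $U$-block of \eqref{eq:Wevol} is not literally \eqref{eq:Uevol}.
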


The rest of this subsection focusses on the proof of Proposition~\ref{prop:Fuchsiansystem}. Given a solution $W$ of \eqref{eq:Wevol}, its $U$-component -- the \emph{lower-order variable} -- carries the information about fluid and, because \eqref{eq:Uconsist} holds, allows us to construct the physical variables via \eqref{eq:Gammanu.Var2}. The ${\mathbf Z}_{\bar l}$-component of $W$ on the other hand plays the role of the \emph{top-order variable}. The $z$-component of $W$ is introduced here in order for us to deal with the special structure of the symmetriser \eqref{WA0} based on \eqref{eq:Zsymmetriser} in our reference-independent analysis in the proof of Theorem~\ref{thm:Euler1} in Section~\ref{sec:proof1}. For the proof of Proposition~\ref{prop:Fuchsiansystem} it is useful to adopt the following notation
\begin{equation} 
    \label{eq:Pscrdef}
    \Psc_z=\mathrm{diag}\,(1,0,0),\quad \Psc_U=\mathrm{diag}\,(0,\id,0),\quad \Psc_{{\mathbf Z}_{\bar l}}=\mathrm{diag}\,(0,0,\id).
\end{equation}
This means that we can refer to the $z$-, $U$- and ${\mathbf Z}_{\bar l}$-components of $W$ as $\Psc_z W$, $\Psc_U W$ and $\Psc_{{\mathbf Z}_{\bar l}}W$, respectively.

\begin{proof}[Proof of Proposition~\ref{prop:Fuchsiansystem}]
    The proof of this proposition is straightforward despite the algebraic complexity of some of the expressions we must deal with. There are two essential parts of the proof: First, establish the equivalence of the two initial value problems, for the $X$-variables on the one side and for the $W$-variables on the other side.  A crucial step for this is to actually construct the correct map $\Fsc$ in \eqref{eq:Wevol}. Existence and uniqueness of local classical solutions follows easily from the standard theory of symmetric hyperbolic PDEs. The second main part of this proof is to establish the polynomial properties stated in the proposition of this thus constructed map $\Fsc$.

    Last us start with the first part. Under the hypothesis of Proposition~\ref{prop:Fuchsiansystem}, it is a standard consequence of \cite[Proposition~1.4, Chapter~16]{taylor2011}, the previously discussed symmetrisability of \eqref{eq:PDEgen.transf.nonsymm.N} and that $X_{\mathbf{0}}\in H^k(\Sigma)$ with $k>N/2+1$ that the Cauchy problem of \eqref{eq:PDEgen.transf.nonsymm.N} launched from initial data $X_{\mathbf{0}}$ at $t=T_0$ has a unique local classical solution $X$ in $C^0((T_1,T_0],H^k(\Sigma))\cap C^1((T_1,T_0],H^{k-1}(\Sigma))$ for some $T_1\in [0,T_0)$. 
    The map $U$ defined from $X$ by \eqref{X2U} then satisfies \eqref{eq:Uconsist} and is a classical solution of \eqref{eq:Uevol} in $C^0((T_1,T_0],H^k(\Sigma))\cap C^1((T_1,T_0],H^{k-1}(\Sigma))$. In the same way the map $Z$ defined from $X$ by \eqref{U2Z} satisfies \eqref{eq:znormalisationconstr} and is a classical solution of \eqref{eq:Zevol} in $C^0((T_1,T_0],H^k(\Sigma))\cap C^1((T_1,T_0],H^{k-1}(\Sigma))$.
    Next we set 
    \begin{equation}
        \label{eq:Zbolddef}
        {\mathbf Z}_{\bar i}=t^{\Ltt+\epsilon}b_{\bar i}{}^\Lambda\Dc_\Lambda Z=t^{\Ltt+\epsilon}\Dc_{\bar i} Z,
    \end{equation}
    for the constant $\epsilon>0$ given in the theorem. This is a continuously differentiable member of $C^0((T_1,T_0],H^{k-1}(\Sigma))\cap C^1((T_1,T_0],H^{k-2}(\Sigma))$. 
    This definition involves the time-independent frame $b_{\bar i}{}^\Lambda$ introduced as part of Definition~\ref{def:asympKasner}. 
    Finally, we define the map $W$ either by \eqref{eq:WfromX} or \eqref{eq:Wdef}; both definitions for $W$ are equivalent if we identify $z$ with the second component of $Z$ which implies \eqref{eq:zconsist}. The resulting vector function $W$ is a continuously differentiable member of $C^0((T_1,T_0],H^{k-1}(\Sigma))\cap C^1((T_1,T_0],H^{k-2}(\Sigma))$. It is straightforward to show from \eqref{eq:X0} and \eqref{eq:WfromX} that $W$ agrees with $W_{\mathbf{0}}\in H^{k-1}(\Sigma)$ given in \eqref{eq:W0fromX0} at $t=T_0$. In particular this implies that $W(T_0,x)\in \Omega$ for all $x\in\Sigma$, and therefore, by continuity, we conclude that $W(t,x)\in\Omega$ for all $(t,x)\in (T_1,T_0]\times\Sigma$ if we shift $T_1$ closer to $T_0$ if necessary.
    
    Given these definitions it is not difficult to show from \eqref{eq:timeindepspatconn.1} -- \eqref{eq:timeindepspatconn.3}, \eqref{U2X}, \eqref{X2U}, \eqref{Z2U} and \eqref{U2Z} that
    \begin{align*}
        \Dc_\Lambda U=\Dc_\Lambda \begin{pmatrix}
            u\\\hat u_i\\\check u_i
        \end{pmatrix}&=\Dc_\Lambda (T \tilde\Phi(t,x,Z))= \begin{pmatrix}
            \Dc_\Lambda \Vtm\\
            \Dc_\Lambda \Ltt \log (t) t^{\Ltt} z \hat h_i{}^j z_j+t^{\Ltt} \Dc_\Lambda  z \hat h_i{}^j z_j
            +t^{\Ltt}   z \Dc_\Lambda \hat h_i{}^j z_j+t^{\Ltt}   z  \hat h_i{}^j \Dc_\Lambda z_j\\ 
            \Dc_\Lambda \Ltt \log (t) t^{\Ltt} z \check h_i{}^j z_j+t^{\Ltt} \Dc_\Lambda  z \check h_i{}^j z_j
            +t^{\Ltt}   z \Dc_\Lambda \check h_i{}^j z_j+t^{\Ltt}   z  \check h_i{}^j \Dc_\Lambda z_j
        \end{pmatrix}\\
        &= \begin{pmatrix}
            \Dc_\Lambda \Vtm\\
            \Dc_\Lambda \Ltt \log (t) \hat h_{i}{}^j\hat u_j+\Dc_\Lambda  z \frac{\hat h_{i}{}^j\hat u_j}{|u|_h} 
            +\Dc_\Lambda \hat h_i{}^j (\hat h_j{}^k\hat u_k+\check h_j{}^k\check u_k)+t^{\Ltt}   |u|_h  \hat h_i{}^j \Dc_\Lambda z_j\\ 
            \Dc_\Lambda \Ltt \log (t) \check h_{i}{}^j\check u_j+\Dc_\Lambda  z \frac{\check h_{i}{}^j\check u_j}{|u|_h}
            +\Dc_\Lambda \check h_i{}^j (\hat h_j{}^k\hat u_k+\check h_j{}^k\check u_k)+t^{\Ltt}   |u|_h  \check h_i{}^j \Dc_\Lambda z_j
        \end{pmatrix}\\
        &= \begin{pmatrix}
            1 & 0 & 0\\
            0 & \frac{\hat h_{i}{}^j\hat u_j}{|u|_h} & t^{\Ltt}   |u|_h  \hat h_i{}^j\\
            0 & \frac{\check h_{i}{}^j\check u_j}{|u|_h} & t^{\Ltt}   |u|_h  \check h_i{}^j
        \end{pmatrix} \Dc_\Lambda Z
        +\begin{pmatrix}
            0\\
            \Dc_\Lambda \Ltt \log (t) \hat h_{i}{}^j\hat u_j + \Dc_\Lambda \hat h_i{}^j (\hat h_j{}^k\hat u_k+\check h_j{}^k\check u_k)\\
            \Dc_\Lambda \Ltt \log (t) \check h_{i}{}^j\check u_j + \Dc_\Lambda \check h_i{}^j (\hat h_j{}^k\hat u_k+\check h_j{}^k\check u_k)
        \end{pmatrix}.
    \end{align*}
    Equation \eqref{eq:DUZbold} with \eqref{eq:DUZbold.1} and \eqref{eq:DUZbold.2} are then an immediate consequence of
    \eqref{eq:Zbolddef} and \eqref{eq:framebound}. We remark that we can use \eqref{eq:zconsist} here to replace $|u|_h$ by $z$.
    
    Let us now define the map $\Fsc$ so that the just defined $W$ is a solution of \eqref{eq:Wevol}. We do this by defining the three parts, $\mathscr P_z(t^{-1+Q}\Fsc(t,x,W))$, $\mathscr P_U(t^{-1+Q}\Fsc(t,x,W))$ and $\mathscr P_{{\mathbf Z}_{\bar l}}(t^{-1+Q}\Fsc(t,x,W))$, see \eqref{eq:Pscrdef}, separately.
     
    Since $Z$ is a solution of \eqref{eq:Zevol} and we have defined $z=\mathscr P_z W$  as the second component of $Z$ above, the equations \eqref{eq:Zevol} together with \eqref{eq:Zboldevol} imply that $\mathscr P_z(t^{-1+Q}\Fsc(t,x,W))$ must equal the projection on the second component of $t^{-1+Q_2}\Htt(t,x,Z)-\alpha t^{-1+\bar p-\Ltt-\epsilon} \tau^{\bar l}_l \Ctt^{l}(t,x,U){\mathbf Z}_{\bar l}$. With this definition, it is a straightforward consequence of \eqref{eq:Wevol} with \eqref{eq:Wdef} -- \eqref{WFTwo} and \eqref{eq:Zevol} with \eqref{eq:Ctt} and \eqref{eq:Hsc} 
    that $W$ solves the $\mathscr P_z$-projection of \eqref{eq:Wevol} for all $(t,x)\in (T_1,T_0]\times\Sigma$.

    In order to guarantee that $W$ also satisfies the $\mathscr P_U$-projection of \eqref{eq:Wevol} for all $(t,x)\in (T_1,T_0]\times\Sigma$ we argue correspondingly. Since $U$ is a solution of \eqref{eq:Uevol} 
    it is sufficient to define 
    \begin{equation}
        \label{eq:PFscW}
        \mathscr{P}_U(t^{-1+Q}\Fsc(t,x,W))
        :=t^{-1+Q_1}\mathtt G(t,x,U)-\alpha \Coeff^{l} t^{-1+\bar p} \tau^{\bar l}_l\Dc_{\bar l}{} U,
    \end{equation}
    from \eqref{eq:Uevol} and then express $\Dc_{\bar l}{} U$ in terms of \eqref{eq:DUZbold}. With this definition, it is a straightforward consequence of \eqref{eq:Wevol} with \eqref{eq:Wdef} -- \eqref{WFTwo} and \eqref{eq:Uevol} with \eqref{eq:BlU} -- \eqref{eq:G2def} 
    that $W$ solves the $\mathscr P_U$-projection of \eqref{eq:Wevol} for all $(t,x)\in (T_1,T_0]\times\Sigma$.

    Let us proceed with the $\mathcal P_{{\mathbf Z}_{\bar i}}$-projection of \eqref{eq:Wevol}. Given that the map $Z$ defined by \eqref{U2Z} from our local solution $X$ satisfies \eqref{eq:znormalisationconstr} and is a local classical solution of \eqref{eq:Zevol}, it is not difficult to show that \eqref{eq:Zbolddef} implies
    \begin{align*}
        \partial_t {\mathbf Z}_{\bar i}
        =&\frac{\Ltt+\epsilon} t  {\mathbf Z}_{\bar i} 
        + t^{\Ltt+\epsilon} \Dc_{\bar i} \partial_t  Z\\
        =&\frac{\Ltt+\epsilon} t  {\mathbf Z}_{\bar i} \\
        &- \alpha t^{-1+\bar p} \tau^{\bar l}_l  \dot n_{\bar i} \Ctt^{l}(t,x,Z) {\mathbf Z}_{\bar l}{}
        -\underline{\alpha t^{-1+\bar p} \tau^{\bar l}_l  \Dc_{\bar i} (\Ctt^{l}(t,x,Z)){\mathbf Z}_{\bar l}}\\
        &-\alpha  t^{-1+\bar p}\Ctt^{l}(t,x,Z) \Dc_{\bar i} \bar p \log(t) \tau^{\bar l}_l {\mathbf Z}_{\bar l}
        -\alpha  t^{-1+\bar p}\Ctt^{l}(t,x,Z) \Dc_{\bar i} \tau^{\bar l}_l {\mathbf Z}_{\bar l}
        -\alpha  t^{-1+\bar p}\Ctt^{l}(t,x,Z)  \tau^{\bar l}_l \Dc_{\bar i} b_{\bar l}{}^\Sigma \omega^{\bar m}{}_\Sigma{\mathbf Z}_{\bar m}\\
        &-\alpha \Ctt^{l}(t,x,Z)\Dc_l {\mathbf Z}_{\bar i}
        +\alpha  \Ctt^{l}(t,x,Z) \Dc_l \Ltt \log(t) {\mathbf Z}_{\bar i}
        +\alpha  \Ctt^{l}(t,x,Z)  \Dc_{l} b_{\bar i}{}^\Sigma \omega^{\bar m}{}_\Sigma{\mathbf Z}_{\bar m}\\
        &
        +\underline{t^{-1 +\Ltt+\epsilon} \Dc_{\bar i}(\Hsc(t,x,Z))  Z}
        + \frac 1t\Hsc(t,x,Z)  {\mathbf Z}_{\bar i}
        +\underline{t^{\Ltt+\epsilon} t^{-1+Q_2}\Dc_{\bar i} (\Htt(t,x,Z))},
    \end{align*}
    for all $(t,x)\in (T_1,T_0]\times\Sigma$,
    where we have used \eqref{eq:dotnalpharel} and the fact that the connection $\Dc_\Lambda$ is flat by \eqref{eq:timeindepspatconn.2}. In order to use the chain rule for the three underlined terms we use the following notation
    \begin{align}
        \Dc_{\bar i} (\Ctt^{l}(t,x,Z))&=\underbrace{(\Dc_{\bar i} \Ctt^{l})(t,x,Z)}_{=:\Ctt_{\bar i}^l(t,x,Z)}+\underbrace{\mathbf D_{ u} \Ctt^{l}(t,x,Z) \Dc_{\bar i}  u +\mathbf D_{ z} \Ctt^{l}(t,x,Z) \Dc_{\bar i}  z+\mathbf D_{ z_p} \Ctt^{l}(t,x,Z) \Dc_{\bar i}  z_p}_{=:t^{-\Ltt-\epsilon}{\mathbf Z}_{\bar i}\cdot \mathbf D\Ctt^l(t,x,Z)},\label{eq:defdCtt}\\
        \Dc_{\bar i} (\Hsc(t,x,Z))&=\underbrace{(\Dc_{\bar i} \Hsc)(t,x,Z)}_{=:\Hsc_{\bar i}(t,x,Z)}+\underbrace{\mathbf D_{ u} \Hsc(t,x,Z) \Dc_{\bar i}  u +\mathbf D_{ z} \Hsc(t,x,Z) \Dc_{\bar i}  z+\mathbf D_{ z_p} \Hsc(t,x,Z) \Dc_{\bar i}  z_p}_{=:t^{-\Ltt-\epsilon}{\mathbf Z}_{\bar i}\cdot \mathbf D\Hsc(t,x,Z)},\label{eq:defdHsc}\\
        \Dc_{\bar i} (\Htt(t,x,Z))&=\underbrace{(\Dc_{\bar i} \Htt)(t,x,Z)}_{=:\Htt_{\bar i}(t,x,Z)}+\underbrace{\mathbf D_{ u} \Htt(t,x,Z) \Dc_{\bar i}  u +\mathbf D_{ z} \Htt(t,x,Z) \Dc_{\bar i}  z+\mathbf D_{ z_p} \Htt(t,x,Z) \Dc_{\bar i}  z_p}_{=:t^{-\Ltt-\epsilon}{\mathbf Z}_{\bar i}\cdot \mathbf D\Htt(t,x,Z)}.\label{eq:defdHtt}
    \end{align}
    Together with \eqref{eq:framebound} this can be written as follows
    \begin{equation}
        \label{eq:Zboldevol}
        \partial_t  {\mathbf Z}_{\bar i}
        + \alpha t^{-1+\bar p} \tau^{\bar l}_l b_{\bar l}{}^{\Omega}\Ctt^{l}(t,x,Z) \Dc_\Omega {\mathbf Z}_{\bar i}
        =  
        t^{-1} {\mathbfscr H}(t,x,Z,{\mathbf Z}_{\bar j}) {\mathbf Z}_{\bar i}
        +t^{-1}\mathbf H_{\bar i}(t, x, Z,{\mathbf Z}_{\bar j}),
    \end{equation}
    where\footnote{We interpret ${\mathbfscr H}(t,x,Z,{\mathbf Z}_{\bar j})$ as a linear map acting on ${\mathbf Z}_{\bar i}$. For the terms ${\Ltt+\epsilon}$ and $\Hsc(t,x,Z)$ this action corresponds to matrix multiplication from the right. For the term $\mathbf D_{\cdot}\Hsc(t,x,Z)  Z$, input for this map is represented by the ``open slot'' labelled as $\cdot$.}
    \begin{align}
        {\mathbfscr H}(t,x,Z,{\mathbf Z}_{\bar j})
        =&{\Ltt+\epsilon} + \Hsc(t,x,Z)
        + \mathbf D_{\cdot}\Hsc(t,x,Z)  Z, \label{eq:mathbfscrH}\\
        \mathbf H_{\bar i}(t, x, Z,{\mathbf Z}_{\bar j})
        =&
        - \alpha t^{\bar p} \tau^{\bar l}_l  \dot n_{\bar i} \Ctt^{l}(t,x,Z) {\mathbf Z}_{\bar l}{}
        -\alpha t^{\bar p} \tau^{\bar l}_l  \Ctt_{\bar i}^{l}(t,x,Z){\mathbf Z}_{\bar l}\label{eq:Hbfi}\\
        &-\alpha t^{\bar p-\Ltt-\epsilon} \tau^{\bar l}_l  {\mathbf Z}_{\bar i} \cdot\mathbf D\Ctt^l(t,x,Z){\mathbf Z}_{\bar l}\notag\\
        &-\alpha  t^{\bar p}\Ctt^{l}(t,x,Z) \Dc_{\bar i} \bar p \log(t) \tau^{\bar l}_l {\mathbf Z}_{\bar l}
        -\alpha  t^{\bar p}\Ctt^{l}(t,x,Z) \Dc_{\bar i} \tau^{\bar l}_l {\mathbf Z}_{\bar l}
        \notag\\
        &
        +\alpha  t^{\bar p} \tau^{\bar l}_l \Ctt^{l}(t,x,Z) \Dc_{\bar l} \Ltt \log(t) {\mathbf Z}_{\bar i}
        +2\alpha  t^{\bar p} \tau^{\bar l}_l\Ctt^{l}(t,x,Z)  \Dc_{[\bar l} b_{\bar i]}{}^\Sigma \omega^{\bar m}{}_\Sigma{\mathbf Z}_{\bar m}\notag\\
        &
        +t^{\Ltt+\epsilon} \Hsc_{\bar i}(t,x,Z)  Z
        +t^{\Ltt+\epsilon+Q_2}\Htt_{\bar i}(t,x,Z)
        +t^{Q_2}{\mathbf Z}_{\bar i} \cdot\mathbf D\Htt(t,x,Z).\notag
    \end{align}
    Equations \eqref{eq:Zboldevol} hold for all $(t,x)\in (T_1,T_0]\times\Sigma$.
    Let us continue by rewriting \eqref{eq:mathbfscrH}:
        \begin{align}
        {\mathbfscr H}(t,x,U,{\mathbf Z}_{\bar l}):={\mathbfscr H}(t,x,\tilde\Phi^{-1}(t,x,S U),{\mathbf Z}_{\bar l})
            =&{\Ltt+\epsilon} + \Hsc(t,x,\tilde\Phi^{-1}(t,x,S U))\label{eq:Hscrbold.2}\\
            &+ \mathbf D_\cdot\Hsc(t,x,\tilde\Phi^{-1}(t,x,S U)) \tilde\Phi^{-1}(t,x,S U). \notag
        \end{align}
        Taking \eqref{eq:defdHsc} as the definition of $\mathbf D_\cdot\Hsc$ and \eqref{eq:Hsc} as the definition of $\Hsc$, we do this by using $Z=\tilde\Phi^{-1}(t,x,S U)$, the product rule and the $O$-notation introduced in Section~\ref{sec:analysisbg.N}. It is not difficult to show that
        \begin{align*}
            &\mathbf D_u\Hsc(t,x,\tilde\Phi^{-1}(t,x,S U)) \tilde\Phi^{-1}(t,x,S U) 
            =\begin{pmatrix}
            -|u|_h^{-2} \frac{c_s^2}{1-c_s^2}    {\check L}^{kq}\check u_{q} \check u_k\\
                0\\ 
            0
            \end{pmatrix}+O(t^\Ltt),\\
            &\mathbf D_z\Hsc(t,x,\tilde\Phi^{-1}(t,x,S U)) \tilde\Phi^{-1}(t,x,S U) 
            =\begin{pmatrix}
                0 \\
                |u|_h^{-2}   {\check L}^{kq} {\check u}_{q} {\check u}_k\\ 
            0
            \end{pmatrix}+O(t^\Ltt),\\
            &\mathbf D_{z_i}\Hsc(t,x,\tilde\Phi^{-1}(t,x,S U)) \tilde\Phi^{-1}(t,x,S U) 
            =\begin{pmatrix}
                0\\
                0\\
            -\frac{  2(1-c_s^2)  {\check L}^{qi} {\check u}_{q} }{(1-c_s^2) |u|_h^2}u_j
            \end{pmatrix}+O(t^\Ltt),
        \end{align*}
        and that therefore 
        \begin{align*}
            &\mathbf D_{\cdot}\Hsc(t,x,\tilde\Phi^{-1}(t,x,S U)) \tilde\Phi^{-1}(t,x,U) 
            =\begin{pmatrix}
            -|u|_h^{-2} \frac{c_s^2}{1-c_s^2}    {\check L}^{pq}{\check u}_{q} {\check u}_p
                &0
                &0\\
                0
                &|u|_h^{-2}   {\check L}^{pq} {\check u}_{q} {\check u}_p
                &0\\ 
            0 
            &0
            &-\frac{  2  {\check L}^{qk} {\check u}_{q} }{ |u|_h^2}u_j
            \end{pmatrix}+O(t^\Ltt).
        \end{align*}
        This together with \eqref{eq:Hsc} into \eqref{eq:Hscrbold.2} yields
        \begin{align}
            &{\mathbfscr H}(t,x,U,{\mathbf Z}_{\bar l})\label{eq:Hscrbold.3}\\
            = &
            \begin{pmatrix}
            \Ltt+\epsilon -|u|_h^{-2} \frac{c_s^2}{1-c_s^2}    {\check L}^{pq}{\check u}_{q} {\check u}_p & 0 & 0 \\
            0 & \Ltt+\epsilon + |u|_h^{-2}   {\check L}^{pq} {\check u}_{q} {\check u}_p&  0 \\
            0 & 0 & \epsilon h_{j}{}^{k}+{\check L}_{j}{}^{k}
            -\frac{   {\check L}^{rq} {\check u}_{q} {\check u}_{r}}{ |u|_h^2} h_{j}{}^{k}
            -\frac{  2   }{ |u|_h^2}u_j {\check L}^{qk} {\check u}_{q}
            \end{pmatrix}+O(t^{\Ltt}).\notag
        \end{align}
    We are now finally ready to define the remaining component $\mathscr{P}_{{\mathbf Z}_{\bar i}}(t^{-1+Q}\Fsc(t,x,W))$ of $t^{-1+Q}\Fsc(t,x,W)$ in \eqref{eq:Wevol} as the sum of the $O(t^\Ltt)$-terms from ${\mathbfscr H}$ in \eqref{eq:Hscrbold.3} and the function $\mathbf H_{\bar i}(t, x, U,{\mathbf Z}_{\bar j})$ given in \eqref{eq:Hbfi}, all of these multiplied with the matrix $\Stt$ in \eqref{eq:Zsymmetriser} from the left. With this definition, it is a straightforward consequence of \eqref{eq:Wevol} with \eqref{eq:Wdef} -- \eqref{WFTwo} and \eqref{eq:Zboldevol} with \eqref{eq:Ctt}, \eqref{eq:mathbfscrH}, \eqref{eq:Hbfi} and \eqref{eq:Zsymmetriser} 
    that $W$ solves the $\mathscr{P}_{{\mathbf Z}_{\bar l}}$-projection of \eqref{eq:Wevol} for all $(t,x)\in (T_1,T_0]\times\Sigma$, and is therefore a solution of the full system \eqref{eq:Wevol}. 

    Conversely, consider initial data $W_{\mathbf{0}}$ in \eqref{eq:W0fromX0}. These initial data are in $H^{k-1}(\Sigma)$ and the values of $W_{\mathbf{0}}$ are contained in $\Omega$ for all points on $\Sigma$. Since \eqref{eq:Wevol} are explicitly symmetric hyperbolic and $k-1>N/2+1$, it follows from \cite[Proposition~1.4, Chapter~16]{taylor2011} that there exists a unique classical  solution $W$ of \eqref{eq:Wevol} in $C^0((T_1,T_0],H^{k-1}(\Sigma))\cap C^1((T_1,T_0],H^{k-2}(\Sigma))$ for some $T_1\in (0,T_0)$ which agrees with $W_{\mathbf{0}}$ at $t=T_0$. Since we assume that $T_1$ is sufficiently close to $T_0$, it follows by continuity, that $W(t,x)$ is contained in $\Omega$ for all $(t,x)\in (T_1,T_0]\times\Sigma$.
    A not difficult, but slightly lengthy standard argument can be used to show that the $U$- and ${\mathbf Z}_{\bar i}$-components of this solution $W$ (cf.\ \eqref{eq:Pscrdef}) satisfy \eqref{eq:DUZbold} with \eqref{eq:DUZbold.1} -- \eqref{eq:DUZbold.2} and that the $z$-component of $W$ satisfies \eqref{eq:zconsist} for all $(t,x)\in (T_1,T_0]\times\Sigma$. This implies a number of things: First, same arguments we used above reveal that the $\Psc_U$-projection of \eqref{eq:Wevol} and \eqref{eq:Uevol} are therefore identical equations and hence the $U$-component of $W$ is a solution of \eqref{eq:Uevol} that, due to the structure of $W_{\mathbf{0}}$, satisfies \eqref{eq:Uconsist} for all $(t,x)\in (T_1,T_0]\times\Sigma$. 
    Since therefore $U$ and $\Dc_{\lb} U$ are both members of $C^0((T_1,T_0],H^{k-1}(\Sigma))\cap C^1((T_1,T_0],H^{k-2}(\Sigma))$, it follows that $U$ is actually a member of $C^0((T_1,T_0],H^{k}(\Sigma))\cap C^1((T_1,T_0],H^{k-1}(\Sigma))$.      
    This then implies that the function $X$ which we define by \eqref{U2Z} from $U$ is a solution of \eqref{eq:PDEgen.transf.nonsymm.N} in $C^0((T_1,T_0],H^{k}(\Sigma))\cap C^1((T_1,T_0],H^{k-1}(\Sigma))$ and that $X$ agrees with \eqref{eq:X0} at $t=T_0$. 

    Now we come to the second part of the proof of this proposition: to verify that the map $\Fsc(t,x,W)$ in \eqref{eq:Wevol} (which we have now fully defined) has the stated uniformly bounded polynomial property in the sense of Section~\ref{sec:analysisbg.N}. Again, this part only requires straightforward arguments; in order to keep the presentation compact here we therefore restrict our discussion to the most important steps. In particular, we do not discuss in any further detail that,
    given a smooth function $Q$ satisfying the bound \eqref{eq:Qinequal.1} and observing \eqref{eq:Wchidef} and \eqref{eq:WOmegadef}, our results about \eqref{eq:Uevol} and \eqref{eq:Zevol} in Section~\ref{sec:variablescont.N} imply the stated property for the components $\Psc_z(\Fsc(t,x,W))$ and $\Psc_U(\Fsc(t,x,W))$ almost directly.

    We now conclude the proof of Proposition~\ref{prop:Fuchsiansystem} by listing the most important arguments that lead to the stated property of the remaining component $\mathscr{P}_{{\mathbf Z}_{\bar l}}(\Fsc(t,x,W))$, again assuming that $Q$ satisfies the bound \eqref{eq:Qinequal.1} observing \eqref{eq:Wchidef} and \eqref{eq:WOmegadef}. Recall that this component was defined above as the sum of the $O(t^\Ltt)$-terms from ${\mathbfscr H}$ in \eqref{eq:Hscrbold.3} and the function $\mathbf H_{\bar i}(t, x, U,{\mathbf Z}_{\bar j})$ given in \eqref{eq:Hbfi}, multiplied with the matrix $\Stt$ in \eqref{eq:Zsymmetriser} from the left. In can be checked easily that the map $\Stt$ satisfies the required property directly as a consequence of \eqref{eq:Wchidef} and \eqref{eq:WOmegadef}. It is also not difficult to show that the $O(t^\Ltt)$-terms from ${\mathbfscr H}$ satisfy this property. Let us focus a bit more on \eqref{eq:Hbfi}. 
    It is straightforward to verify directly from their definitions that $\Ctt^l(t,x,Z)$ in \eqref{eq:Ctt} as well as $(\Dc_{\bar i} \Ctt^{l})(t,x,Z)$ and $\mathbf D\Ctt^l(t,x,Z)$ in \eqref{eq:defdCtt} can be identified with uniformly bounded polynomials in the sense of Section~\ref{sec:analysisbg.N} 
    for 
    \begin{equation}
        \label{eq:deftildechi}
        \tilde\chi(t,x,Z)=\bigl(u, z, z^{-1}, t^{\Ltt} z z_i, \sqrt{u^2 t^{2 \Ltt}+z^2}\bigr)^T,
    \end{equation}
    and 
    \begin{equation}
        \tilde\Omega=\left\{Z\in\Rbb^{2+N} \,\left|\, u>u_*, z>\sqrt{h^{ij}\hat u_{*i}\hat u_{*j}}\right.\right\},
    \end{equation}
    where $u_*>0$ and $\hat u_{*i}$ are the constants given in the proposition. 
    The same conclusions hold, as one can also directly check from their definitions, for $\Htt(t,x,Z)$ (whose polynomial properties have been verified in Section~\ref{sec:variablescont.N}) and therefore for $\Htt_{\bar i}(t,x,Z)$ and $\mathbf D\Htt(t,x,Z)$ (cf.\ \eqref{eq:defdHtt}). The corresponding maps $\Ctt^l(t,x,U)$, $(\Dc_{\bar i} \Ctt^{l})(t,x,U)$, $\mathbf D\Ctt^l(t,x,U)$, $\Htt(t,x,U)$, $\Htt_{\bar i}(t,x,U)$ and $\mathbf D\Htt(t,x,U)$ are then obtained by composing $\tilde\chi$ in \eqref{eq:deftildechi} with the maps \eqref{U2Z} and \eqref{U2X}. This composition does not alter the uniformly bounded polynomial properties of the implied maps, and it follows directly that we have thereby now fully established the statement of Proposition~\ref{prop:Fuchsiansystem}.
\end{proof}

\subsection{The proof of Theorem~\ref{thm:Euler1}}
\label{sec:proof1}
\phantom{.}
\vspace{1ex}

With Proposition~\ref{prop:Fuchsiansystem} at our disposal we are now in the position to prove Theorem~\ref{thm:Euler1}. The proof is mostly an adaptation of previous proofs using the Fuchsian method such as \cite{BeyerOliynyk:2020, BeyerOliynyk:2021, BeyerOliynykZheng:2025, beyerStabilityFLRWSolutions2023}. One of the main differences here is the reference-independence: our result is not a perturbation of a specific reference solution of the Euler equations.

\noindent \underline{The initial value problem and local-in-time existence and uniqueness:} 
We pick so far arbitrary constants $T_0>0$, $\epsilon>0$ and $\eta>0$, and initial data $\nu_{\mathbf{0},i}$ and $\rho_{\mathbf{0}}$ as
in the hypothesis of Theorem~\ref{thm:Euler1}. Then we define 
\begin{equation}
    \label{eq:Wdata}
    W(T_0,x)=W_{\mathbf{0}}(x)=\begin{pmatrix}
        z_{\mathbf{0}}(x)\\
        U_{\mathbf{0}}(x)\\
        {\mathbf Z}_{\mathbf{0},\bar i}(x)
    \end{pmatrix}
    =\begin{pmatrix}
        \sqrt{h^{ij} u_{\mathbf{0},i} u_{\mathbf{0},j}}
        u_\mathbf{0}\\
        \hat h_i{}^j u_{\mathbf{0},j}\\
        \check h_i{}^j u_{\mathbf{0},j}\\
        T_0^{\Ltt+\epsilon}\Dc_{\bar i} u_\mathbf{0}\\
        T_0^{\Ltt+\epsilon}\Dc_{\bar i}|u_\mathbf{0}|_h\\
        T_0^{\Ltt+\epsilon}\Dc_{\bar i}\bigl(
            T_0^{-\Ltt}\frac{u_{\mathbf{0},i}}{|u_\mathbf{0}|_h}
        \bigr)
    \end{pmatrix},
\end{equation}
in full agreement with the formula \eqref{eq:W0fromX0} given in Proposition~\ref{prop:Fuchsiansystem}, provided we choose
\begin{equation}
    \label{eq:Xdata}
        u_\mathbf{0}=T_0^{P-\Ltt} \left(\frac{c_s^2}{\mathcal P_0}\right)^{c_s^2/(1+c_s^2)}\rho_\mathbf{0}^{c_s^2/(1+c_s^2)},\quad
        u_{\mathbf{0},i}=\frac{T_0^{\Ltt} u_\mathbf{0}}{\sqrt{1-|\nu_\mathbf{0}|^2}}\nu_{\mathbf{0},i}.
\end{equation}
Let us then pick the constants $u_*>0$ and $\hat u_{*i}$ in $\Rbb^{N}$ such that $u_\mathbf{0}$ and $\hat h_i{}^j u_{\mathbf{0},i}$ satisfy
\[u_\mathbf{0}>u_*,\quad \hat h^{ij} u_{\mathbf{0},i}u_{\mathbf{0},j}>\hat u_{*i}\hat u_{*}^i\]
for each point in $\Sigma$. 
Given \eqref{eq:parameterchoices}, it is easy to see that  \eqref{eq:theorem1restr} implies \eqref{eq:lem.restr} provided we assume that $\epsilon>0$ is sufficiently small.
Equation \eqref{eq:Wdata} represents the initial data set for the Cauchy problem for the $W$-variables equivalent to the initial value problem of the $X$-variables which are directly related to  the physical variables via 
\eqref{eq:Gammanu.Var2}. As we noticed as part of Proposition~\ref{prop:Fuchsiansystem}, the vector $W_{\mathbf{0}}$ satisfies the consistency conditions \eqref{eq:Uconsist}, \eqref{eq:znormalisationconstr}, \eqref{eq:DUZbold} and \eqref{eq:zconsist}, and our choice of the constants $u_*>0$ and $\hat u_{*i}$ above implies that its values are contained in the $\Omega$ given in \eqref{eq:WOmegadef} for all points on $\Sigma$. 
Proposition~\ref{prop:Fuchsiansystem} asserts the existence of a $T_1\in [0,T_0)$ and a  unique  classical solution 
\begin{equation}
    \label{eq:Wlocalsol}
    W \in C^0\bigl((T_1,T_0],H^{k-1}(\Sigma)\bigr)\cap C^1\bigl((T_1,T_0],H^{k-2}(\Sigma)\bigr)
\end{equation}
with values contained in $\Omega$
of the Cauchy problem of \eqref{eq:Wevol} launched from the initial data $W_{\mathbf{0}}$ in \eqref{eq:Wdata} at $t=T_0$. The solution $W$ satisfies the consistency conditions \eqref{eq:Uconsist}, \eqref{eq:znormalisationconstr}, \eqref{eq:DUZbold} and \eqref{eq:zconsist} for all $(t,x)\in (T_1,T_0]\times\Sigma$.
The $z$, $U$- and ${\mathbf Z}_{\bar l}$-components of $W$ are members of  $C^0((T_1,T_0],H^{k}(\Sigma))\cap C^1((T_1,T_0],H^{k-1}(\Sigma))$, 
$C^0((T_1,T_0],H^{k}(\Sigma))\cap C^1((T_1,T_0],H^{k-1}(\Sigma))$ and $C^0((T_1,T_0],H^{k-1}(\Sigma))\cap C^1((T_1,T_0],H^{k-2}(\Sigma))$, respectively. The function $X$ defined from $U$ via \eqref{U2X} is in $C^0((T_1,T_0],H^{k}(\Sigma))\cap C^1((T_1,T_0],H^{k-1}(\Sigma))$ and is
a solution of  \eqref{eq:PDEgen.transf.nonsymm.N} for all $(t,x)\in (T_1,T_0]\times\Sigma$. 

\phantom{.}

\noindent \underline{The leading-order term and the Fuchsian system for the remainder:} 
On $(0,T_0]\times\Sigma$ we next define  the function 
\begin{equation}
    \label{eq:Wcircformula}
    \mathring W(t,x)=\mathrm{diag}\Bigl(1,
            \id, \id, e^{(\eta \hat h+\check L) \log \frac{t}{T_0}}, \left(\frac{t}{T_0}\right)^{\Ltt+\ep}, \left(\frac{t}{T_0}\right)^{\Ltt+\ep}, e^{(\epsilon h+\check L) \log \frac{t}{T_0}}
        \Bigr)W_{\mathbf{0}}.
\end{equation}
We observe that
\begin{equation}
    \mathring W(T_0,x)=W_{\mathbf{0}}(x),
\end{equation}
and that
\begin{equation}
    \label{eq:backgroundODE}
    A^0(t,x,\mathring W)\partial_t \mathring W=t^{-1}\Asc(t,x,\mathring W)\Pbb \mathring W
\end{equation}
holds for all $(t,x)\in(0,T_0]\times\Sigma$. A crucial property is that $A^0(t,x,W)^{-1}\Asc(t,x,W)$ does not actually dependent on $W$ (cf.\ \eqref{eq:Wevol}, \eqref{WA0} and \eqref{WAsc}). We call \eqref{eq:backgroundODE} the \emph{truncated Euler equations} and notice that \eqref{eq:Wcircformula} is the unique solution of the truncated Euler equations corresponding to our initial data  \eqref{eq:Wdata}. 

The heuristic idea which motivates the definition of $\mathring W$ in \eqref{eq:Wcircformula} is that $\mathring W$
is expected to be the leading-order term of the solution $W$ of the full Euler equations (if $W$ extends from $t=T_1$ to $t=0$ of course), that is, the \emph{remainder quantities}
\begin{equation}
    \label{Wbardef}
    \bar U=U-\mathring U,\quad \bar {\mathbf Z}_{\bar i}={\mathbf Z}_{\bar i}-\mathring {\mathbf Z}_{\bar i},
    \quad \bar W=W-\mathring W,
\end{equation}
are expected to be uniformly small  (but do not necessarily decay) in some appropriate sense near $t=0$. 
Especially we observe that
\begin{equation}
    \label{eq:barvanishingID}
    \bar W(T_0)=0.
\end{equation}

We next derive the evolution equation implied for the remainder quantity $\bar W$ implied by \eqref{eq:Wevol} and \eqref{Wbardef}. 
A straightforward calculation, which makes crucial use of  \eqref{eq:backgroundODE} (and especially of the before mentioned $W$-independence of $A^0(t,x,W)^{-1}\Asc(t,x,W)$), yields
\begin{align}
    &A^0(t,\mathring  W+\bar W)\partial_t \Wb
    + \alpha t^{-1+\bar p} \tau^{\bar l}_l b_{\bar l}{}^\Omega A^{l}(t,\mathring  W+\bar W) \Dc_\Omega \Wb\label{eq:Wbarevol}\\
    = &t^{-1}\Asc(t,\mathring  W+\bar W)\Pbb\Wb+t^{-1}\Tsc_2(t,\mathring W,\Wb)
    +t^{-1+\bar Q}\Tsc(t,\mathring W,\Dc_\Omega \mathring W,\Wb),\notag
\end{align}
where
\begin{align}
    \Tsc_2(t,\mathring W,\Wb)&\label{WbarTwo}\\
    &\hspace{-1.5cm}:=\left(
    \frac {(1-c_s^2)z {\check L}^{kq} \bar{\check u}_{q} \bar{\check u}_k}{u^2 t^{2\Ltt}+(1-c_s^2)z^2}\quad    
    \left(\begin{smallmatrix}
    -\frac{c_s^2 \check L^{ll'}\bar{\check u}_l\bar{\check\Vsp}_{l'}}{t^{2\Ltt}\Vtm^2+(1-c_s^2) \Vspnorm^2}\Vtm\\
    0\\
    0
    \end{smallmatrix}\right)\quad - \left(\begin{smallmatrix}
        {\check L}^{pq}\bar{\check u}_{q} \bar{\check u}_p & 0 & 0 \\
    0 &  -u^2 |u|_h^{-2}   {\check L}^{pq} \bar{\check u}_{q} \bar{\check u}_p&  0 \\
    0 & 0 & 
    |u|_h^2{   {\check L}^{rq}\bar{\check u}_{q} \bar{\check u}_{r}} h^{pk}
    +{  2   }{ |u|_h^2}u^p {\check L}^{qk} \bar{\check u}_{q}
    \end{smallmatrix}\right)\bar{\mathbf Z}_{\bar i}\right),\notag
\end{align}
and
\begin{align}
    t^{-1+\bar Q}\Tsc(t,\mathring W,\Dc_\Omega \mathring W,\Wb)=&t^{-1}(\Fsc_2(t,\mathring W+\Wb)-\Tsc_2(t,\mathring W,\Wb))\label{WbarF}\\
    &+t^{-1+Q}\Fsc(t,\mathring W+\Wb)-\alpha t^{-1+\bar p} \tau^{\bar l}_l b_{\bar l}{}^\Omega A^{l}(t,\mathring  U+\bar U) \Dc_\Omega \mathring W\notag.
\end{align}
The quantities $A^0$, $A_l$, $\mathscr A$, $\Pbb$, $\Fsc_2$, $\Fsc$ and $Q$ were given in \eqref{WPbb} -- \eqref{WFTwo} and \eqref{eq:Qinequal.1}. The reason for the particular definition of the map \eqref{WbarTwo} will be discussed below.
It is not difficult to establish using Proposition~\ref{prop:Fuchsiansystem} and equations \eqref{eq:Wcircformula} and \eqref{Wbardef} that the map $\Tsc$ can be identified with a uniformly bounded polynomial in the sense of Section~\ref{sec:analysisbg.N} for
\begin{itemize} 
    \item the map 
    \begin{align*}
        \bar\chi(t,x,\Wb)=\bigl(&
        \bar z, \bar u, \bar{\hat u}_i, \bar{\check u}_i, \bar{\mathbf Z}_{\bar j}, |\mathring u(t,x)+\bar u|^{-1}_h, 
        \sqrt{(\mathring u(t,x)+\bar u)^2 t^{2 \Ltt}+|\mathring u(t,x)+\bar u|^{2}_h},\\
        & 1/\sqrt{(\mathring u(t,x)+\bar u)^2 t^{2 \Ltt}+|\mathring u(t,x)+\bar u|^{2}_h}, 1/((\mathring u(t,x)+\bar u)^2 t^{2 \Ltt}+(1-c_s^2)|\mathring u(t,x)+\bar u|^{2}_h)\bigr)^T,
    \end{align*}
\item the open ball $\bar\Omega$ with radius $R$ around the origin in $\Rbb^{2+4N+N^2}$ where $R>0$ is small enough so that $\bar W\in\bar\Omega$ implies that $W=\mathring W(t,x)+\bar W\in\Omega$ (cf.\ \eqref{eq:WOmegadef}) for every $(t,x)\in (0,T_0]\times\Sigma$,  
\end{itemize}
provided we choose the smooth function $\bar Q$ on $\Sigma$ such that
\begin{equation}
    \label{eq:Qbarinequal.1}
    0<\bar Q<\min\{Q,\bar p,\Ltt+\epsilon,\Delta P\}=\min\{\Ltt,\Delta P,\bar p-\Ltt-\epsilon,q-\Ltt, \bar q-\Ltt\}
\end{equation}
everywhere on $\Sigma$. This follows from  Definition~\ref{def:asympKasner} and \eqref{eq:Wcircformula} where $\Delta P$ represents the smallest non-zero eigenvalue of $\check L_i{}^j$ (cf.\ \eqref{eq:Kttspecdecomp}). 

Because $W$ is a solution of \eqref{eq:Wevol} for all $(t,x)\in (T_1,T_0]\times\Sigma$ with $W(T_0,x)=W_{\mathbf{0}}(x)$, it follows that $\bar W=W-\mathring W$ is a solution of \eqref{eq:Wbarevol} for all $(t,x)\in (T_1,T_0]\times\Sigma$ with $\bar W(T_0,x)=0$.

In preparation for the Fuchsian analysis we now also split the map $\Tsc$ into the following two parts  
\begin{equation}
    \label{WbarF.2}
    \Tsc(t,\mathring W,\Dc_\Omega \mathring W,\Wb)=:\tilde\Tsc(t,\mathring W,\Dc_\Omega \mathring W)+\Tsc_0(t,\mathring W,\Dc_\Omega \mathring W,\Wb),
\end{equation}
where
\begin{equation}
    \label{WbarF.3}
    \tilde\Tsc(t,\mathring W,\Dc_\Omega \mathring W)=\Tsc(t,\mathring W,\Dc_\Omega \mathring W,0),\quad \Tsc_0(t,\mathring W,\Dc_\Omega \mathring W,0)=0,
\end{equation}
which means that
\begin{equation}
    \Tsc_2(t,\mathring W,0)=0.
\end{equation}

\phantom{.}

\noindent \underline{Fuchsian analysis and global existence:} We now come to the central step in the proof of Theorem~\ref{thm:Euler1}: the Fuchsian analysis. The first step is to show that the PDE \eqref{eq:Wbarevol} satisfies the so-called \emph{coefficient assumptions} which are the foundation of the Fuchsian global existence theory established in \cite{BOOS:2021}\footnote{The existence result we employ is Theorem~A.2 from \cite{BeyerOliynyk:2020} together with Remark~A.3 from that same article, which, together, amount to a slight generalization of the Fuchsian global existence theory from \cite{BOOS:2021}. Notice in all of what follows we adopt the convention that $T_0$ and $t$ are positive here.}. 

We refer the reader to the details about the coefficient assumptions \cite[Section~A.1]{BeyerOliynyk:2020}. 
    For now, we let $T_0\in (0,1]$ be an arbitrary constant. However, in a later step of the proof we will want to use the size of $T_0$ to control certain error terms that arise in our reference-independent approach. We therefore pay particular attention to how the size of $T_0$ affects each of the following steps and, in particular, we want to determine which quantities can be picked \emph{independently of $T_0$} within the interval $(0,1]$.
    
    We identify the constant $R>0$ in \cite[Section~A.1]{BeyerOliynyk:2020} with the constant $R$ we chose above to define the set $\bar\Omega$, but will make use of the opportunity to shrink $R$ whenever this is useful for our arguments. The constant $\Rc>0$ in \cite[Section~A.1]{BeyerOliynyk:2020} is an arbitrary constant so far, but later it will have to be chosen so that $\sup_{(0,T_0]\times\Sigma}|\mathring W|<\Rc$ and $\sup_{(0,T_0]}\norm{\Dc\mathring W}_{H^{k-3}(\Sigma)}<\Rc$. Equation~\eqref{eq:Wcircformula} allows us to easily show that it is possible to pick $\Rc$ so that this holds independently of $T_0$.
    
    Let the vector bundles $Z_1$ and $Z_2$ in \cite[Section~A.1]{BeyerOliynyk:2020} be the trivial $\Rbb^{2+4N+N^2}$- and $\Rbb^{2+6N+5N^2+N^3}$-bundles over $\Sigma$, respectively, and consider $w_1=\mathring W(t)$ as a section in a bounded subset $\Zc_1$ of $Z_1$ and $w_2=(\mathring W(t),\Dc_\Omega \mathring W(t))$ as a section in a bounded subset $\Zc_2$ of $Z_2$. As discussed above, because of the particular form of $\mathring W$, uniform bounds for $\Zc_1$ and $\Zc_2$ over $(0,T_0]\times\Sigma$ can be chosen independently of $T_0$.
    
    Let the vector bundle $V$ in \cite[Section~A.1]{BeyerOliynyk:2020} be the trivial $\Rbb^{2+4N+N^2}$-bundle over $\Sigma$; notice that our solutions $\bar W(t)$ of \eqref{eq:Wbarevol} will be considered as sections in the $R$-ball subset of this bundle $V$.
    
    The map $\Pbb$ defined in \eqref{WPbb} and \eqref{eq:Pi} clearly satisfies property (1) of \cite[Section~A.1]{BeyerOliynyk:2020}. Moreover, our map $A^0$ in \eqref{eq:Wbarevol} given by \eqref{WA0} and \eqref{eq:Zsymmetriser} clearly satisfies the properties of the map $B^0$ of (2) in \cite[Section~A.1]{BeyerOliynyk:2020}. The same is true for the map $\Asc+\Pbb\Fsc_2$ in \eqref{eq:Wbarevol} given by \eqref{WAsc}, \eqref{WFTwo} and \eqref{eq:Gscdef} when it is identified with $\Bc$ in \cite[Section~A.1]{BeyerOliynyk:2020}. The precise choice of the constants $\gamma_1$, $\kappa$ and $\gamma_2$ in \cite[Equation~(A.3)]{BeyerOliynyk:2020} is irrelevant for us here, but it is important again that they can be chosen independently of $T_0$. In the same way the precise form of the maps $\tilde B^0$ and $\tilde\Bc$ in \cite[Section~A.1]{BeyerOliynyk:2020} is not of importance for us.
    
    The map $F$ in (3) in \cite[Section~A.1]{BeyerOliynyk:2020} satisfies the required properties if we identify the constant $p$ in \cite[Section~A.1]{BeyerOliynyk:2020} with $\bar Q$, the map $\tilde F$ in \cite[Section~A.1]{BeyerOliynyk:2020} with $\tilde\Tsc$ (cf.\ \eqref{WbarF.3}), the map $F_1$ in \cite[Section~A.1]{BeyerOliynyk:2020} with $0$, and the map $F_2$ in \cite[Section~A.1]{BeyerOliynyk:2020} with $\Pbb^\perp \Tsc_2$ (cf.\ \eqref{WbarTwo}). Crucially, the last map is quadratic in $\Pbb\bar W$. We can therefore make $T_0$-independent choices for $\lambda_1$ and $\lambda_2$, and particularly, $\lambda_3=O(R)$ in \cite[Section~A.1]{BeyerOliynyk:2020}. Showing that the map $B$ in (4) in \cite[Section~A.1]{BeyerOliynyk:2020} satisfies the required properties is very straightforward and is omitted here.
    
    Finally (5) in \cite[Section~A.1]{BeyerOliynyk:2020} is concerned with the map $\Div \! B$ which is defined locally in \cite[Equation~(2.6)]{BeyerOliynyk:2020}: In the case of \eqref{eq:Wbarevol}, it is not difficult to show that $\Pbb^\perp\Div \! B=0$ and hence we can choose the constants $\beta_4$, $\beta_5$, $\beta_6$ and $\beta_7$ in \cite[Section~A.1]{BeyerOliynyk:2020} to be zero. Regarding $\Pbb\Div \! B$, it is again not difficult to establish that we can choose $\theta>0$ in \cite[Section~A.1]{BeyerOliynyk:2020} independently of $T_0$ and that we can set $\beta_0=\beta_2=0$. What is crucial here\footnote{This is exactly the reason why the $z$-component was added to the $W$-vector in Proposition~\ref{prop:Fuchsiansystem}.} is that only the variables $z$ and $u$ appear in $A^0$ in \eqref{WA0} and hence the only contributions to the time derivatives terms of $\Pbb\Div \! B$ are from the components of $\mathring W$ in \eqref{eq:Wcircformula} that are constant in time. This is important because time derivatives of the other components of $\mathring W$ can \emph{not be bounded uniformly by a constant that is independent of $T_0$}. In any case, with the same arguments we also conclude that all $t^{-1}$-terms in $\Pbb\Div \! B$ are $O(\Pbb(\bar W))$ with constants independent of $T_0$ as required and, especially, that we can pick $\beta_1=\beta_3=O(R)$ independently of $T_0$. This means that we have now verified the coefficient assumptions.
    
    Now we are in the position to apply the Fuchsian theorem \cite[Theorem~A.2, Remark~A.3 (ii)]{BeyerOliynyk:2020}.
    We identify the differentiability parameter $k$ in \cite[Theorem~A.2, Remark~A.3 (ii)]{BeyerOliynyk:2020} with the value $k-2$ here. This means that our assumption $k>N/2+3$ in Theorem~\ref{thm:Euler1} is identical to the condition $k>N/2+1$ in \cite[Theorem~A.2, Remark~A.3 (ii)]{BeyerOliynyk:2020}. 
    
    From our discussion above about the time derivative of $\mathring W$, we see that the next condition in \cite[Theorem~A.2]{BeyerOliynyk:2020}, namely 
    \begin{equation*}
        \sup_{0<t<T_0} \max\Bigl\{ \norm{\Dc w_1(t)}_{H^{k-3}(\Sigma)},
        \norm{t^{1-q}\del{t}w_1(t)}_{H^{k-3}(\Sigma)}\Bigr\} < \frac{\Rsc}{C_{\text{Sob}}},
    \end{equation*}
    can unfortunately not be satisfied independently of $T_0$, irrespective of the value of $q>0$. However, inspecting the proof of \cite[Theorem~A.2]{BeyerOliynyk:2020}, we observe that the condition about the time derivative only enters the analysis of the map $\Div \! B$. And, as discussed above, the special structure of $A^0$ and $\mathring W$ implies that only components of $\mathring W$ contribute that are constant in time. It is this special structure that allows us to apply \cite[Theorem~A.2]{BeyerOliynyk:2020} despite the fact that the above condition can literally not be satisfied independently of $T_0$. 
    
    Now, since $\mathtt{b}=0$ and we can make all the non-zero constants $\beta_i$ and $\lambda_3$ arbitrarily small independently of $T_0$ by shrinking $R$, the inequality for $\gamma$ in \cite[Theorem~A.2, Remark~A.3 (ii)]{BeyerOliynyk:2020} can be satisfied. 

It is now a consequence of \cite[Theorem~A.2, Remark~A.3 (ii)]{BeyerOliynyk:2020} and \eqref{eq:barvanishingID} that there is a $\delta_0$, whose value is independent of $T_0$, such that the initial value problem 
of \eqref{eq:Wbarevol} with initial data \eqref{eq:barvanishingID} has a classical solution \emph{globally} in time, as long as the smallness condition
\begin{equation} 
    \label{eq:smallnesscond}
    \int_0^{T_0} s^{-1+\bar Q} \norm{\tilde\Tsc(s,\mathring W(s))}_{H^{k-2} (\Sigma)} ds<\delta_0
\end{equation}
holds (cf.\ \eqref{WbarF.3}).
This means that the local-in-time solution
$W$ in \eqref{eq:Wlocalsol} 
of \eqref{eq:Wevol} determined by initial data $W_{\mathbf{0}}$ in \eqref{eq:Wdata}, together with all the consistency conditions
\eqref{eq:Uconsist}, \eqref{eq:znormalisationconstr}, \eqref{eq:DUZbold} and \eqref{eq:zconsist},
extends as a classical solution to $t=0$. Especially, \eqref{eq:Wlocalsol} holds with $T_1=0$, and $W$ takes values in $\Omega$ (cf.\ \eqref{eq:WOmegadef}) for all $(t,x)\in (0,T_0]\times\Sigma$. All that is required to do to satisfy the smallness condition \eqref{eq:smallnesscond} is to shrink the size of $T_0>0$.

Because the vector $U$ determined by the $U$-component of $W$ satisfies \eqref{eq:DUZbold}, it follows
\begin{equation}
    \label{eq:Ureg.1}
    U\in C^0((0,T_0],H^{k}(\Sigma))\cap C^1((0,T_0],H^{k-1}(\Sigma)).
\end{equation}
The same holds for $X$ determined from \eqref{U2X}, and 
we obtain from \eqref{eq:Gammanu.Var2} that the physical fluid fields have the property
\begin{equation}
    \rho,\Gamma, \nu_i\in C^0((0,T_0],H^{k}(\Sigma))\cap C^1((0,T_0],H^{k-1}(\Sigma)).
\end{equation}

The Fuchsian theorem \cite[Theorem~A.2, Remark~A.3 (ii)]{BeyerOliynyk:2020} also implies that the remainder $\Wb$ satisfies the energy estimate
\begin{equation}
    \label{eq:FuchsianEnergyEstimate}
    \norm{\bar W(t)}_{H^{k-2}(\Sigma)}  \lesssim \int^{T_0}_t s^{-1+\bar Q} \norm{\tilde \Tsc(s,\mathring W(s))}_{H^{k-2}(\Sigma)}\, ds
\end{equation}
for all $t\in (0,T_0]$. In addition to \eqref{eq:Ureg.1}, this, together with \eqref{eq:Wcircformula} and \eqref{eq:theorem1restr}  implies that $U\in L^\infty((0,T_0],H^{{k-2}}(\Sigma))$\footnote{We remark that we can in general not conclude anything better than the statement that $t^{\Ltt+\ep}U\in L^\infty((0,T_0],H^{k-1}(\Sigma))$ because of \eqref{eq:DUZbold}, \eqref{eq:DUZbold.1} and \eqref{eq:DUZbold.2}.}.

Another consequence of the Fuchsian theorem \cite[Theorem~A.2, Remark~A.3 (ii)]{BeyerOliynyk:2020} is the existence of certain limits and decay estimates at $t=0$. It is a remarkable consequence of the special structure of \eqref{eq:Wevol}, in particular that $\Pbb^\perp A^l=A^l\Pbb^\perp=0$ (cf.\ \eqref{WPbb} and \eqref{WAl}), i.e.\ the entire $U$-component of \eqref{eq:Wevol} does not have any spatial derivative terms, that one can sharpen the statements regarding convergence and decay given in \cite[Theorem~A.2; Remark~A.3]{BeyerOliynyk:2020}\footnote{For the special case of the $U$-component of \eqref{eq:Wevol},  the $H^{k-1}$-norm used in \cite[Proposition~3.2]{BOOS:2021} can be replaced by the $H^k$-norm. As a consequence, the $H^{k-1}$-norm on the left side of the estimate in \cite[equation (3.82)]{BOOS:2021} can be replaced by the $H^k$-norm.} and escalate certain $H^{k-3}$-norms to $H^{k-2}$-norms.
We conclude that there exists a non-negative function $u_*\in H^{k-2}(\Sigma)$ (instead of $H^{k-3}(\Sigma)$) and functions $\hat u_{*,i}\in H^{k-2}(\Sigma)$ (instead of $H^{k-3}(\Sigma)$) with $\check h_i{}^j \hat u_{*,j}=0$, together with a constant $\kappa>0$, such that 
\begin{align}
    \label{eq:fluidestimate.1}
    &\Bigl\|\ub(t)-{T_0^{P-L} \left(\frac{c_s^2}{\mathcal P_0}\right)^{c_s^2/(1+c_s^2)}\rho_\mathbf{0}^{c_s^2/(1+c_s^2)}}u_*\Bigr\|_{H^{k-2}(\Sigma)}\\
    &+\left\|{\bar{\hat u}_i(t)-\frac{T_0^{P} \left(\frac{c_s^2}{\mathcal P_0}\right)^{c_s^2/(1+c_s^2)}\rho_\mathbf{0}^{c_s^2/(1+c_s^2)}}{\sqrt{1-|\nu_\mathbf{0}|^2}}\hat u_{*,i}}\right\|_{H^{k-2}(\Sigma)}\lesssim t^\kappa\notag
\end{align}
for all $t\in (0,T_0]$. 
A similar argument can be used to escalate the decay estimates from \cite[Theorem~A.2; Remark~A.3]{BeyerOliynyk:2020} from the $H^{k-3}$-norm to the $H^{k-2}$-norm which yields
\begin{equation}
    \label{eq:fluidestimate.2}
    \norm{\bar{\check u}_i(t)}_{H^{k-2}(\Sigma)}\lesssim t^\kappa,
\end{equation}
for all $t\in (0,T_0]$.

\phantom{.}

\noindent \underline{Asymptotics of the physical fluid variables:} 
Recall from \eqref{eq:Wcircformula} and \eqref{Wbardef} that $u=\mathring u+\bar u$ and $\hat u_i=\mathring {\hat u}_i+\bar {\hat u}_i$ and that $\mathring u$ and $\mathring{\hat u}_i$ are the constants in time given by \eqref{eq:Wdata} and \eqref{eq:Xdata}. 
By decreasing the size of $T_0$ further if necessary, the energy estimate \eqref{eq:FuchsianEnergyEstimate} can be used to guarantee that the sizes $\norm{u_*}_{H^{k-2}(\Sigma)}$ and $\norm{\hat u_{*,i}}_{H^{k-2}(\Sigma)}$ are as small as we like. In this way we can ensure that 
the limits of $u$ and $\hat u_i$ are non-vanishing on $\Sigma$. 
In fact, we can conclude from \eqref{eq:Wdata}, \eqref{eq:Wcircformula}, \eqref{eq:fluidestimate.1} and \eqref{eq:fluidestimate.2},  that
\begin{align}
    \left\|{u(t)-{T_0^{c_s^2\frac{1-P}{1-c_s^2}} \left(\frac{c_s^2}{\mathcal P_0}\right)^{c_s^2/(1+c_s^2)}\rho_\mathbf{0}^{c_s^2/(1+c_s^2)}}(1+u_*)}\right\|_{H^{k-2}(\Sigma)}&\lesssim t^\kappa,\\
    \left\|{\hat h_i{}^j u_j(t)-\frac{T_0^{P} \left(\frac{c_s^2}{\mathcal P_0}\right)^{c_s^2/(1+c_s^2)}\rho_\mathbf{0}^{c_s^2/(1+c_s^2)}}{\sqrt{1-|\nu_\mathbf{0}|^2}}(\hat h_i{}^j\nu_{\mathbf{0},j}+\hat u_{*,i})}\right\|_{H^{k-2}(\Sigma)}
    &\lesssim t^\kappa,\\
    \norm{\check h_i{}^j u_j(t)}_{H^{k-2}(\Sigma)}&\lesssim t^\kappa,
\end{align}
for all $t\in (0,T_0]$, where we have used the freedom to adapt the value of the unspecified positive constant $\kappa$. It is then not difficult to show by using \eqref{eq:Gammanu.Var2} that
\begin{equation}
    \Bnorm{\nu_i(t)-\hat h_i{}^j\frac{ \nu_{\mathbf{0},j}+\hat u_{*,j}}{\sqrt{\hat h^{kl}(\nu_{\mathbf{0},k}+\hat u_{*,k})(\nu_{\mathbf{0},l}+\hat u_{*,l})}}}_{H^{k-2}(\Sigma)}
    \lesssim t^\kappa+t^{\Ltt},
\end{equation}
for all $t\in (0,T_0]$, which implies the required estimate \eqref{eq:thm1.tiltestimate} (if we adjust the constant $\kappa>0$ again if necessary).
The physical fluid tilt vector $\nu_i$ therefore indeed approaches a unit vector in the eigenspace of the largest eigenvalue $P$. In consistency with this it is also a straightforward consequence of \eqref{eq:Gammanu.Var2} that the Lorentz factor $\Gamma$ diverges, that is,
\begin{equation}
    \Bnorm{t^{(P-c_s^2)/(1-c_s^2)}\Gamma-T_0^{(P-c_s^2)/(1-c_s^2)}\frac{\sqrt{\hat h^{kl}(\nu_{\mathbf{0},k}+\hat u_{*,k})(\nu_{\mathbf{0},l}+\hat u_{*,l})}}{\sqrt{1-|\nu_\mathbf{0}|^2}(1+u_*)}}_{H^{k-2}(\Sigma)}
    \lesssim t^\kappa+t^{\Ltt},
\end{equation}
for all $t\in (0,T_0]$.
We have therefore established the required estimate \eqref{eq:thm1.Lorentzestimate}.
Next we also conclude from \eqref{eq:Gammanu.Var2} that the fluid density satisfies
\begin{equation}
    \Bnorm{t^{c_s^2\frac{1-P}{1-c_s^2}}\rho^{c_s^2/(1+c_s^2)}-T_0^{c_s^2\frac{1-P}{1-c_s^2}} \rho_\mathbf{0}^{c_s^2/(1+c_s^2)}(1+u_*)}_{H^{k-2}(\Sigma)}\lesssim t^\kappa,
\end{equation}
for all $t\in (0,T_0]$.
The following yields the required estimate \eqref{eq:thm1.rhoestimate} and is a straightforward application of the Moser inequality:
\begin{align*}
&\Bnorm{t^{(1+c_s^2)\frac{1-P}{1-c_s^2}}\rho-T_0^{(1+c_s^2)\frac{1-P}{1-c_s^2}} \rho_\mathbf{0}(1+u_*)^{(1+c_s^2)/c_s^2}}_{H^{k-2}(\Sigma)}
=\Bnorm{\frac{\mathcal P_0}{c_s^2} u^{(1+c_s^2)/c_s^2}
-T_0^{(1+c_s^2)\frac{1-P}{1-c_s^2}} \rho_\mathbf{0}(1+u_*)^{(1+c_s^2)/c_s^2}}_{H^{k-2}(\Sigma)}\\
\lesssim&\left\|{u(t)-{T_0^{c_s^2\frac{1-P}{1-c_s^2}} \left(\frac{c_s^2}{\mathcal P_0}\right)^{c_s^2/(1+c_s^2)}\rho_\mathbf{0}^{c_s^2/(1+c_s^2)}}(1+u_*)}\right\|_{H^{k-2}(\Sigma)}\lesssim t^\kappa,
\end{align*}
which holds for all $t\in (0,T_0]$. This completes the proof of Theorem~\ref{thm:Euler1}.

\section{The strongly anisotropic asymptotically extremely-tilted fluid regime near the big bang}
\label{sec:thm2}

\subsection{The second main theorem}
\label{subsec:thm2}
The main technical problem in analysing the asymptotically extremely-tilted regime of perfect fluids on spacetimes with Kasner big bang asymptotics is that, while the $U$-variables are expected to be well-behaved heuristically (see Section~\ref{sec:convvariables.N}), their evolution equations \eqref{eq:Uevol} are not uniformly symmetrisable. By contrast, the variable $Z$ introduced in Section~\ref{sec:variablescont.N} obeys a uniformly symmetrisable evolution system \eqref{eq:Zevol} (provided $u$ and $z$ are positive and certain consistency conditions hold), but this system contains singular terms with the ``wrong'' sign for the Fuchsian analysis. In the proof of Theorem~\ref{thm:Euler1} we have addressed this by constructing a combined system \eqref{eq:Wevol} for the variable $W$ in \eqref{eq:Wdef}, as stated in Proposition~\ref{prop:Fuchsiansystem}, in which these two sets of variables  play different roles as lower-order and top-order variables, respectively. However, as a consequence of this construction, the restriction \eqref{eq:lem.restr} arises, which in turn implies the restriction \eqref{eq:theorem1restr}. This restriction can be problematic, especially in the low-speed-of-sound case and for backgrounds that converge relatively slowly to their Kasner big-bang limits at $t=0$.

Intuitively, the origin of the problematic ``wrong sign'' for the $Z$-variable is the factor $t^{-\Ltt}$ in the definition \eqref{U2Z} of $z_i$. This factor is needed in order to derive a uniformly symmetrisable system for $Z$ (and, as a consequence, for $W$, since this symmetriser is inherited by the $W$-system \eqref{eq:Wevol}). The heuristics in Section~\ref{sec:convvariables.N} suggests that $\hat h_i{}^j u_j/|u|_h=t^\Ltt \hat h_i{}^j z_j$ should be convergent at $t=0$, and hence $\hat h_i{}^j z_j$ is $O(t^{-\Ltt})$. In contrast to this, the component $\check h_i{}^j u_j/|u|_h$ is expected to decay to zero at $t=0$, and so $\check h_i{}^j z_j$ could potentially be significantly less singular at $t=0$ than $O(t^{-\Ltt})$. 

The main idea of the  approach discussed in this section now is to eliminate the most singular component $\hat h_i{}^j z_j$ from the list of variables and then identify conditions which imply that the less singular remaining component $\check h_i{}^j z_j$ \emph{actually decays} at $t=0$. 
To this end, we restrict ourselves now to spacetimes with Kasner big bang asymptotics according to Definition~\ref{def:asympKasner} for which there is a smooth covector field $\hat\kappa_i$ with time-independent orthonormal frame components such that
    \begin{equation}
        \label{eq:1deigenspace}
        \hat h_i{}^j=\hat\kappa_i\hat\kappa^j,\quad \hat\kappa_i\hat\kappa^i=1.
    \end{equation} 
This means that the eigenspace corresponding to the largest eigenvalue $P$ is one-dimensional everywhere on $\Sigma$. Under this assumption we can write
    \begin{equation}
        \label{eq:zi.oned}
        z_i=\zeta \hat\kappa_i+\check z_i,\quad\check z_i\hat\kappa^i=0.
    \end{equation}
The normalisation condition \eqref{eq:znormalisationconstr} implies 
    \begin{equation}
        \label{eq:zeta.oned}
        \zeta=t^{-\Ltt}\sqrt{1-t^{2\Ltt}\check z_i\check z^i}.
    \end{equation}
As we generally assume that $\hat u_i$ does not vanish anywhere we do not need to worry about possible sign changes of $\zeta$ and can hence restrict to the positive sign in \eqref{eq:zeta.oned} without loss of generality.

The key idea is now to use \eqref{eq:zi.oned} and \eqref{eq:zeta.oned} to eliminate the problematic $\hat h_i{}^j z_j$-component from our list of variables. By also imposing a sufficiently strong anisotropy condition (hence the label ``strong anisotropy'') we can then ensure that the remaining component $\check h_i{}^j z_j$ decays (despite the $t^{-\Ltt}$-factor in the definition of $z_i$).
\begin{thm}
    \label{thm:Euler2}
    Let $(M,g_{\mu\nu})$ be a spacetime with Kasner big bang asymptotics as in Definition~\ref{def:asympKasner}. 
    Suppose that there is a smooth unit covector field $\hat\kappa_i$ with time-independent orthonormal frame components such that \eqref{eq:1deigenspace} holds for the eigenspace of the largest eigenvalue $P$ of the asymptotic rescaled Weingarten map $\Ktt_i{}^j$ in \eqref{eq:Kttdef}.   
    Pick $\mathcal P_0>0$ and $c_s^2\in (0,1)$ such that
    \begin{equation}
        \label{eq:theorem2restr}
        0<\frac{P(x)-c_s^2}{1-c_s^2}<\min\{1,q(x)+\bar p(x),P(x)-\check P(x)\},
    \end{equation}
    for each point $x$ on $\Sigma$, where $\check P$ is the largest eigenvalue of $\check h_k{}^i\Ktt_i{}^j\check h_j{}^l$ at each point in $\Sigma$.
    Given arbitrary fluid initial data $\nu_{\mathbf{0},i}\in H^{k}(\Sigma)$ and $\rho_{\mathbf{0}}\in H^{k}(\Sigma)$ with $k>N/2+2$ such that $|\nu_\mathbf{0}|^2_h<1$ and neither $\rho_{\mathbf{0}}$ nor $\hat \kappa^j\nu_{\mathbf{0},j}$ vanish anywhere on $\Sigma$, the Cauchy problem of the Euler equations for a perfect fluid with linear equation of state \eqref{eq:EOS} and speed of sound parameter $c_s^2$ launched from $t=T_0>0$ with $T_0$ sufficiently small has a unique global classical solution  $\rho,\Gamma,\nu_i\in C^0((0,T_0],H^{k}(\Sigma))\cap C^1((0,T_0],H^{k-1}(\Sigma))\cap L^\infty((0,T_0],H^{k-1}(\Sigma))$, and $\rho>0$ and $0<\nu_i\nu^i<1$ everywhere on $(0,T_0]\times\Sigma$.

    Moreover, there exist non-negative functions $u_*,z_*\in H^{k-2}(\Sigma)$  and a constant $\kappa>0$ such that
    \begin{equation}
        \label{eq:thm2.rhoestimate}
        \Bnorm{t^{(1+c_s^2)\frac{1-P}{1-c_s^2}}\rho-T_0^{(1+c_s^2)\frac{1-P}{1-c_s^2}} \rho_\mathbf{0}(1+u_*)^{(1+c_s^2)/c_s^2}}_{H^{k-2}(\Sigma)}\lesssim t^\kappa,
    \end{equation}
    \begin{equation}
        \label{eq:thm2.Lorentzestimate}
        \Bnorm{t^{(P-c_s^2)/(1-c_s^2)}\Gamma-T_0^{(P-c_s^2)/(1-c_s^2)}\frac{ |\nu_\mathbf{0}|}{\sqrt{1-|\nu_\mathbf{0}|^2}}\frac{1+z_{*}}{1+u_*}}_{H^{k-2}(\Sigma)}
        \lesssim t^\kappa,
    \end{equation}
    and
    \begin{equation}
        \label{eq:thm2.tiltestimate}
        \Bnorm{\nu_i(t)-\hat \kappa_i}_{H^{k-2}(\Sigma)}
    \lesssim t^\kappa,
    \end{equation}
    for all $t\in (0,T_0]$.
    This means that the fluid represented by this solution is \emph{asymptotically extremely-tilted} near the big bang, in the sense that the tilt vector field $\nu_i$ converges to a unit vector in the eigenspace of the largest eigenvalue $P$ of $\Ktt_i{}^j$ at each point in $\Sigma$ and that the Lorentz factor diverges in the limit $t\searrow 0$. The implicit constants here depend on the spacetime, the value of $c_s^2$ and the fluid initial data, but in particular not on $T_0$.
\end{thm}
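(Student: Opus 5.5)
I will follow the same reference-independent Fuchsian strategy as in the proof of Theorem~\ref{thm:Euler1}, but with a reduced set of variables that avoids the top-order singular component $\hat h_i{}^j z_j$. Starting from the $Z$-system \eqref{eq:Zevol}, I will use \eqref{eq:zi.oned}--\eqref{eq:zeta.oned} to replace $z_i$ by the pair $(\check z_i,\zeta)$ and eliminate $\zeta$ altogether. This gives a closed symmetric hyperbolic system for the reduced vector $Y=(u,z,\check z_i)$, where $\check z_i$ satisfies the constraint $\hat\kappa^i\check z_i=0$. That constraint will be propagated by the enlargement device of Section~\ref{sec:nonlineartransf}, adding an $\eta\hat h$-type damping term exactly as in \eqref{eq:PDEgen.transf.nonsymm.NN.off.2}.

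The symmetriser will be the restriction of the diagonal matrix $\Stt$ in \eqref{eq:Zsymmetriser}. It stays uniformly positive definite as long as $u$ and $z$ are bounded away from zero, because $t^{\Ltt}\check z_i$ is bounded. The key computation is the singular $t^{-1}$ coefficient of the $\check z_i$ equation. From \eqref{eq:Hsc}, the $-\Ltt h_j{}^k$ term combines with $\check L_j{}^k$ on the range of $\check h$. So the linearised matrix on $\check z$ is $\check L_j{}^k-\Ltt\check h_j{}^k$, and its smallest eigenvalue is $(P-\check P)-\Ltt$. This is positive precisely by the strong anisotropy part of \eqref{eq:theorem2restr}, so $\check z_i$ becomes a \emph{decaying} variable; the same mechanism makes $\check h_i{}^j u_j = t^{\Ltt} z\check z_i$ decay.

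The remaining singular terms must be $O(t^{-1+Q})$ for some $Q>0$. I expect this to hold under $Q<\min\{\Ltt, q+\bar p-\Ltt, P-\check P-\Ltt\}$. The terms proportional to $\dot n_i$ and the $t^{-\Ltt}$-weighted projection of the $k$-error onto $\hat\kappa$ should produce the combination $q+\bar p-\Ltt$. Because the most singular component is gone, spatial derivatives should enter only through the principal part, with coefficients $t^{-1+\bar p}$ times bounded factors.

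I will then work directly with $Y$ at top order, with no auxiliary derivative variables $\mathbf Z_{\bar l}$ and hence the better regularity $k>N/2+2$. I will define the truncated solution $\mathring Y$ by solving the ODE obtained from keeping only the $t^{-1}\Asc\Pbb$ terms, with $u$ and $z$ constant and $\check z$ evolving by $e^{(\check L-\Ltt+\ldots)\log(t/T_0)}$. The remainder $\bar Y=Y-\mathring Y$ has vanishing data at $T_0$. I will verify the coefficient assumptions of \cite[Section~A.1]{BeyerOliynyk:2020}, apply \cite[Theorem~A.2, Remark~A.3]{BeyerOliynyk:2020}, and obtain global existence by shrinking $T_0$ so that the source integral analogous to \eqref{eq:smallnesscond} is below the $T_0$-independent threshold $\delta_0$.

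The limits $u\to\mathring u(1+u_*)$ and $z\to\mathring z(1+z_*)$, together with $\check z\to0$, then give \eqref{eq:thm2.rhoestimate}--\eqref{eq:thm2.tiltestimate} via \eqref{eq:Gammanu.Var2}. In particular, $\nu_i=t^{\Ltt}z_i/\sqrt{t^{2\Ltt}u^2/z^2+1}\to\hat\kappa_i$, with the sign of $\hat\kappa$ fixed by the data.

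The main obstacle is the divergence term $\Div B$ in the coefficient assumptions, specifically its time-derivative contribution. The symmetriser contains $z$ and $u$, whose truncated values are constant, which is good. But the principal symbol also contains $\check z$ and $\zeta(\check z)$, and their $t$-derivatives are not $T_0$-uniformly bounded. I would first try to show that these only produce terms of the form $\Pbb\bar Y$ with small constants, using the structure of $\Stt$. Failing that, I would rescale $\check z$ by $t^{-\epsilon}$.
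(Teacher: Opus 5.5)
Your plan follows the paper's proof almost step for step. It uses the same reduced unknown $\check Z=(u,z,\check z_i)$ of \eqref{eq:checkZ2Z} and the same $\check\eta\,\hat\kappa_m\hat\kappa^p$ damping of the constraint. The singular block $\check L-\Ltt\check h$ is the same, with positivity exactly $P-\check P>\Ltt$. So are the truncated solution \eqref{eq:Zcheckcircformula} and the Fuchsian theorem applied at differentiability $k-1$. Two steps, however, do not go through as you state them.

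The first is the source bookkeeping. The $k$-error and the frame rotation enter the $u_j$-equation as $t^{-1+q}(\Omega-K_*)_j{}^m u_m$; this follows from \eqref{eq:variables1.g.NNN} with $k_j{}^k=(\alpha t)^{-1}(\Ktt+t^qK_*)_j{}^k$, and there is no factor $t^{\bar p}$. Divide by $|u|_h$, project with $\check h$ and multiply by $t^{-\Ltt}$. These terms then act on the $O(1)$ component $\mathcal A\hat\kappa$ of $u/|u|_h$, so the $\check z$-equation contains
\[
t^{-1+q-\Ltt}\,\mathcal A\,\check h_i{}^j(\Omega-K_*)_j{}^m\hat\kappa_m .
\]
The connection term contributes $t^{-1+\bar q-\Ltt}$ in the same way. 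These are the terms responsible for the $q-\Ltt$ and $\bar q-\Ltt$ entries of \eqref{eq:Q2inequal.1}, and eliminating $\zeta$ does not remove them. So the $k$-error does not produce $q+\bar p-\Ltt$; only the $\dot n$-term does.

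If $q<\Ltt$, this forcing drives $\check z\sim t^{q-\Ltt}$, so $\check z$ is not even bounded, let alone decaying. A spatially homogeneous Bianchi~I background with $\Omega=0$, $\dot n=0$, off-diagonal $K_*$, and parameters satisfying \eqref{eq:theorem2restr} with $q<\Ltt$ already shows this. Your admissible range for $Q$ must therefore include $q-\Ltt$, $\bar q-\Ltt$ and $\bar p$. This requires $\Ltt<\min\{q,\bar q\}$ on top of \eqref{eq:theorem2restr}. The entry $P-\check P-\Ltt$ is unnecessary, since $\check L-\Ltt\check h$ stays in $\check\Hsc$. The paper's \eqref{eq:Qchinequal.1} lists $q$ and $\bar q$ at this point, so following it does not close the gap.

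The second concerns the symmetriser, and hence your ``main obstacle''. Because $\zeta=t^{-\Ltt}\mathcal A$ depends on $\check z$, the symmetriser is not the diagonal restriction of $\Stt$. It is the pull-back of $\Stt$ through $\check\Phi$ plus a kernel block, i.e.\ the paper's $\check\Stt$, which carries the factor $\mathcal A^2$ and the rank-one term $z^2(t^{2\Ltt}u^2+z^2)\,t^{2\Ltt}\check z^j\check z^k$. With the bare restriction, the $(u,\check z)$ off-diagonal blocks of the symmetrised principal part are not transposes of each other once $\check z\neq0$.

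So it is not true that the symmetriser involves only $u$ and $z$. The time-derivative part of $\Div\!B$ sits in $\partial_t\check\Stt$; time derivatives of $\check{\mathtt C}^l$ never enter $\Div\!B$. Neither of your fixes addresses this. The offending terms come from $\partial_t\mathring{\check z}$, so they are neither proportional to $\bar{\check Z}$ nor small. Rescaling $\check z$ by $t^{-\epsilon}$ does not change the time dependence of $t^{\Ltt}\check z$ along $\mathring{\check Z}$.

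What makes the step work is that $\check z$, itself of size $T_0^{-\Ltt}$ at $t=T_0$, enters $\check\Stt$ only through $t^{\Ltt}\check z$. By \eqref{eq:Xdata}, \eqref{eq:Zcheckdata} and \eqref{eq:Zcheckcircformula}, $t^{\Ltt}\mathring{\check z}_i=(t/T_0)^{\check L}\check h_i{}^j\nu_{\mathbf{0},j}/|\nu_{\mathbf{0}}|$. Hence, relative to $\check\Stt$, $\partial_t\check\Stt(\mathring{\check Z})=O\bigl(t^{-1}(t/T_0)^{2(P-\check P)}+t^{-1}(t/T_0)^{2\Ltt}\bigr)$, where the second term comes from $t^{2\Ltt}u^2$. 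Its integral over $(0,T_0]$ is bounded independently of $T_0$. It therefore only produces a $T_0$-independent Gr\"onwall factor when you inspect the energy estimate, rather than being handled through smallness of the constants $\beta_i$. The paper merely asserts that the argument of Theorem~\ref{thm:Euler1} carries over; this integrability is what that assertion actually needs.
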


Before we discuss the proof of this theorem in Section~\ref{sec:proof2}, we proceed with a few remarks about Theorem~\ref{thm:Euler2}. Theorem~\ref{thm:Euler2} stipulates that the background spacetime $(M,g_{\mu\nu})$ is \emph{strongly anisotropic} in the sense that the $P$-eigenspace is one dimensional on the one hand, and, on the other hand, that the inequality
\begin{equation}
    \label{eq:eigenvalueseparation}
    P-\check P>\frac{P-c_s^2}{1-c_s^2}
\end{equation}
implied by \eqref{eq:theorem2restr} holds at each point in $\Sigma$: Since $P$ is the largest eigenvalue of the rescaled Weingarten map and $\check P$ the second-largest eigenvalue, this inequality can be understood as the requirement that the gap between $P$ and all other Kasner exponents must be larger than the right side of \eqref{eq:eigenvalueseparation} at each point in $\Sigma$. This rules out, for example, the asymptotically isotropic case where all eigenvalues are equal. Interestingly, the closer $P$ is to the critical value $P=c_s^2$, the smaller the eigenvalue gap $P-\check P$ is required to be according to \eqref{eq:eigenvalueseparation}. For instance, backgrounds that are almost isotropic (in the sense that $P$ and $\check P$ are close to $1/N$) are admitted by \eqref{eq:eigenvalueseparation} provided with have a near-radiation fluid, that is, $c_s^2$ is close to $1/N$. 

One of the particular advantages of Theorem~\ref{thm:Euler2} over Theorem~\ref{thm:Euler1} is that, if \eqref{eq:eigenvalueseparation} holds, \eqref{eq:theorem2restr} is less restrictive than \eqref{eq:theorem1restr}. One can write the inequality \eqref{eq:theorem1restr} in the same form as \eqref{eq:theorem1restr.2} and \eqref{eq:theorem1restr.2.cstar} but where $\Ltt_*$ is now the (in general larger) quantity $\Ltt_*=\min\{1,q+\bar p\}$. 
It is particularly noteworthy that the restriction for $q+\bar p$ in \eqref{eq:theorem2restr} is caused by a single term $t^{-\Ltt} \check h_{i}{}^j\dot n_j$ in the Euler equations only. If this particular term was absent for some reason, for example because $h_{i}{}^j\dot n_j=0$, there would be \emph{no restriction on $q$ and $\bar p$ in Theorem~\ref{thm:Euler2} whatsoever} (beyond positivity). 

Another improvement of Theorem~\ref{thm:Euler2} over Theorem~\ref{thm:Euler1} is that it yields slightly better regularity of the solutions. Given initial data in $H^k(\Sigma)$, Theorem~\ref{thm:Euler1} requires $k>N/2+3$, and while the solution is continuous with respect to $H^k(\Sigma)$ and continuously differentiable with respect to $H^{k-1}(\Sigma)$, it is in general only bounded with respect to $H^{k-2}(\Sigma)$. For Theorem~\ref{thm:Euler2}, given data in $H^k(\Sigma)$, we only need to assume $k>N/2+2$, and while the solution is then also continuous with respect to $H^k(\Sigma)$ and continuously differentiable with respect to $H^{k-1}(\Sigma)$, it is bounded with respect to the stronger norm $H^{k-1}(\Sigma)$.

\subsection{Proof of Theorem~\ref{thm:Euler2}}
\label{sec:proof2}

\phantom{.}

\noindent \underline{Variable transformation:} 
In this proof we want to exploit \eqref{eq:zi.oned} and \eqref{eq:zeta.oned} and obtain a symmetrisable hyperbolic PDE system by formulating a corresponding variable transformation in the spirit of Section~\ref{sec:nonlineartransf}: 
the new variable vector $\check Z$ is related to X by 
    \begin{equation}
        \label{eq:checkZ2Z}
        X=\check\Phi(t,x,\check Z)=\begin{pmatrix}
            u\\
             z \sqrt{1-t^{2\Ltt}\check z_i\check z^i} \hat\kappa_i+z t^\Ltt\check z_i
        \end{pmatrix},\quad
        \check Z=\begin{pmatrix}
            u\\
            z\\
            \check z_i
        \end{pmatrix}.
    \end{equation}
We recall the definition of $X$ in \eqref{eq:Xdef} and note that \eqref{eq:checkZ2Z} is a combination of \eqref{Z2U}, \eqref{eq:zi.oned} and \eqref{eq:zeta.oned}. The map
\begin{equation}
    \label{eq:X2checkZ}
    \check Z=\check\Phi^{-1}(t,x,X)=\begin{pmatrix}
        u\\
        |u|_h\\
        t^{-\Ltt}\frac{\check h_i{}^j u_j}{|u|_h}
    \end{pmatrix}
\end{equation}
is the inverse
provided the consistency condition
\begin{equation}
    \label{eq:zcheckconsist}
    \check z_i\hat\kappa^i=0
\end{equation}
holds.
Similar to our previous discussions it is convenient to consider solutions of our PDEs even when \eqref{eq:zcheckconsist} is violated. In the sense of Section~\ref{sec:nonlineartransf}, we regard the map $\check\Phi$ in \eqref{eq:checkZ2Z} as a variable transformation in the submersion case where $s=2+N$ and $\bar s=\sigma=1+N$.
Doing this, a  calculation that is very similar to the one that led us from \eqref{eq:PDEgen.transf.nonsymm.N} to \eqref{eq:Zevol}, yields
\begin{equation}
    \label{eq:Zcheckevol}
    \partial_t \check Z+\alpha \check{\mathtt C}^l\Dc_l\check Z
    =\frac 1t\check\Hsc \check{\mathbb P}\check Z+t^{-1+\check Q}\check\Htt(t,x,\check Z),
\end{equation}
with
\begin{align}
    \check{\mathtt C}^l
    =&\frac{1}{\sqrt{t^{2\Ltt} u^2+z^2}\mathcal A}
    \begin{pmatrix}
        \begin{smallmatrix}
            \frac{z \left((1-2 {c_s^2}) t^{2\Ltt} u^2+(1-{c_s^2}) z^2\right) \left({\hat\kappa}^{l} \mathcal A^2+t^{\Ltt}  \check z^{l} {\mathcal A}\right)}{(1-{c_s^2}) z^2+t^{2\Ltt} u^2} 
            & \frac{{c_s^2} u^3 \left(t^{2\Ltt} {\hat\kappa}^{l} \mathcal A^2+\check z^{l} t^{3 \Ltt} {\mathcal A}\right)}{(1-{c_s^2}) z^2+t^{2\Ltt} u^2} 
            & -\frac{{c_s^2} t^{\Ltt}  u z \left(t^{2\Ltt} u^2+z^2\right) \left(t^{\Ltt}  \check z^k {\hat\kappa}^{l}-{\mathcal A} {\check h}^{l k}\right)}{(1-{c_s^2}) z^2+t^{2\Ltt} u^2} \\
            u \left(\check z^{l} t^{3 \Ltt} {\mathcal A}+{\hat\kappa}^{l} t^{2\Ltt}\mathcal A\right) & z \left({\hat\kappa}^{l}\mathcal A^2+t^{\Ltt}  \check z^{l} {\mathcal A}\right) & 0 \\
            \frac{u \left(t^{\Ltt}  {\mathcal A} {\check h}_{m}{}^{l}-\check z_m \left(\check z^l t^{3 \Ltt} {\mathcal A}+t^{2\Ltt} {\hat\kappa}^{l} \mathcal A^2\right)\right)}{z} & 0 & z {\check h}_{m}{}^{k} \left({\hat\kappa}^{l}\mathcal A^2+t^{\Ltt}  \check z^{l} {\mathcal A}\right) 
        \end{smallmatrix}
    \end{pmatrix}\notag\\
    =&\begin{pmatrix}
        {\hat\kappa}^{l}+O\left(t^{\Ltt}\right) & O\left(t^{2\Ltt}\right) & O\left(t^{\Ltt}\right) \\
        O\left(t^{2\Ltt}\right) & {\hat\kappa}^{l}+O\left(t^{\Ltt}\right) & 0 \\
        O\left(t^{\Ltt}\right) & 0 & {\hat\kappa}^{l} {\check h}_{m}{}^{k}+O\left(t^{\Ltt}\right)
    \end{pmatrix},\label{eq:defcheckCl}\\
    \label{eq:defcheckHsc}
    \check\Hsc
    =&\mathrm{diag}\left(
        1, 1, \check\eta  {\hat\kappa}_{m} {\hat\kappa}^{p}+{\check L}_{m}{}^{p}-\Ltt {\check h}_{m}{}^{p}
    \right),\\
    \check{\mathbb P}
    =&\mathrm{diag}\left(
        0, 0, h_p{}^k
    \right),\label{eq:Pbbcheck}
\end{align}
where we have exploited the freedoms discussed in Section~\ref{sec:nonlineartransf} to introduce an arbitrary constant $\check\eta>0$ (whose presence is inconsequential for any solution that satisfies \eqref{eq:zcheckconsist}), we employ the $O$-notation from Section~\ref{sec:analysisbg.N}, and we use the abbreviation
\begin{equation}
    \label{eq:mathcalAdef}
    \mathcal A:=\sqrt{1-t^{2\Ltt} \check z_r\check z^r}.
\end{equation}
Following the same line of arguments as before, we conclude that provided the function $\check Q:\Sigma\rightarrow\Rbb$ is smooth and satisfies the bound
\begin{equation}
    \label{eq:Qchinequal.1}
    0<\check Q<\min\{q,\bar q, \bar p, 2\Ltt, q+\bar p-\Ltt\}
\end{equation}
at each point of $\Sigma$,
the map $\check\Htt$ in \eqref{eq:Zcheckevol} 
can be identified with a uniformly bounded polynomial in the sense of Section~\ref{sec:analysisbg.N} for 
\begin{equation}
    \check\chi(t,x,\check Z)=(u, z, 1/z, \check z_i, \mathcal A, 1/\mathcal A, \sqrt{u^2 t^{2\Ltt}+z^2}, 1/\sqrt{u^2 t^{2\Ltt}+z^2})^T,
\end{equation}
and 
\begin{equation}
    \label{eq:checkOmegadef}
    \check\Omega=\left\{\check Z\in\Rbb^{2+N} \,\left|\, u>u_*,\, z>z_*,\,\check z_{i}\check z^i<\check z_*^2\right.\right\}
\end{equation}
given any constants $u_*,z_*,\check z_*>0$ and provided we make the restriction
\begin{equation}
    \label{eq:T0restriction.check}
    T_0<\check z_*^{-\Ltt},
\end{equation}
which ensures that $\mathcal A$ in \eqref{eq:mathcalAdef} is well-defined.
Notice that the right-hand side of the inequality \eqref{eq:Qchinequal.1} is strictly positive at each point of $\Sigma$ as a consequence of \eqref{eq:theorem2restr}. Most notably, the quantity $q+\bar p-\Ltt$ arises in \eqref{eq:Qchinequal.1} from the single term $t^{-\Ltt} \check h_{i}{}^j\dot n_j$ in the Euler equations.

It also follows from the construction in Section~\ref{sec:nonlineartransf} is that \eqref{eq:Zcheckevol} is symmetrisable, and in fact, that the symmetriser 
\begin{align}
    \check\Stt=
    \mathcal A^2\left(
    \begin{smallmatrix}
        \frac{ \left((1-{c_s^2}) z^2+t^{2\Ltt} u^2\right)}{{c_s^2}} & 0 & 0 \\
        0 & u^2  & 0 \\
        0 & 0 &  z^2\frac{t^{2\Ltt}  \check z^{p} \check z^{q} \left(t^{2\Ltt} u^2+z^2\right) {\check h}_{p}{}^{j} {\check h}_{q}{}^{k}}{\mathcal A^2}
        +z^2\left(t^{2\Ltt} u^2 +z^2\right) {\check h}^{j k}+{\hat\kappa}^{j} {\hat\kappa}^{k}
    \end{smallmatrix}
    \right)\\
    =
    \begin{pmatrix}
        \frac{(1-{c_s^2}) z^2}{{c_s^2}}+O\left(t^{2\Ltt}\right) & 0 & 0 \\
        0 & u^2+O\left(t^{2\Ltt}\right) & 0 \\
        0 & 0 & z^4 {\check h}^{j k}+{\hat\kappa}^{j} {\hat\kappa}^{k}+O\left(t^{2\Ltt}\right),
    \end{pmatrix}\notag
\end{align}
is uniformly positive definite so long as $t\in (0,T_0]$ and $\check Z\in\check\Omega$.

\phantom{.}

\noindent \underline{The initial value problem and local-in-time existence and uniqueness:} 
Consider now arbitrary constants $T_0>0$ and $\check\eta>0$. We pick initial data $\nu_{\mathbf{0},i}$ and $\rho_{\mathbf{0}}$ which satisfy the hypothesis of 
Theorem~\ref{thm:Euler2} and define 
\begin{equation}
    \label{eq:Zcheckdata} 
    \check Z(T_0,x)=\check Z_{\mathbf{0}}(x)
    =\begin{pmatrix}
        u_\mathbf{0}\\
        |u_\mathbf{0}|_h\\
        T_0^{-\Ltt}\frac{\check h_i{}^j u_{\mathbf{0},j}}{|u_\mathbf{0}|_h}
    \end{pmatrix},
\end{equation}
in full agreement with \eqref{eq:X2checkZ},
where $u_\mathbf{0}$ and $u_{\mathbf{0},i}$ are given by \eqref{eq:Xdata}. This initial data set is in $H^{k}(\Sigma)$ and satisfies \eqref{eq:zcheckconsist} for $t=T_0$. The vector \eqref{eq:Zcheckdata} can be guaranteed to be in $\check\Omega$ (cf.\ \eqref{eq:checkOmegadef}) provided we make appropriate choices for the positive constants $u_*$, $z_*$ and $\check z_*$ and assume that \eqref{eq:T0restriction.check} holds.

Because \eqref{eq:Zcheckevol} is symmetrisable, it follows from \cite[Proposition~1.4, Chapter~16]{taylor2011} that the initial value problem of \eqref{eq:Zcheckevol} with initial data $\check Z_{\mathbf{0}}$ has a unique classical solution
$\check Z$ on $(T_1,T_0]\times\Sigma$ for some $T_1\in [0,T_0)$ and
\begin{equation}
    \label{eq:checkZlocalsol}
    \check Z \in C^0\bigl((T_1,T_0],H^{k}(\Sigma)\bigr)\cap C^1\bigl((T_1,T_0],H^{k-1}(\Sigma)\bigr).
\end{equation}
The consistency condition \eqref{eq:zcheckconsist} holds everywhere on $(T_1,T_0]\times\Sigma$, and the values of $\check Z$ are contained in $\check\Omega$ for all $(t,x)\in (T_1,T_0]\times\Sigma$.

\phantom{.}

\noindent \underline{The leading-order term and the Fuchsian PDE for the remainder:} 
On $(0,T_0]\times\Sigma$ we next define  the function 
\begin{equation}
    \label{eq:Zcheckcircformula}
    \mathring{\check Z}(t,x)=\mathrm{diag}\Bigl(
            1, 1, e^{(\check L-\Ltt\check h) \log \frac{t}{T_0}}
        \Bigr)\check Z_{\mathbf{0}}.
\end{equation}
We observe that
\begin{equation}
    \mathring {\check Z}(T_0,x)={\check Z}_{\mathbf{0}}(x),
\end{equation}
and the truncated Euler equations
\begin{equation}
    \label{eq:backgroundODE.check}
    \partial_t \mathring {\check Z}=\frac 1t\check\Hsc \check{\mathbb P} \mathring {\check Z}
\end{equation}
hold for all $(t,x)\in(0,T_0]\times\Sigma$. The intuition for the definition \eqref{eq:Zcheckcircformula} is the same as for the definition \eqref{eq:Wcircformula} in the proof of Theorem~\ref{thm:Euler1}. 
Given this we define the remainder 
\begin{equation}
    \label{eq:Zcheckbardef}
    \bar{\check Z}=\check Z-\mathring {\check Z}.
\end{equation}

Because $\check Z$ is a solution of \eqref{eq:Zcheckevol} which agrees with ${\check Z}_{\mathbf{0}}$ at $t=T_0$, a straightforward computation exploiting \eqref{eq:backgroundODE.check} reveals that $\bar{\check Z}$ is a solution of
\begin{equation}
    \label{eq:Zcheckevol.bar}
    \partial_t \bar{\check Z}+\alpha \check{\mathtt C}^l\Dc_l\bar{\check Z}
    =\frac 1t\check\Hsc \check{\mathbb P}\bar{\check Z}
    +t^{-1+\bar{\check Q}}\tilde\Htt+t^{-1+\bar{\check Q}}\Htt_0,
\end{equation}
for all $(t,x)\in (T_1,T_0]\times\Sigma$ with 
\begin{equation}
    \label{eq:barvanishingID.check}
    \bar{\check Z}(T_0,x)=0,\quad x\in\Sigma.
\end{equation}
Here, we write
\begin{equation}
    t^{-1+\bar{\check Q}}\tilde\Htt+t^{-1+\bar{\check Q}}\Htt_0
    :=+t^{-1+\check Q}\check\Htt
    -\alpha \check{\mathtt C}^l\Dc_l\mathring{\check Z},
\end{equation}
given the previously defined maps in \eqref{eq:Zcheckevol} with \eqref{eq:defcheckCl} -- \eqref{eq:Pbbcheck}; the two maps on the left side are (up to the choice of $\bar{\check Q}$ discussed below) uniquely determined by the condition that
\begin{equation}
    \Htt_0(t,x,\mathring {\check Z},0)=0.
\end{equation}
As several times before, it is then a straightforward but lengthy calculation to establish that,
provided we choose $\bar{\check Q}$ to be a smooth function bounded by
\begin{equation}
    \label{eq:Qbarinequal.1.check}
    0<\bar{\check Q}<\min\{\check Q,\bar p\}=\min\{q,\bar q, \bar p, 2\Ltt, q+\bar p-\Ltt\},
\end{equation}
the map $\Htt_0$ inherits the uniformly bounded polynomial properties of $\check\Htt$ and $\check{\mathtt C}^l$ given for \eqref{eq:Zcheckevol} for all vectors $\bar{\check Z}$ in the $R$-ball $\bar{\check\Omega}$ in $\Rbb^{2+N}$ where $R>0$ is chosen so that $\mathring Z(t,x)+\bar{\check Z}$ is contained in $\check\Omega$ (cf.\ \eqref{eq:checkOmegadef}) for all $(t,x)\in (0,T_0]\times\Sigma$.

\phantom{.}

\noindent \underline{Fuchsian analysis and global existence:} In full analogy to the proof of Theorem~\ref{thm:Euler1} the central step is now to verify the \emph{coefficient assumptions} \cite[Section~A.1]{BeyerOliynyk:2020} and then to apply the Fuchsian theorem \cite[Theorem~A.2, Remark~A.3 (ii)]{BeyerOliynyk:2020}.

The discussion of the coefficient assumptions here is now very similar to the one in the proof of Theorem~\ref{thm:Euler1}, and we will therefore omit most of the details here. For now, we assume that $T_0$ is an arbitrary constant satisfying \eqref{eq:T0restriction.check}. As before we will want to use the size of $T_0$ to control certain error terms that arise in our reference-independent approach. We therefore pay particular attention to determining which quantities can be picked \emph{independently of $T_0$}. We identify the constant $R>0$ in \cite[Section~A.1]{BeyerOliynyk:2020} with the constant $R$ we chose above to define the set $\bar{\check\Omega}$, but will make use of the opportunity to shrink $R$ whenever this is useful for our arguments. The constant $\Rc>0$ in \cite[Section~A.1]{BeyerOliynyk:2020} is an arbitrary constant so far, but later it will have to be chosen so that $\sup_{(0,T_0]\times\Sigma}|\mathring {\check Z}|<\Rc$ and $\sup_{(0,T_0]}\norm{\Dc\mathring {\check Z}}_{H^{k-2}(\Sigma)}<\Rc$. Equation~\eqref{eq:Zcheckcircformula} allows us to show that it is possible to pick $\Rc$ so that this holds independently of $T_0$. We define the vector bundles $Z_1$ and $Z_2$ in \cite[Section~A.1]{BeyerOliynyk:2020} as the trivial $\Rbb^{2+N}$- and $\Rbb^{2+3N+N^2}$-bundles over $\Sigma$, respectively, and consider $w_1=\mathring {\check Z}(t)$ as a section in a bounded subset $\Zc_1$ of $Z_1$ and $w_2=(\mathring {\check Z}(t),\Dc_\Omega \mathring {\check Z}(t))$ as a section in a bounded subset $\Zc_2$ of $Z_2$. As discussed above, because of the particular form of $\mathring {\check Z}$, uniform bounds for $\Zc_1$ and $\Zc_2$ over $(0,T_0]\times\Sigma$ can be chosen independently of $T_0$. Given all these choices now and especially observing \eqref{eq:theorem2restr}, our verification of the coefficient assumptions follows precisely the same steps as in the proof of Theorem~\ref{thm:Euler2}. All choices can be made independently of $T_0$ and $\lambda_3=\beta_1=\beta_2=O(R)$ while all other $\beta_i$ can be chosen as zero. The same argument which allowed us to deal with time derivatives of $w_1$ in the analysis of the map $\Div \! B$ in a $T_0$-independent way in the proof of Theorem~\ref{thm:Euler1} also applies here.

Now we are in the position to apply the Fuchsian theorem \cite[Theorem~A.2, Remark~A.3 (ii)]{BeyerOliynyk:2020}.
We identify the differentiability parameter $k$ in \cite[Theorem~A.2, Remark~A.3 (ii)]{BeyerOliynyk:2020} with the value $k-1$ here. This means that our assumption $k>N/2+2$ in Theorem~\ref{thm:Euler2} is identical to the condition $k>N/2+1$ in \cite[Theorem~A.2, Remark~A.3 (ii)]{BeyerOliynyk:2020}. 

The argument we give in the proof of Theorem~\ref{thm:Euler1} to resolve the issue regarding the condition 
\begin{equation*}
    \sup_{0<t<T_0} \max\Bigl\{ \norm{\Dc w_1(t)}_{H^{k-2}(\Sigma)},
    \norm{t^{1-q}\del{t}w_1(t)}_{H^{k-2}(\Sigma)}\Bigr\} < \frac{\Rsc}{C_{\text{Sob}}},
\end{equation*}
in \cite[Theorem~A.2]{BeyerOliynyk:2020} in a $T_0$-independent way also applies here. The inequality for $\gamma$ in \cite[Theorem~A.2, Remark~A.3 (ii)]{BeyerOliynyk:2020} can also be satisfied in the same $T_0$-independent way as before.
    
It is now a consequence of \cite[Theorem~A.2, Remark~A.3 (ii)]{BeyerOliynyk:2020} and \eqref{eq:barvanishingID} that there is a $\delta_0$, whose value is independent of $T_0$, such that the initial value problem 
of \eqref{eq:Zcheckevol.bar} with initial condition \eqref{eq:barvanishingID.check} has a classical solution \emph{globally} in time, as long as the smallness condition
\begin{equation} 
    \int_0^{T_0} s^{-1+\bar{\check Q}} \norm{\tilde\Htt(s,\mathring {\check Z}(s))}_{H^{k-1} (\Sigma)} ds<\delta_0.
\end{equation}
To satisfy the smallness condition, it suffices to shrink $T_0>0$.
The local-in-time solution $\check Z$ found above extends as a classical solution down to $t=0$, \eqref{eq:checkZlocalsol} holds with $T_1=0$, the consistency condition \eqref{eq:zcheckconsist} holds and the values of $\check Z$ are contained in $\check\Omega$ (cf.\ \eqref{eq:checkOmegadef}) for all $(t,x)\in (0,T_0]\times\Sigma$. The Fuchsian theorem \cite[Theorem~A.2, Remark~A.3 (ii)]{BeyerOliynyk:2020} also implies the energy estimate
\begin{equation}
    \label{eq:FuchsianEnergyEstimate.check}
    \norm{\bar {\check Z}(t)}_{H^{k-1}(\Sigma)}  \lesssim s^{-1+\bar{\check Q}} \norm{\tilde\Htt(s,\mathring {\check Z}(s))}_{H^{k-1} (\Sigma)} ds
\end{equation}
for all $t\in (0,T_0]$. Together with \eqref{eq:Zcheckcircformula} and \eqref{eq:theorem2restr} this yields
\begin{equation}
    \label{eq:checkZglobalsol}
    \check Z \in C^0\bigl((0,T_0],H^{k}(\Sigma)\bigr)\cap C^1\bigl((0,T_0],H^{k-1}(\Sigma)\bigr)\cap L^\infty\bigl((0,T_0],H^{k-1}(\Sigma)\bigr).
\end{equation}
The map $X$ found from \eqref{eq:checkZ2Z} yields the physical fluid variables via \eqref{eq:Gammanu.Var2}. This yields
\begin{equation}
    \rho,\Gamma, \nu_i\in C^0((0,T_0],H^{k}(\Sigma))\cap C^1((0,T_0],H^{k-1}(\Sigma))\cap L^\infty\bigl((0,T_0],H^{k-1}(\Sigma)\bigr).
\end{equation}

Using the same argument used for the proof of Theorem~\ref{thm:Euler1}, which allowed us to improve the decay and convergence statement of \cite[Theorem~A.2, Remark~A.3 (ii)]{BeyerOliynyk:2020}, we obtain the existence of a constant $\kappa>0$ and of non-negative functions $u_*$ and $z_*$ in $H^{k-2}(\Sigma)$ 
such that 
\begin{equation}
    \norm{\bar{\check z}_i(t)}_{H^{k-2}(\Sigma)}\lesssim t^{\kappa},
\end{equation}
and
\begin{align}
    \label{eq:fluidestimate.1.check}
    \Bigl\|&\ub(t)-{T_0^{P-L} \left(\frac{c_s^2}{\mathcal P_0}\right)^{c_s^2/(1+c_s^2)}\rho_\mathbf{0}^{c_s^2/(1+c_s^2)}}u_*\Bigr\|_{H^{k-2}(\Sigma)}\\
    +&\left\|{\bar{z}(t)-\frac{T_0^{P} \left(\frac{c_s^2}{\mathcal P_0}\right)^{c_s^2/(1+c_s^2)}\rho_\mathbf{0}^{c_s^2/(1+c_s^2)}|\nu_\mathbf{0}|}{\sqrt{1-|\nu_\mathbf{0}|^2}}z_{*}}\right\|_{H^{k-2}(\Sigma)}\lesssim t^\kappa\notag
\end{align}
for all $t\in (0,T_0]$. 

\phantom{.}

\noindent \underline{Asymptotics of the physical fluid variables:} 
Recall from \eqref{eq:Zcheckcircformula} and \eqref{eq:Zcheckbardef} that $u=\mathring u+\bar u$ and $z=\mathring z+\bar z$ and that $\mathring u$ and $\mathring z$ are the constants in time given by \eqref{eq:Zcheckdata} and \eqref{eq:Xdata}. 
By shrinking $T_0$ further if necessary, we can exploit \eqref{eq:FuchsianEnergyEstimate.check} to control the sizes of $\norm{u_*}_{H^{k-2}(\Sigma)}$ and $\norm{z_*}_{H^{k-2}(\Sigma)}$. In this way we can ensure that 
the limits of $u$ and $z$ are non-vanishing on $\Sigma$. 
We then obtain
\begin{align}
    \left\|{u(t)-{T_0^{c_s^2\frac{1-P}{1-c_s^2}} \left(\frac{c_s^2}{\mathcal P_0}\right)^{c_s^2/(1+c_s^2)}\rho_\mathbf{0}^{c_s^2/(1+c_s^2)}}(1+u_*)}\right\|_{H^{k-2}(\Sigma)}&\lesssim t^\kappa,\\
    \left\|{u_i(t)-\frac{T_0^{P} \left(\frac{c_s^2}{\mathcal P_0}\right)^{c_s^2/(1+c_s^2)}\rho_\mathbf{0}^{c_s^2/(1+c_s^2)}|\nu_\mathbf{0}|}{\sqrt{1-|\nu_\mathbf{0}|^2}}(1+z_*)\hat\kappa_i}\right\|_{H^{k-2}(\Sigma)}
    &\lesssim t^\kappa,\\
    \norm{\check h_i{}^j u_j(t)}_{H^{k-2}(\Sigma)}&\lesssim t^\kappa,
\end{align}
for all $t\in (0,T_0]$,
where we use the freedom to adapt the value of the constant $\kappa$ if necessary.
The required estimates \eqref{eq:thm2.rhoestimate}, \eqref{eq:thm2.Lorentzestimate} and \eqref{eq:thm1.tiltestimate} then follow directly by combining the estimates above with \eqref{eq:Gammanu.Var2} and the Moser inequality in the same way as in the proof of Theorem~\ref{thm:Euler2}.

\appendix
\section{Nonlinear variable transformations}
\label{sec:nonlineartransf}
\renewcommand{\Vsy}{V}
\renewcommand{\Vtm}{{V_0}}
\renewcommand{\Vsp}{{\Vsy}}
\renewcommand{\VspU}[1]{\Vsp^{#1}} 
\renewcommand{\VspD}[1]{\Vsp_{#1}}
\renewcommand{\Vspnorm}{|\Vsp|_h}
\renewcommand{\CoeffSymmPre}[1]{{\tt {#1}}{}}
\renewcommand{\CoeffSymm}{\CoeffSymmPre{a}}
\renewcommand{\SourceSymmPre}[1]{{\tt {#1}}}
\renewcommand{\SourceSymm}{\SourceSymmPre{f}}
\renewcommand{\Coeff}{\CoeffSymmPre{A}}
\renewcommand{\Source}{\SourceSymmPre{F}}
\renewcommand{\NewVar}{U}
\newcommand{\Kb}{\bar K} 

Let $T_0$ and $\Sigma$ be as discussed in Section~\ref{sec:background.N}. Consider a quasilinear PDE of the form\footnote{Recall the abstract index as well as the index-free notations introduced in Section~\ref{sec:analysisbg.N}. }
\begin{equation}
    \label{eq:PDEgen}
    \CoeffSymmPre{\bar{\tt a}}^{0\Jb \Kb} \partial_t \Ub_{\Kb}
    +\CoeffSymmPre{\bar{\tt a}}^{l\Jb \Kb} \Dc_l{} \Ub_{\Kb}
    =\SourceSymmPre{\bar{\tt f}}^{\Jb},
\end{equation}
where all the fields are smooth maps defined for $t\in (0,T_0]$, $x\in\Sigma$, and, given any $T_1\in [0,T_0)$, the quantity $\Ub_{\bar K}$ is a classical solution on $(T_1,T_0]\times\Sigma$ taking values in some non-empty open subset $\bar\Omega$ of $\Rbb^{\bar s}$. We let $\Dc_l$ be a connection defined on $\Sigma$. 

A \emph{(nonlinear) variable transformation} $\Phi_{\bar K}(t,x,u)$ is a smooth map from the differentiable manifold $(0,T_0]\times\Sigma\times\Omega$ to $\bar\Omega$, where $\Omega$ is an open non-empty subset of $\Rbb^s$. We define
\begin{equation}
    \label{eq:VariableTrafo.abstr}
    \Ub_{\Kb}(t,x)=\Phi_{\Kb}(t,x,U(t,x))
\end{equation}
on $(T_1,T_0]\times\Sigma$, and think of $\Ub_{\Kb}$ as the \emph{original variables} -- which locally satisfy \eqref{eq:PDEgen} -- and $U_L$ as \emph{new} variables. The first goal is to express the PDE \eqref{eq:PDEgen} in terms of these new variables. 

To this end, the following notation is useful\footnote{The map $\mathbf D$ here is the same variable derivative that we introduced in \eqref{eq:defdCtt} -- \eqref{eq:defdHtt}.}
\begin{equation}
    \label{eq:derivativetransmatr}
    \mathbf D{\Phi_{\Kb}}^L(t,x,u):=(\mathbf D_{\NewVar_L}\Phi_{\Kb})(t,x,u),\,\,
    \Phi_{0,\Kb}(t,x,u)=(\partial_t\Phi_{\Kb})(t,x,u),\,\,
    \Phi_{i,{\Kb}}(t,x,u)=(\Dc_i\Phi_{\Kb})(t,x,u).
\end{equation}
In terms of this notation, we can write the PDE for the new variables implied by \eqref{eq:PDEgen} as follows 
\begin{equation}
    \label{eq:PDEgen.transf}
    \CoeffSymmPre{a}^{0JK} \partial_t \NewVar_K
    +\CoeffSymmPre{a}^{lJK} \Dc_l{} \NewVar_K
    =\SourceSymmPre{f}^J,
\end{equation}
where
\begin{align}
    \label{eq:transfcoeff.1}
    \CoeffSymmPre{a}^{0ML}&=\mathbf D\Phi_{\Jb}{}^M\CoeffSymmPre{\bar{\tt a}}^{0\Jb \Kb} \mathbf D\Phi_{\Kb}{}^L,\\
    \CoeffSymmPre{a}^{lML}&=\mathbf D\Phi_{\Jb}{}^M\CoeffSymmPre{\bar{\tt a}}^{l\Jb \Kb} \mathbf D\Phi_{\Kb}{}^L,\\
    \label{eq:transfcoeff.3}
    \SourceSymmPre{f}^M&=\mathbf D\Phi_{\Jb}{}^M\SourceSymmPre{\bar{\tt f}}^{\Jb}
    -\mathbf D\Phi_{\Jb}{}^M\CoeffSymmPre{\bar{\tt a}}^{0\Jb \Kb} \Phi_{0,\Kb}
    -\mathbf D\Phi_{\Jb}{}^M\CoeffSymmPre{\bar{\tt a}}^{l\Jb \Kb} \Phi_{l,\Kb}.
\end{align}
In our notation, any field that \emph{is not} decorated with a bar is understood as a function of $(t,x,u)$ evaluated at $(t,x,U(t,x))$, while any map that \emph{is} decorated with a bar is understood as a function of $(t,x,\ub)$ evaluated at $(t,x,\Ub(t,x))$. 

Let $\sigma: (0,T_0]\times\Sigma\times\Omega\rightarrow\Rbb$ be the rank of the $\bar s\times s$-matrix-valued Jacobi matrix function $\mathbf D{\Phi_{\Kb}}^L$. In what follows, we assume that $\sigma$ is a \emph{constant} function (whose constant value is simply written as $\sigma$). The map
\begin{equation}
    \label{eq:extendedPhi}
    (0,T_0]\times\Sigma\times\Omega\rightarrow (0,T_0]\times\Sigma\times\bar \Omega:\quad (t,x,u)\mapsto (t,x, \Phi_{\Kb}(t,x,u))
\end{equation}
is therefore \emph{a map of constant rank} between smooth manifolds.

Let us start with the most important case $s=\bar s=\sigma$. According to the inverse function theorem, the variable transformation \eqref{eq:VariableTrafo.abstr} is locally invertible, the map \eqref{eq:extendedPhi} is a local diffeomorphism, and, in particular, $\mathbf D\Phi(t,x,u)$ is invertible for all $(t,x,u)\in (0,T_0]\times\Sigma\times\Omega$. We conclude that \eqref{eq:PDEgen} and \eqref{eq:PDEgen.transf} are \emph{equivalent PDE systems} assuming appropriate choices of the subsets $\Omega$ and $\bar\Omega$, i.e., given a classical local solution $U_K$ of \eqref{eq:PDEgen.transf} yields the classical local solution $\Ub_{\Kb}$ of \eqref{eq:PDEgen} given by \eqref{eq:VariableTrafo.abstr} and vice versa.
Also, $\CoeffSymmPre{\bar{\tt a}}^{0\Jb \Kb}$ and $\CoeffSymmPre{\bar{\tt a}}^{l\Jb \Kb}$ are symmetric bilinear forms acting on $\Rbb^{\bar s}$ everywhere $\CoeffSymmPre{a}^{0ML}$ and $\CoeffSymmPre{a}^{lML}$ are symmetric bilinear forms acting on $\Rbb^{s}$. Moreover, $\CoeffSymmPre{\bar{\tt a}}^{0\Jb \Kb}$ is positive definite everywhere $\CoeffSymmPre{a}^{0ML}$ is positive definite. In this case, 
\eqref{eq:PDEgen} and \eqref{eq:PDEgen.transf} are \emph{equivalent symmetrisable PDE systems}. Finally,
if $\CoeffSymmPre{\bar{\tt a}}^{0\Jb \Kb}$ is invertible (which is the case if and only if $\CoeffSymmPre{{\tt a}}^{0J K}$ is invertible) with inverse $(\CoeffSymmPre{\bar{\tt a}}^0)^{-1}_{\Jb \Kb}$ it is sometimes useful to write \eqref{eq:PDEgen} in non-symmetrised form
\begin{equation}
    \label{eq:PDEgen.nonsymm}
    \partial_t \Ub_{\Kb}
    +\CoeffSymmPre{\bar{\tt A}}^{l}{}_{\Kb}{}^{\Lb} \Dc_l{} \Ub_{\Lb}
    =\SourceSymmPre{\Fb}_{\Kb},
\end{equation}
where
\begin{align}
    \CoeffSymmPre{\bar{\tt A}}^{l}{}_{\Kb}{}^{\Lb}
    =(\CoeffSymmPre{\bar{\tt a}}^0)^{-1}_{\Kb \Jb}\CoeffSymmPre{\bar{\tt a}}^{l\Jb \Lb},\\
    \SourceSymmPre{\Fb}_{\Kb}=(\CoeffSymmPre{\bar{\tt a}}^0)^{-1}_{\Kb \Jb}\SourceSymmPre{\fb}^{\Jb}.
\end{align}
Denoting the inverse of $\mathbf D\Phi_{\Jb}{}^M$ as 
$(\mathbf D\Phi^{-1})_P{}^{\Qb}$, we get 
\begin{equation}
    (\CoeffSymmPre{a}^{0})^{-1}_{PM}=(\mathbf D\Phi^{-1})_P{}^{\Qb}(\CoeffSymmPre{\bar a}^{0})^{-1}_{\Qb\Lb} (\mathbf D\Phi^{-1})_M{}^{\Lb},
\end{equation}
and the PDE \eqref{eq:PDEgen.transf} takes the equivalent form
\begin{equation}
    \label{eq:PDEgen.transf.nonsymm}
    \partial_t \NewVar_K
    +\CoeffSymmPre{A}^{l}{}_K{}^L \Dc_l{} \NewVar_L
    =\SourceSymmPre{F}_K,
\end{equation}
with
\begin{align}
    \CoeffSymmPre{A}^{l}{}_P{}^R&=(\CoeffSymmPre{a}^{0})^{-1}_{PM}\CoeffSymmPre{a}^{lMR}=(\mathbf D\Phi^{-1})_P{}^{\Qb}\CoeffSymmPre{\bar A}^{l}{}_{\Qb}{}^{\Kb} \mathbf D\Phi_{\Kb}{}^R,\\
    \label{eq:PDEgen.transf.nonsymm.G}
    \SourceSymmPre{F}_P&=(\CoeffSymmPre{a}^{0})^{-1}_{PM}\SourceSymmPre{f}^M
    =(\mathbf D\Phi^{-1})_{\Pb}{}^{\Qb} \left(\SourceSymmPre{\Fb}_{\Qb}
    - \Phi_{0,\Qb}
    -\CoeffSymmPre{\bar A}^{l}{}_{\Qb}{}^{\Kb} \Phi_{l,\Kb}\right).
\end{align}

The next important case is when $s\ge \bar s=\sigma$. In this case, the map \eqref{eq:extendedPhi} is a submersion instead of a local diffeomorphism. This means that \eqref{eq:VariableTrafo.abstr} is in general not invertible (not even locally), and, while $\mathbf D{\Phi}(t,x,\NewVar)$ is surjective for all $(t,x,\NewVar)\in (0,T_0]\times\Sigma\times\Omega$, it is in general not injective and its kernel $\ker\mathbf D\Phi$ is an $s-\bar s$-dimensional subspace of $\Rbb^s$. For each $(t,x,\NewVar)\in (0,T_0]\times\Sigma\times\Omega$, let $\pi^\perp$ be the projection from $\Rbb^s$ onto the subset $\ker\mathbf D\Phi$ of $\Rbb^s$, and define $\pi$ by
\begin{equation}
    \pi=\mathrm{id}-\pi^{\perp}.
\end{equation}
The map $\pi$ is the projection from $\Rbb^s$ onto the $\bar s$-dimensional direct complement subspace of $\ker\mathbf D\Phi$ in $\Rbb^s$. Suppose now there is additionally a smooth map $\mathbf F:(0,T_0]\times\Sigma\times\Omega\rightarrow\Rbb^{s-\bar s}$ such that $\ker\mathbf D \mathbf F$ is an $\bar s$-dimensional subspace of $\Rbb^s$ and
\begin{equation}
    \label{eq:kernelproperty}
    \ker\mathbf D\Phi\oplus \ker\mathbf D\mathbf F=\Rbb^s
\end{equation}
at each $(t,x,u)\in (0,T_0]\times\Sigma\times\Omega$. It follows from the implicit function theorem that given a smooth local map $\Ub(t,x)$ with values in $\bar\Omega$, the combined system
\begin{align}
    \label{eq:VariableTrafo.abstr.ext.1}
    \Ub(t,x)&=\Phi(t,x,U(t,x)),\\ 
    \label{eq:VariableTrafo.abstr.ext.2}
    0&=\mathbf F(t,x,U(t,x)),
\end{align}
has a unique smooth local solution $U(t,x)$ with values in $\Omega$. The map $\mathbf F$ therefore resolves the degeneracy of \eqref{eq:VariableTrafo.abstr} in the submersion case; it, in a sense, adds the missing information to the variable transformation ``to make it invertible''.  We call \eqref{eq:VariableTrafo.abstr.ext.2} the \emph{consistency condition}. 

Now, given the consistency condition \eqref{eq:VariableTrafo.abstr.ext.2}, we suppose that the original variable $\Ub$ locally satisfies \eqref{eq:PDEgen}. The same calculations as before yield that the new variable $U$ locally satisfies a PDE system of the same form as before \eqref{eq:PDEgen.transf} with \eqref{eq:transfcoeff.1} -- \eqref{eq:transfcoeff.3}. The main problem implied by the non-triviality of the kernel $\ker\mathbf D\Phi$ is, however, that this PDE is in general degenerate. In particular, the matrix in front of the time derivative is
\[(\mathbf D\Phi)^T \CoeffSymmPre{\bar{\tt a}}^{0}\mathbf D\Phi
=\pi^T(\mathbf D\Phi)^T \CoeffSymmPre{\bar{\tt a}}^{0}\mathbf D\Phi \pi;\]
cf.\ \eqref{eq:transfcoeff.1}. Assuming that $\CoeffSymmPre{\bar{\tt a}}^{0}$ is positive definite, the matrix here is not when viewed as a bilinear form on $\Rbb^s$. We conclude that while the original PDE system for $\Ub$ has a well-posed initial value problem, it is not obvious whether the PDE system implied for $U$ does. The goal is now to demonstrate that one can break the degeneracy of the new PDE system thanks to the consistency condition (essentially by adding $\pi^\perp$-terms to the new PDE system as discussed below) so that:
\begin{enumerate}
    \item The resulting new PDE is a closed system for $U$ which is symmetrisable whenever $\CoeffSymmPre{\bar{\tt a}}^{0}(t,x,\Phi(t,x,u))$ is symmetric and positive definite -- crucially, irrespectively of whether $U$ satisfies the consistency condition \eqref{eq:VariableTrafo.abstr.ext.2} or not.
    \item The resulting PDE system enforces the consistency condition \eqref{eq:VariableTrafo.abstr.ext.2} in the sense that the solution $U$ satisfies the consistency condition everywhere $U$ solves the PDE as long as the initial data for $U$ satisfy the consistency condition. We say that the \emph{consistency condition is preserved}.
    \item The resulting system is \emph{equivalent} to \eqref{eq:PDEgen} assuming appropriate choices of the subsets $\Omega$ and $\bar\Omega$ in the following sense. Any classical local solution $\Ub$ of \eqref{eq:PDEgen} yields $U$ by locally inverting \eqref{eq:VariableTrafo.abstr.ext.1} -- \eqref{eq:VariableTrafo.abstr.ext.2}, and $U$ is a local classical solution of the resulting new PDE system. Conversely, any classical local solution $U$ of the resulting new PDE system launched from initial data satisfying the consistency condition \eqref{eq:VariableTrafo.abstr.ext.2} yields, by the variable transformation, a local classical solution $\Ub$ of the original PDE \eqref{eq:PDEgen}.
\end{enumerate}
There are many ways one can approach these goals and consequentially, there are many resulting new PDE systems which achieve these goals.  The point is, however, that two different choices for the resulting new PDE system have the same ``physical solutions'' (that is, solutions which satisfy the consistency condition), but may have different ``unphysical solutions'' (that is, solutions which violate the consistency condition). Let us consider the following symmetric hyperbolic homogeneous system imposed for the function $\mathbf F(t,x,U(t,x))$ of the schematic form
\begin{equation}
    \label{eq:consistencyconservation}
    M \partial_t (\mathbf F)+\tilde M^l\Dc_l(\mathbf F)=\kappa\mathbf F,
\end{equation}
where $M$, $\tilde M^l$ and $\kappa$ are some given smooth maps defined on $(0,T_0]\times\Sigma\times\tilde\Omega$ which are $(s-\bar s)\times (s-\bar s)$-matrix-valued, where $\tilde\Omega$ is a non-empty open neighbourhood of $0$ in $\Rbb^{s-\bar s}$. Suppose also that $M$ and $\tilde M$ are symmetric matrices and that $M$ is positive definite on $(0,T_0]\times\Sigma\times\{0\}$. Clearly, this implies that $\mathbf F$, considered as a classical solution of \eqref{eq:consistencyconservation}, vanishes identically if its initial data imposed at $t=t_0$ vanish. Given any classical solution $\mathbf F$ of \eqref{eq:consistencyconservation} and assuming that $\mathbf F$ is a map of the form $\mathbf F=\mathbf F(t,x,U(t,x))$, the chain rule yields
\begin{equation}
    \label{eq:consistencyconservation.2}
    M \mathbf D\mathbf F \pi^\perp \partial_t U + \tilde M^l \mathbf D\mathbf F\pi^\perp \Dc_l U
    =\kappa\mathbf F
    -M \mathbf F_0-\tilde M^l \mathbf F_l
\end{equation}
using notation analogous to \eqref{eq:derivativetransmatr} and exploiting \eqref{eq:kernelproperty}. This system can be symmetrised by multiplying it from the left with $(\pi^\perp)^T (\mathbf D\mathbf F)^T$. The sum of this and the PDE obtained for $U$ above now turns out to be a \emph{non-degenerate} symmetrisable system for $U$ of the same form as \eqref{eq:PDEgen.transf}, but with coefficients
\begin{align}
    \label{eq:transfcoeff.1.ext}
    \CoeffSymmPre{a}^{0}=&\pi^T(\mathbf D\Phi)^T\CoeffSymmPre{\bar{\tt a}}^{0} \mathbf D\Phi \pi
        +(\pi^\perp)^T (\mathbf D\mathbf F)^TM \mathbf D\mathbf F\pi^\perp,\\
    \label{eq:transfcoeff.2.ext}
    \CoeffSymmPre{a}^{l}=&\pi^T(\mathbf D\Phi)^T\CoeffSymmPre{\bar{\tt a}}^{l} \mathbf D\Phi \pi
        +(\pi^\perp)^T (\mathbf D\mathbf F)^T \tilde M^l \mathbf D\mathbf F\pi^\perp,\\
    \label{eq:transfcoeff.3.ext}
    \SourceSymmPre{f}^M=&\pi^T(\mathbf D\Phi)^T\SourceSymmPre{\bar{\tt f}}
    -\pi^T(\mathbf D\Phi)^T\CoeffSymmPre{\bar{\tt a}}^{0} \Phi_{0}
    -\pi^T(\mathbf D\Phi)^T\CoeffSymmPre{\bar{\tt a}}^{l} \Phi_{l}\\
    &+(\pi^\perp)^T (\mathbf D\mathbf F)^T\kappa\mathbf F
    -(\pi^\perp)^T (\mathbf D\mathbf F)^T M \mathbf F_0
    -(\pi^\perp)^T (\mathbf D\mathbf F)^T\tilde M^l \mathbf F_l.\notag
\end{align}
The system \eqref{eq:PDEgen.transf} with these coefficients satisfies the properties listed above.

The coefficient matrices of this PDE system are block diagonal with respect to the kernel decomposition \eqref{eq:kernelproperty} of $\Rbb^s$. They may however not be block diagonal when the system is represented with respect to a particular local basis, which, in some application, may be a disadvantage. We now explore some further freedom which we possess due to the existence of a non-trivial kernel of $\mathbf D\Phi$, especially, to introduce blocks that, while they are off-diagonal with respect to the kernel decomposition, may lead to a more advantageous form in a given local basis.  We do not attempt to do this in any generality, but only consider one particular case where we attempt to introduce an off-diagonal component to \eqref{eq:transfcoeff.1.ext}. This can be achieved, for example, by adding an expression of the form
\begin{equation}
    \label{eq:aksdjaksdj}
    \pi^T a^{\pi\pi^\perp} \mathbf D\mathbf F \pi^\perp \partial_t U 
    =-\pi^T a^{\pi\pi^\perp} \mathbf F_0
\end{equation}
to \eqref{eq:PDEgen.transf} with \eqref{eq:transfcoeff.1.ext} -- \eqref{eq:transfcoeff.3.ext},
where $a^{\pi\pi^\perp}$ is any suitable matrix-valued function. Notice that \eqref{eq:aksdjaksdj} is just the $\pi^T$-projection of the special case of \eqref{eq:consistencyconservation.2} given by $M=a^{\pi\pi^\perp}$, $\tilde M^l=0$ and $\kappa=0$. Adding \eqref{eq:aksdjaksdj} to \eqref{eq:PDEgen.transf} with \eqref{eq:transfcoeff.1.ext} -- \eqref{eq:transfcoeff.3.ext} only affects the matrix $\CoeffSymmPre{a}^{0}$ in the principal part which then takes the form:
\[ \pi^T(\mathbf D\Phi)^T\CoeffSymmPre{\bar{\tt a}}^{0} \mathbf D\Phi \pi
        +(\pi^\perp)^T (\mathbf D\mathbf F)^TM \mathbf D\mathbf F\pi^\perp
        +\pi^T a^{\pi\pi^\perp} \mathbf D\mathbf F \pi^\perp;
\]
cf.\ \eqref{eq:transfcoeff.1.ext}.
Next we attempt to symmetrise this matrix 
by multiplying it from the left with a matrix of the form
\begin{equation} 
    \label{eq:abstractsymmetriser}
    B= \pi^T \pi^T
    +(\pi^\perp)^T (\mathbf D\mathbf F)^T B^{\pi^\perp\pi} \pi^T
    +(\pi^\perp)^T B^{\pi^\perp\pi^\perp} (\pi^\perp)^T.
\end{equation}
A straightforward calculation yields
\begin{gather*} 
    \pi^T(\mathbf D\Phi)^T\CoeffSymmPre{\bar{\tt a}}^{0} \mathbf D\Phi \pi
    +(\pi^\perp)^T \left((\mathbf D\mathbf F)^T B^{\pi^\perp\pi}\pi^T a^{\pi\pi^\perp} 
    +B^{\pi^\perp\pi^\perp}(\pi^\perp)^T (\mathbf D\mathbf F)^TM \right)\mathbf D\mathbf F\pi^\perp\\
    +\pi^T a^{\pi\pi^\perp} \mathbf D\mathbf F \pi^\perp
    +(\pi^\perp)^T (\mathbf D\mathbf F)^T B^{\pi^\perp\pi}\pi^T(\mathbf D\Phi)^T\CoeffSymmPre{\bar{\tt a}}^{0} \mathbf D\Phi \pi.
\end{gather*}
A necessary condition for symmetry is
\begin{equation}
    \label{eq:transformedabstracta0}
    (a^{\pi\pi^\perp})^T=B^{\pi^\perp\pi}\pi^T(\mathbf D\Phi)^T\CoeffSymmPre{\bar{\tt a}}^{0} \mathbf D\Phi.
\end{equation}
The positive definiteness of $\CoeffSymmPre{\bar{\tt a}}^{0}$ implies that this equation uniquely determines $B^{\pi^\perp\pi}$ (up to contributions that do not contribute to the matrix $B$ in \eqref{eq:abstractsymmetriser}). However, it is in general not clear how to use the remaining condition that $\left((\mathbf D\mathbf F)^T B^{\pi^\perp\pi}\pi^T a^{\pi\pi^\perp}+B^{\pi^\perp\pi^\perp}(\pi^\perp)^T (\mathbf D\mathbf F)^TM \right)\mathbf D\mathbf F$ is symmetric to determine $B^{\pi^\perp\pi}$ and $B^{\pi^\perp\pi^\perp}$, except in the special case that $\ker\mathbf D\mathbf \Phi$ is one-dimensional, that is $s-\bar s=1$. In this case, that matrix is a scalar (and therefore automatically symmetric). Given any $B^{\pi^\perp\pi}$ and $a^{\pi\pi^\perp}$, all we then need to is to require that  $B^{\pi^\perp\pi^\perp}$ is a sufficiently positive scalar in order to guarantee that the total matrix in \eqref{eq:transformedabstracta0} is symmetric and positive definite.

Given this property of \eqref{eq:transformedabstracta0} in the special case that $\ker\mathbf D\mathbf \Phi$ is one-dimensional now, let us next consider the symmetry of \eqref{eq:transfcoeff.2.ext} when this is multiplied from the left with the same matrix $B$:
\begin{equation} 
    \label{eq:sdf93fa}
    \pi^T(\mathbf D\Phi)^T\CoeffSymmPre{\bar{\tt a}}^{l} \mathbf D\Phi \pi
    +(\pi^\perp)^T (\mathbf D\mathbf F)^TB^{\pi^\perp\pi^\perp}(\pi^\perp)^T (\mathbf D\mathbf F)^T \tilde M^l \mathbf D\mathbf F\pi^\perp
    +(\pi^\perp)^T (\mathbf D\mathbf F)^T B^{\pi^\perp\pi}\pi^T(\mathbf D\Phi)^T\CoeffSymmPre{\bar{\tt a}}^{l} \mathbf D\Phi \pi.
\end{equation}
The matrix $\CoeffSymmPre{\bar{\tt a}}^{l}$ is symmetric. Hence, we can symmetrise the matrix in \eqref{eq:sdf93fa} by adding a suitable $\pi^T$-$\pi^\perp$-part. The ``correct'' matrix to add can again be constructed from a special case of \eqref{eq:consistencyconservation.2}: 
\begin{equation}
     \pi^T(\mathbf D\Phi)^T\CoeffSymmPre{\bar{\tt a}}^{l} \mathbf D\Phi \pi (B^{\pi^\perp\pi})^T \mathbf D\mathbf F \pi^\perp \partial_l U 
    =- \pi^T(\mathbf D\Phi)^T\CoeffSymmPre{\bar{\tt a}}^{l}\mathbf F_l.
\end{equation}
In the special situation where $\ker\mathbf D\mathbf \Phi$ is one-dimensional, we have therefore constructed a fully symmetric hyperbolic PDE system. It is easy to check that this PDE system satisfies all the properties listed above.

\bibliographystyle{amsplain}
\bibliography{bibliography}
\end{document}